\newcommand{\InFullVer}[1]{#1}
\newcommand{\InShortVer}[1]{}
\newcommand{\InFullVer}[1]{}
\newcommand{\InShortVer}[1]{#1}
\newcommand{\InLongSubmitVer}[1]{}
\newcommand{\InLongSubmitVer}[1]{#1}
\newcommand{\brc}[1]{\left\{ {#1} \right\}}
\newcommand{\MakeBig}{\rule[-.2cm]{0cm}{0.4cm}}
\newcommand{\si}[1]{#1}
\renewcommand{\th}{th\xspace}
\newcommand{\atgen}{\symbol{'100}}
\newcommand{\SarielThanks}[1]{%
   \thanks{%
      Department of Computer Science; %
      University of Illinois; %
      201 N. Goodwin Avenue; %
      Urbana, IL, %
      61801, USA; %
      {\tt \si{sariel}\atgen{}\si{illinois.edu}}; %
      {\tt \url{http://sarielhp.org}.} #1 } }
\definecolor{blue25}{rgb}{0.0,0,0.85}
\newcommand{\emphic}[2]{%
     \textcolor{blue25}{%
         \textbf{\emph{#1}}}%
         \index{#2}}
\newcommand{\ceil}[1]{\left\lceil {#1} \right\rceil}
\newcommand{\emphi}[1]{\emphic{#1}{#1}}
\newcommand{\cardin}[1]{\left| {#1} \right|}
\newtheorem{theorem}{Theorem}[section]
\newtheorem{lemma}[theorem]{Lemma}%
\theoremstyle{remark}%
\newtheorem{defn}[theorem]{Definition}
\newtheorem*{remark:unnumbered}[theorem]{Remark}%
\theoremstyle{nonumberplain}
\newtheorem{proof}{Proof:}
\newcommand{\HLinkShort}[2]{\hyperref[#2]{#1\ref*{#2}}}
\newcommand{\HLink}[2]{\hyperref[#2]{#1~\ref*{#2}}}
\newcommand{\HLinkPage}[2]{\hyperref[#2]{#1~\ref*{#2}%
      $_\text{p\pageref{#2}}$}}
\newcommand{\HLinkPageOnly}[1]{\hyperref[#1]{Page~\refpage*{#1}%
      $_\text{p\pageref{#1}}$}}
\newcommand{\HLinkSuffix}[3]{\hyperref[#2]{#1\ref*{#2}{#3}}}
\newcommand{\HLinkPageSuffix}[3]{\hyperref[#2]{#1\ref*{#2}%
      #3$_\text{p\pageref{#2}}$}}
\newcommand{\figlab}[1]{\label{fig:#1}}
\newcommand{\figref}[1]{\HLink{Figure}{fig:#1}}
\newcommand{\figrefpage}[1]{\HLinkPage{Figure}{fig:#1}}
\newcommand{\seclab}[1]{\label{sec:#1}}
\newcommand{\secref}[1]{\HLink{Section}{sec:#1}}
\newcommand{\lemlab}[1]{\label{lemma:#1}}
\newcommand{\lemref}[1]{\HLink{Lemma}{lemma:#1}}%
\newcommand{\thmlab}[1]{{\label{theo:#1}}}
\newcommand{\thmref}[1]{\HLink{Theorem}{theo:#1}}
\providecommand{\eqlab}[1]{}%
\renewcommand{\eqlab}[1]{\label{equation:#1}}
\newcommand{\Eqrefpage}[1]{\HLinkPageSuffix{Eq.~(}{equation:#1}{)}}
\newcommand{\permut}[1]{\left\langle {#1} \right\rangle}%
\newcommand{\pth}[1]{\mleft({#1}\mright)}
\newcommand{\pbrc}[2][\!\!]{#1\left[ {#2} \MakeBig \right]}
\newcommand{\Tree}{\mathcal{T}}%
\newcommand{\metricC}{d}
\newcommand{\dist}[2]{\metricC\pth{#1, #2}}
\newcommand{\Graph}{{G}}
\newcommand{\Vertices}{V}
\newcommand{\Edges}{E}
\newcommand{\EdgesX}[1]{E\pth{#1}}
\newcommand{\eps}{{\varepsilon}}%
\newcommand{\net}{\mathcal{N}}
\newcommand{\Snet}{\mathcal{S}}
\newcommand{\DiameterX}[1]{\mathrm{diam}\pth{#1}}
\newcommand{\Diameter}{\Delta}
\newcommand{\Spread}{\Phi}
\newcommand{\Mtr}{\mathcal{M}}
\newcommand{\Event}{\EuScript{E}}
\newcommand{\distG}[2]{{d}_\Graph \pth{ #1, #2}}
\newcommand{\distC}[1]{\mathrm{\ell}_{#1}}
\newcommand{\etal}{\textit{et~al.}\xspace}
\newcommand{\Boruvka}{Bor\r{u}vka\xspace}
\newcommand{\SpreadX}[1]{\mathrm{spread}\pth{#1}}
\newcommand{\Otilde}{\widetilde{O}}
\newcommand{\ropt}{r_{\mathrm{opt}}}
\newcommand{\Set}[2]{\left\{ #1 \;\middle\vert\; #2 \right\}}
\newcommand{\DGraph}[1]{\mathcal{D}_{#1}}
\newcommand{\bdX}[1]{\partial{#1}}
\newcommand{\remove}[1]{}
\newcommand{\PntSet}{P}
\renewcommand{\Re}{\mathbb{R}}
\newcommand{\IntRange}[1]{\left\llbracket #1 \right\rrbracket}
\newcommand{\ShortPairs}[1]{P_{\leq {#1}}}
\newcommand{\SetPageNumberStyle}{}
   \renewcommand{\SetPageNumberStyle}{%
      \pagestyle{fancy}%
      \thispagestyle{fancy}%
      \renewcommand{\headrulewidth}{0pt}
      \fancyhf{}%
      \cfoot{Page \thepage{} of the full version}%
   }%
\begin{document}

\title{Approximate Greedy Clustering and Distance Selection for Graph
   Metrics}

\author{%
   David Eppstein%
   \thanks{%
      Computer Science Dept., Univ. of California, Irvine;
      \si{eppstein}\atgen{}\si{uci.edu};
      {\tt\url{http://www.ics.uci.edu/\string~eppstein}}.  Work
      supported in part by the National Science Foundation under
      grants 0830403 and 1217322, and by the Office of Naval Research
      under \si{MURI} grant N00014-08-1-1015.}~~~~~~~~~~%
   \and%
   Sariel Har-Peled%
   \SarielThanks{%
      Work on this paper was partially supported by a NSF AF awards
      CCF-0915984 and CCF-1217462.}%
   \and%
   Anastasios Sidiropoulos\thanks{Dept.\ of Computer Science and
      Engineering and Dept.\ of Mathematics, The Ohio State
      University, Columbus, OH 43210; {\tt sidiropoulos.1@osu.edu};
      {\tt\url{http://sidiropoulos.org}}. Supported in part by David
      and Lucille Packard Fellowship, NSF AF award CCF-0915984, and
      NSF grants CCF-0915519 and CCF-1423230.}  }

\date{\today}

\SetPageNumberStyle{}
\maketitle
\SetPageNumberStyle{}

\InShortVer{%
   \setcounter{page}{0} 
   \thispagestyle{empty}%
}

\begin{abstract}
    In this paper, we consider two important problems defined on
    finite metric spaces, and provide efficient new algorithms and
    approximation schemes for these problems on inputs given as graph
    shortest path metrics or high-dimensional Euclidean metrics.  The
    first of these problems is the greedy permutation (or
    farthest-first traversal) of a finite metric space: a permutation
    of the points of the space in which each point is as far as
    possible from all previous points. We describe randomized
    algorithms to find $(1+\eps)$-approximate greedy permutations of
    any graph with $n$ vertices and $m$ edges in expected time
    $O\pth{\eps^{-1}(m+n)\log n\log(n/\eps)}$, and to find
    $(1+\eps)$-approximate greedy permutations of points in
    high-dimensional Euclidean spaces in expected time
    $O(\eps^{-2} n^{1+1/(1+\eps)^2 + o(1)})$.  Additionally we
    describe a deterministic algorithm to find exact greedy
    permutations of any graph with $n$ vertices and treewidth $O(1)$
    in worst-case time $O(n^{3/2}\log^{O(1)} n)$.  The second of the
    two problems we consider is distance selection: given
    $k \in \IntRange{ \binom{n}{2} }$, we are interested in computing
    the $k$\th smallest distance in the given metric space.  We show
    that for planar graph metrics one can approximate this distance,
    up to a constant factor, in near linear time.
\end{abstract}

\InShortVer{\newpage}


\section{Introduction}

In this paper we are interested in several important algorithmic
problems on finite metric spaces, including the construction of greedy
permutations, the problem of selecting the $k$\th distance among all
pairs of points in the space, and the problem of counting the number
of points in a metric ball. These problems have known polynomial time
algorithms (for instance, the $k$\th distance may be found by applying
a selection algorithm to the coefficients of the distance matrix);
however, we are interested in algorithms that scale well to large data
sets, so we seek algorithms that take subquadratic time (substantially
smaller than the time to list all distances). To achieve this, we
require the metric space to be defined \emph{implicitly}, for instance
as the distances in a sparse weighted graph or as the distances among
points in a Euclidean metric space. Despite the increased difficulty
of working with implicit metrics, we show that the problems we study
can be solved efficiently.

\subsection{Greedy permutation}
\seclab{def} In the first sections of this paper we are interested in
an ordering problem on metric spaces: the construction of greedy
permutations. We solve this problem exactly and approximately, for the
shortest path metrics of sparse weighted graphs and for
high-dimensional Euclidean spaces.

\begin{figure}[t]
    \centering\includegraphics[height=1.25in]{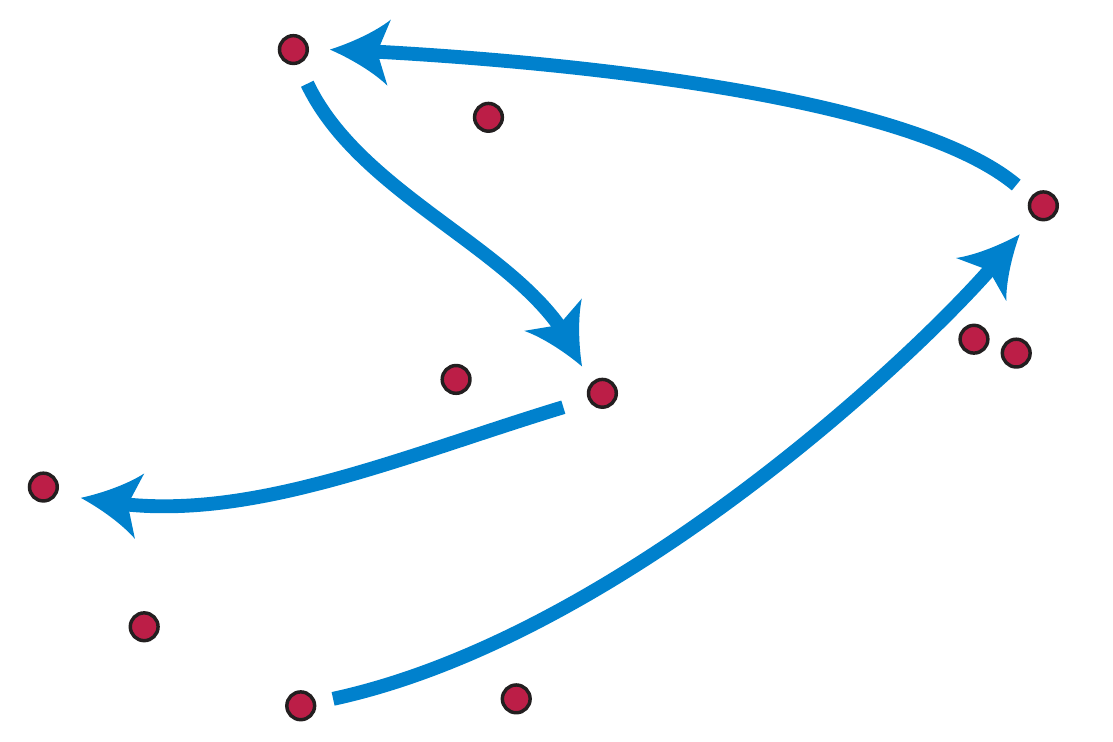}%
    \qquad%
    \qquad%
    \qquad%
    \includegraphics[height=1.5in]{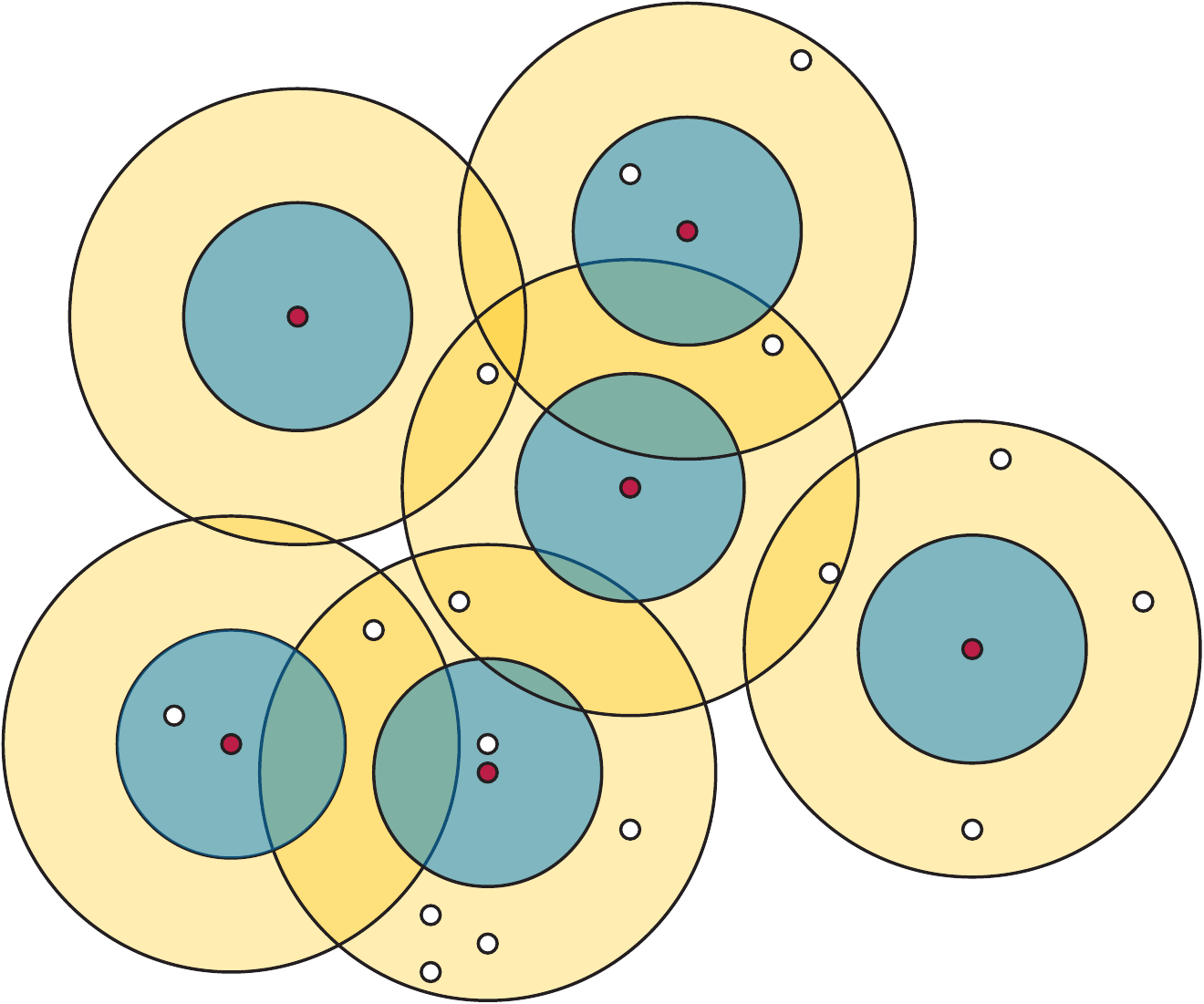}%
    \caption{Left: The first five points of a greedy permutation. Each
       point is as far as possible from all previously chosen
       points. Right: The six red points at the centers of the disks
       form an $r$-net of the set $V$ of red and white points, where
       $r$ is the radius of the large yellow disks. The darker disks
       of radius $r/2$ are disjoint from each other, and the disks of
       radius $r$ together cover all of~$V$.}
    \figlab{defs}
\end{figure}

\newcommand{\XFrame}[2]{%
   \begin{minipage}{0.24\linewidth}%
       \begin{minipage}{0.99\linewidth}%
           \fbox{\includegraphics[page=#1,width=0.97\linewidth]{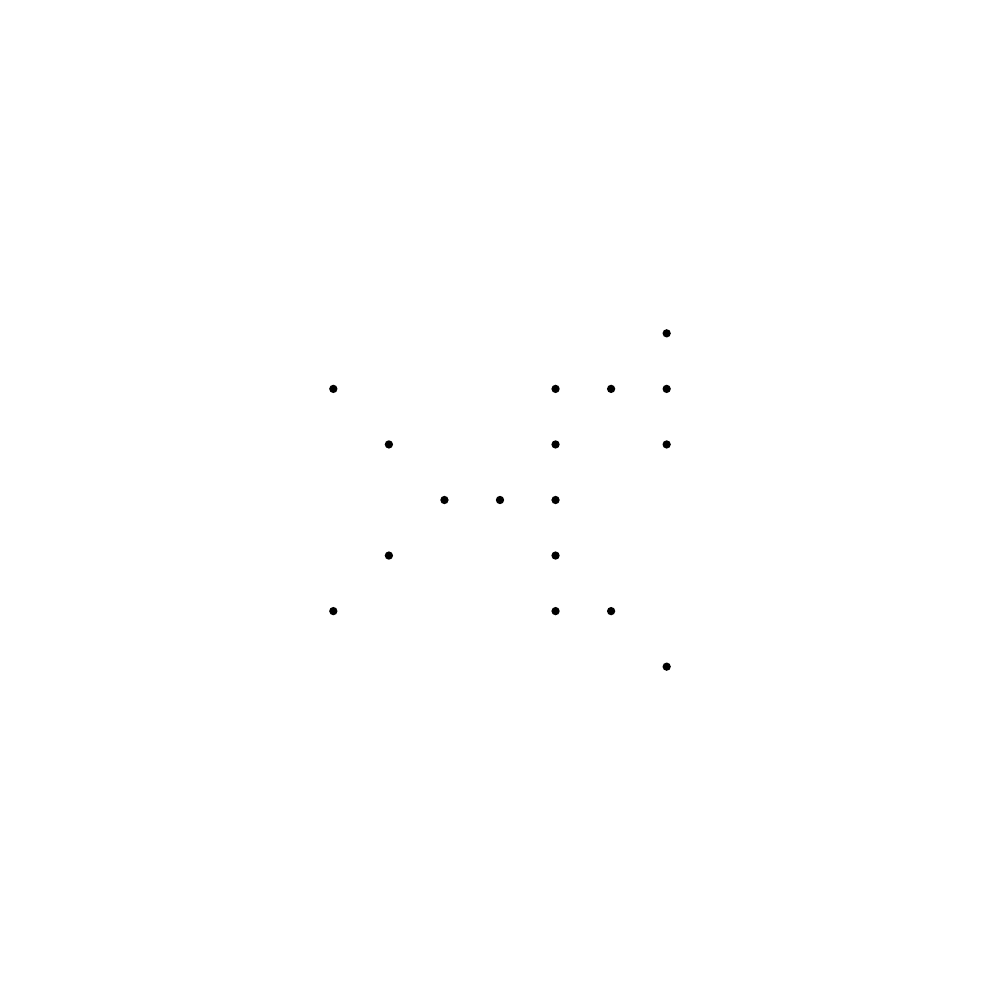}}
       \end{minipage}\\[-0.6cm]%
       $~$\;\;\;\;{#2}
   \end{minipage}
}

\begin{figure}[t]
    \XFrame{1}{(A)}%
    \XFrame{2}{(B)}%
    \XFrame{3}{(C)}%
    \XFrame{4}{(D)}%
    
    ~\\[-0.28cm]
    
    
    \XFrame{5}{(E)}%
    \XFrame{6}{(F)}%
    \XFrame{7}{(G)}
    
    \caption{(A) A point set. (B)--(G) The different representations
       of this point set as a union of balls, as provided by different
       prefixes of the greedy permutation. As this demonstrates, one
       can think about a prefix of the greedy permutation as a partial
       representation of the point set that keeps improving as the
       prefix used gets longer.  }
    \figlab{multi:resolution}
\end{figure}

A permutation $\Pi = \permut{\pi_1, \pi_2, \ldots}$ of the vertices of
a metric space $(V, \metricC)$ is a \emphi{greedy permutation} (also
called a \emph{farthest-first traversal} or \emph{farthest point
   sampling}) if each vertex $\pi_i$ is the farthest in $V$ from the
set $\Pi_{i-1} = \brc{\pi_1, \ldots, \pi_{i-1}}$ of preceding vertices
(\figref{defs}, left).  Greedy permutations were introduced by
Rosenkrantz \etal{}~\cite{rsl-ashtsp-77} for the ``farthest
insertion'' traveling salesman heuristic, and used by
Gonzalez~\cite{g-cmmid-85} to 2-approximate the $k$-center.  Different
prefixes of the greedy permutation provide different multi-resolution
clusterings of the input point set; see \figrefpage{multi:resolution}.

Greedy permutations are closely related to another concept for finite
metric spaces, $r$-nets.  An \emphi{$r$-net} for the metric space
$(V, \metricC)$ and the numerical parameter $r$ is a subset $\net$ of
the points of $V$ such that no two of the points of $\net$ are within
distance $r$ of each other, and such that every point of $V$ is within
distance $r$ of a point of $\net$.  Equivalently, the closed
$r/2$-balls centered at the points of $\net$ are disjoint, and the
closed $r$-balls around the same points cover all of $V$
(\figref{defs}, right). Each prefix of a greedy permutation is an
$r$-net, for $r$ equal to the minimum distance between points in the
prefix, and for every $r$ an $r$-net may be obtained as a prefix of a
greedy permutation%
\footnote{The notion of nets is closely related to congruent disc
   packing, which was studied in the end of the 19\th century by Thue
   (see \cite[Chapter 3]{pa-cg-95}). It is however natural to assume
   that the concept is much older, as it is related to numerical
   integration and discrepancy \cite{m-gd-99}.}.

Greedy permutations may be computed for metric spaces in $O(n^2)$
time, and for graphs in the same time as all pairs shortest paths, by
a naive algorithm \InFullVer{(\secref{naive})} that maintains the
distances of all points from the selected points. The only previous
improvement on the naive algorithm, by Har-Peled and
Mendel~\cite{hm-fcnld-06} defines a concept of approximation for
greedy permutations that we will also use. They showed that
\emphi{$(1+\eps)$-greedy permutations} can be computed in
$O(n \log n)$ time in metric spaces with constant doubling dimension;
these are permutations $\Pi = \permut{\pi_1, \pi_2, \ldots}$ for which
there exists a sequence of numbers $r_1 \geq r_2 \geq \cdots$ such
that
\begin{compactenum}[\qquad(A)]
    \item the maximum distance of a point of $V$ from $\Pi_i$ is in
    the range $\pbrc[]{r_i, \, (1+\eps)r_i}$, and
    \item the distance between every two points $u,v \in \Pi_i$ is at
    least $r_i$.
\end{compactenum}
In this paper we give approximation schemes for metric spaces defined
by sparse graphs, and high-dimensional Euclidean spaces, neither of
which have constant doubling dimension.  Greedy permutations for graph
distances were previously mentioned by Gu \etal \cite{gjg-kacdmd-11},
in connection with an application in molecular dynamics, but rejected
by them because the naive algorithm was too slow.

One reason for interest in greedy or $(1+\eps)$-greedy permutations is
that, in a single structure, they approximate an optimal clustering
for all possible resolutions. Specifically, the prefix of the first
$k$ vertices in such a permutation, provides, for any $k$, a
$2(1+\eps)$-approximation to the optimal $k$-center clustering of
$\Graph$.  \InFullVer{ (See \lemref{g:to:k:center} for an easy proof
   of this.)}  The $k$-center problem may be 2-approximated in $O(kn)$
time by computing the first $k$ vertices of a greedy permutation, and
is NP-hard to approximate to a ratio better
than~2~\cite{g-cmmid-85}.\InFullVer{\footnote{Another 2-approximation
      for the $k$-center by Hochbaum and Shmoys~\cite{hs-mor-85} is
      often erroneously credited as being the origin of the
      farthest-first traversal method, but actually uses a different
      algorithm.}} {} For points in Euclidean spaces of bounded
dimension a linear-time 2-approximation is
known~\cite{h-cm-04,hr-npltaedp-13}.  Thorup \cite{t-qmcflsg-05}
provided a fast $k$-center approximation for graph shortest path
metrics with a single choice of~$k$. The $k$-center problem may be
solved exactly on trees and cactus graphs in $O(n\log n)$
time~\cite{fj-fkppc-83}. Voevodski \etal \cite{vbrtx-ecldi-10} use an
algorithm closely related to greedy permutation to approximate a
different clustering problem, $k$-medians.

Intuitively, every prefix of a greedy permutation is as informative as
possible about the whole set, so greedy permutations form a natural
ordering in which to stream large data sets. Because of these
properties, greedy permutations have many additional applications,
including color quantization~\cite{x-ciqmmid-97}, progressive image
sampling~\cite{elpz-fpspis-97}, selecting landmarks of probabilistic
roadmaps for motion planning~\cite{mab-aca-98}, point cloud
simplification~\cite{md-npcsa-03}, halftone mask
generation~\cite{smr-dfpmih-04}, hierarchical
clustering~\cite{dl-pghc-05}, detecting isometries between surface
meshes~\cite{lf-mvsc-09}, novelty detection and time management for
autonomous robot exploration~\cite{ggd-aaueotm-12}, industrial fault
detection~\cite{aye-inpsf-12}, and range queries seeking diverse sets
of points in query regions~\cite{aaimv-dnpp-13}.

\subsection{Distance selection and approximate range %
   counting}
\seclab{dist:count:select}

The \emph{distance selection} problem, in computational geometry, has
as input a set of points in $\Re^d$; the output is the $k$\th smallest
distance defined by a pair of points of $\PntSet$.  It is believed
that such exact distance selection requires $\Omega\pth{n^{4/3}}$ time
in the worst case \cite{e-rcsgp-95}, even in the plane (in higher
dimensions the bound deteriorates). Recently, Har-Peled and Raichel
\cite{hr-npltaedp-13} provided an algorithm that
$(1+\eps)$-approximates this distance in $O(n/\eps^d)$ time.

We are interested in solving the problem for the finite metric case.
Specifically, consider a shortest path metric defined over a graph
$\Graph$ with $n$ vertices and $m$ edges.  Given
$k \in \IntRange{\binom{n}{2}} = \brc{1, \ldots, \binom{n}{2}}$, we
would like to compute the $k$\th smallest distance in this shortest
path metric. This problem was studied for trees \cite{mtzc-aklpt-81},
where Frederickson and Johnson \cite{fj-fkppc-83} provided a beautiful
algorithm that works by using tree separators, and selection in sorted
matrices \cite{fj-gsrsm-84}.

The ``dual'' problem to distance selection is \emph{distance
   counting}. Here, given a distance $r$, the task is to count the
number of pairs of points of $\PntSet$ that are of distance $\leq r$.
While the problems are essentially equivalent in the exact case,
approximate distance counting seems to be significantly easier than
selection.

Throughout this paper, $\Graph = (\Vertices, \Edges)$ will denote an
undirected graph with $n$ vertices and $m$ edges, with non-negative
edge weights that obey the triangle inequality.  The shortest path
distances in $\Graph$ induce a metric $\metricC$. Specifically, for
any $u,v \in \Vertices$, let $\distG{u}{v}$ denotes the shortest path
between $u$ and $v$ in $\Graph$.
Given a graph $\Graph=(\Vertices, \Edges)$, and a query distance $r$,
the \emphi{set of $r$-short} pairs is
\begin{align}
    \eqlab{short:pairs}%
    \ShortPairs{r}%
    =%
    \Set{ \Bigl. \brc{u,v} \subseteq \Vertices}{u \neq v \text{ and }
       \distG{u}{v} \leq r}.
\end{align}
In the \emphi{distance counting} problem, the task is to compute (or
approximate) $\cardin{P_{\leq r}}$. In the \emphi{distance selection}
problem, given $k \in \IntRange{\binom{n}{2}}$, the task is to compute
(or approximate) the smallest $r$ such that
$\cardin{\ShortPairs{r}} \geq k$.

Distance counting is easy to approximate using known techniques, since
this problem is malleable to random sampling, see \cite{c-adsrh-14}
and references therein. However, approximate distance selection is
significantly \emph{harder}, as random sampling can not be used in
this case -- indeed, trying to use approximate distance counting (or
sketches approach as in \cite{c-adsrh-14}), may result in an
arbitrarily bad approximation to the $k$\th distance, if the
$(k-1)$\th and $(k+1)$\th distance are significantly smaller and
larger, respectively, than the $k$\th distance (or similar sparse
scenarios).

\subsection{New results}

\paragraph{Greedy permutation for sparse graphs.} %
In \secref{g:sparse}, we show that an $(1+\eps)$-greedy permutation
can be found for graphs with $n$ vertices and $m$ edges in time
$O \pth{ \eps^{-1} m \log n \log (n/\eps) } = \Otilde
\pth{m}$\footnote{The
   $\Otilde$ notation hides logarithmic factors in $n$ and polynomial
   terms in $1/\eps$. We assume throughout the paper that
   $n = O(m)$.}.
        
\paragraph{Approximate greedy permutation for high dimensional
   Euclidean space.} %

In \secref{g:euclidean}, we show that an approximate greedy
permutation can be computed for a set of points in high-dimensional
Euclidean space.  The algorithm runs in subquadratic time and has
polynomial dependency on the dimension.  Our approximations are based
on finding $r$-nets (or in the Euclidean case approximate $r$-nets)
for a geometric sequence of values of~$r$%

In an earlier paper \cite{his-eshd-13}, the authors showed that for
high dimensional point sets one can get a sparse spanner.  Applying
the above algorithm for sparse graphs to this spanner yields a greedy
permutation, but with significantly weaker bounds, as the stretch in
the constructed spanner is at lease $2$.

\paragraph{Exact greedy permutation for bounded tree-width.} %
In \secref{e:g:treewidth}, \InShortVer{with the details in the full
   version of this paper, } we show how to find an exact greedy
permutation for graphs of bounded treewidth, in time
$\Otilde \pth{n^{3/2}}$, by partitioning the input graph into small
subgraphs separated from the rest of the graph by $O(1)$ vertices, and
by using an orthogonal range searching data structure in each subgraph
to find the farthest vertex from the already-selected vertices.

\paragraph{Distance selection in planar graphs.}
In \secref{planar:counting}, we show how to $O(1)$-approximate the
$k$\th distance in a planar graph $\Graph$. Specifically, given $k$,
the algorithm computes in near linear time, a number $\alpha$, such
that the $k$\th shortest distance in $\Graph$ is at least $\alpha$,
and at most $O(\alpha)$.  This algorithm uses a planar separator and
distance oracles in an interesting way to count distances.

\InShortVer{%
   \medskip\noindent For space reasons we defer many proofs to the
   full version of this paper.%
}

\section{Approximate greedy permutation on a %
   sparse graph}
\seclab{g:sparse}%
\seclab{naive}

We are interested in approximating the greedy permutation for a graph
$\Graph$.  Among other motivations, this provides a good approximation
for $k$-center clustering:

\begin{lemma}
    \lemlab{g:to:k:center}%
    If $\Pi$ is a $(1+\eps)$-greedy permutation of
    $\Mtr = (\Vertices,\metricC)$, then, for all $k$, $\Pi_k$ provides
    a $2(1+\eps)$-approximation to the optimal $k$-center clustering
    and minimax diameter $k$-clustering of $\Mtr$.
\end{lemma}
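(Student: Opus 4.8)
The plan is to adapt the classical analysis of Gonzalez's farthest-first heuristic to the approximate setting. Fix $k$, and let $r_k$ be the threshold from the definition of a $(1+\eps)$-greedy permutation associated with the prefix $\Pi_k$. If $r_k = 0$, then property~(A) forces every vertex to lie within distance $0$ of $\Pi_k$, so $\Pi_k$ is already an exact (radius- and diameter-zero) $k$-clustering and there is nothing to prove; so assume $r_k > 0$.

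First I would record the easy direction. By property~(A), every vertex of $\Vertices$ lies within distance $(1+\eps) r_k$ of some vertex of $\Pi_k$, so taking $\Pi_k$ as the set of centers yields a $k$-center solution of radius at most $(1+\eps) r_k$. Assigning each vertex to its nearest vertex in $\Pi_k$ gives a $k$-clustering in which any two vertices of a common cluster are each within $(1+\eps) r_k$ of the shared center, hence within $2(1+\eps) r_k$ of each other by the triangle inequality; so this clustering has minimax diameter at most $2(1+\eps) r_k$.

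Next I would lower-bound both optima. By property~(A) there is a vertex $p$ with $\dist{p}{u} \ge r_k$ for every $u \in \Pi_k$, and by property~(B) every two vertices of $\Pi_k$ are at distance at least $r_k$; hence $\Pi_k \cup \brc{p}$ is a set of $k+1$ vertices that are pairwise at distance at least $r_k$. Any partition of $\Vertices$ into $k$ clusters must, by pigeonhole, place two of these $k+1$ vertices in one cluster, whose diameter is therefore at least $r_k$; thus the optimal minimax-diameter $k$-clustering has value at least $r_k$. In particular, the clustering induced by an optimal $k$-center solution of radius $\ropt$ has some cluster of diameter at least $r_k$, and that diameter is at most $2\ropt$, so $\ropt \ge r_k/2$. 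Combining with the previous paragraph, $\Pi_k$ is within a factor $2(1+\eps)$ of the optimum for both objectives.

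I do not expect a genuine obstacle here; the only point that needs a little care is extracting the $k+1$ pairwise-far vertices in the approximate model. Unlike the exact greedy permutation, $\pi_{k+1}$ need not realize the farthest distance from $\Pi_k$, so one should argue from the mere \emph{existence} of the witness vertex $p$ guaranteed by property~(A), together with the separation guarantee~(B), rather than appealing to $\pi_{k+1}$ directly. Everything else is the standard pigeonhole-plus-triangle-inequality bookkeeping.
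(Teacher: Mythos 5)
Your proposal is correct and follows essentially the same argument as the paper: use property (A) for the upper bound via the covering radius $(1+\eps)r_k$, then augment $\Pi_k$ with a witness vertex at distance at least $r_k$ from it (guaranteed by the lower end of property (A)), appeal to property (B) for pairwise separation within $\Pi_k$, and apply pigeonhole to lower-bound the optimal radius and diameter by $r_k/2$ and $r_k$ respectively. The only differences are cosmetic: you make the $r_k=0$ edge case and the $\ropt\ge r_k/2$ derivation explicit, and you correctly note (the paper just calls $v$ ``the farthest point'') that one should rely on existence of the witness rather than any property of $\pi_{k+1}$ itself.
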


   \begin{proof}
       By Property (A) of such a permutation (see \secref{def}), all
       points of $\Vertices$ can be covered by balls of radius
       $(1+\eps)r_k$ centered at $\pi_1, \ldots, \pi_k$; these balls
       have diameter $\leq 2(1+\eps)r_k$. Let
       $S = \Pi_{k} \cup \brc{v}$, where $v$ is the farthest point in
       $\Vertices$ from $\Pi_k$. By the definition of $r_i$ and by
       Property (B) of these permutations, every two points in $S$
       have distance at least $r_i$, so no $k$ clusters of radius
       smaller than $r_i/2$ or diameter smaller than $r_i$ can cover
       the $k+1$ points in~$S$.
   \end{proof}
   
   A naive algorithm for computing the greedy permutation maintains
   for each vertex $v$ its distance $\distC{v}$ to the set of centers
   picked so far, and uses these distances as priorities in a
   max-heap, which it uses to select each successive center, using
   Dijkstra's algorithm to update the distances after each center is
   picked. There are $n$ instantiations of Dijkstra's algorithm,
   taking time $O( n(m+n \log n))$.

   To improve performance, we may avoid adding a vertex $v$ to the
   min-heap used within Dijkstra's algorithm unless its tentative
   distance is smaller than $\distC{v}$, preventing the expansion of
   vertices for which the distance from $v_i$ is no smaller than
   $\distC{v}$. This idea does not immediately improve the worst-case
   running time of the algorithm but will be important in our
   approximation algorithm.%

\subsection{Computing an $r$-net in a sparse graph}
\seclab{net}

We compute an $r$-net in a sparse graph using a variant of Dijkstra's
algorithm with the sequence of starting vertices chosen in a random
permutation. A similar idea was used by Mendel and Schwob
\cite{ms-fckrp-09} for a different problem; however, using this method
for our problem involves a more complicated analysis.

Let $\Graph = (\Vertices, \Edges)$ be a weighted graph with $n$
vertices and $m$ edges, let $r >0$, and let $\pi_i$ be the $i$\th
vertex in a random permutation of~$\Vertices$. For each vertex $v$ we
initialize $\delta(v)$ to~$+\infty$. In the $i$\th iteration, we test
whether $\delta(\pi_i) \geq r$, and if so we do the following steps:
\smallskip
\begin{compactenum}[\quad 1.]
    \item Add $\pi_i$ to the resulting net $\net$.
    \item Set $\delta(\pi_i)$ to zero.
    \item Perform Dijkstra's algorithm starting from $\pi_i$, modified
    \InFullVer{as in \secref{naive}}%
    to avoid adding a vertex $u$ to the priority queue unless its
    tentative distance is smaller than the current value of
    $\delta(u)$. When such a vertex $u$ is expanded, we set
    $\delta(u)$ to be its computed distance from $\pi_i$, and relax
    the edges adjacent to $u$ in the graph.
\end{compactenum}
\smallskip
\noindent%
The difference from the algorithm of Mendel and Schwob is that their
algorithm initiates an instance of Dijkstra's algorithm starting from
every vertex $\pi_i$, whereas we do so only when
$\delta(\pi_i)\geq r$.

\begin{lemma}
    The set $\net$ is an $r$-net in $\Graph$.
\end{lemma}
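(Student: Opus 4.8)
The plan is to verify the two defining properties of an $r$-net directly from the behavior of the modified Dijkstra process: (i) every point of $\Vertices$ is within distance $r$ of some point of $\net$, and (ii) no two points of $\net$ are within distance $r$ of each other. The key invariant to establish is that, at every moment of the algorithm, $\delta(v)$ records exactly the shortest-path distance $\distG{v}{\net}$ from $v$ to the net $\net$ accumulated so far (with $\delta(v) = +\infty$ meaning no net point has yet reached $v$), and that this value is monotonically non-increasing over time.

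First I would argue the invariant. When $\pi_i$ is added to $\net$ we set $\delta(\pi_i) = 0$ and then run Dijkstra from $\pi_i$, but pruned so that a vertex $u$ is only inserted/expanded when the tentative distance through $\pi_i$ beats the current $\delta(u)$. The standard correctness argument for Dijkstra's algorithm shows that, ignoring the pruning, the run would compute $\distG{u}{\pi_i}$ for all $u$; the pruning only suppresses relaxations that could not lower $\delta(u)$ below its current value, so after the $i$\th iteration $\delta(u) = \min_{\pi_j \in \net} \distG{u}{\pi_j}$ for the current net $\net$. One subtlety is that a suppressed vertex might have been a useful intermediate vertex on a shortest path to some later vertex; but if $\distG{u}{\pi_i} \ge \delta(u)$, then any vertex $w$ reachable through $u$ satisfies $\distG{w}{\pi_i} \ge \distG{u}{\pi_i} \ge \delta(u) \ge \delta(w)$ already (using the triangle inequality and that $\delta(w)$ was set via some net point at distance $\le \delta(u)$ from... — more carefully, $\delta(w) \le \delta(u) + \distG{u}{w}$ at the time $\delta(u)$ was last set, and $\delta$ only decreases), so pruning loses nothing. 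This monotonicity-plus-correctness invariant is the technical heart and is where I expect to spend the most care.

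Given the invariant, property (i) is immediate: after the algorithm finishes, consider any $v \in \Vertices$. If $v$ itself was added to $\net$, then $\distG{v}{\net} = 0 \le r$. Otherwise $v$ was processed in some iteration $i$ with $\delta(\pi_i) < r$ never having triggered an addition of $v$; but in the iteration where $v = \pi_i$ we tested $\delta(\pi_i) \ge r$ and it failed, so $\delta(v) < r$ at that point, and by monotonicity $\delta(v) < r$ at the end, i.e.\ some net point lies within distance $r$ of $v$. For property (ii), suppose $\pi_i, \pi_j \in \net$ with $i < j$. At the start of iteration $j$ we had $\delta(\pi_j) \ge r$ (that is the condition under which $\pi_j$ was added). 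By the invariant, $\delta(\pi_j) = \distG{\pi_j}{\net}$ for the net as it stood before iteration $j$, which already contains $\pi_i$; hence $\distG{\pi_j}{\pi_i} \ge \delta(\pi_j) \ge r$. Thus any two net points are at distance at least $r$, completing the proof.

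The main obstacle, as noted, is the correctness of the pruned Dijkstra run — specifically showing that declining to expand a vertex whose tentative distance does not improve $\delta$ never causes us to miss a shorter path to some other vertex. This requires combining the triangle inequality on edge weights with the fact that $\delta$ values are themselves realized by genuine shortest paths to net points and only ever decrease; once that is in hand, the net properties follow with essentially no further computation.
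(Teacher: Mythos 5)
Your proof is correct and takes essentially the same approach as the paper: both rest on the invariant that $\delta(v)$ equals the exact distance from $v$ to the current net, with coverage read off from $\delta(v)<r$ at termination and separation from $\delta(\pi_j)\ge r$ at insertion time. The paper justifies the invariant by contradiction along a shortest path to the first offending vertex, which is the same content as your direct observation that pruning a vertex $u$ cannot cost a downstream $w$ an update, since $\delta(w)\le\delta(u)+\distG{u}{w}$ already held.
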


\InFullVer{%
   \begin{proof}
       By the end of the algorithm, each $v \in \Vertices$ has
       $\delta(v) <r$, for $\delta(v)$ is monotonically decreasing,
       and if it were larger than $r$ when $v$ was visited then $v$
       would have been added to the net.
    
       An induction shows that if $\ell = \delta(v)$, for some vertex
       $v$, then the distance of $v$ to the set $\net$ is at most
       $\ell$.  Indeed, for the sake of contradiction, let $j$ be the
       (end of) the first iteration where this claim is false. It must
       be that $\pi_j \in \net$, and it is the nearest vertex in
       $\net$ to $v$. But then, consider the shortest path between
       $\pi_j$ and $v$. The modified Dijkstra must have visited all
       the vertices on this path, thus computing $\delta(v)$ correctly
       at this iteration, which is a contradiction.
    
       Finally, observe that every two points in $\net$ have distance
       $\geq r$. Indeed, when the algorithm handles vertex
       $v\in \net$, its distance from all the vertices currently in
       $\net$ is $\geq r$, implying the claim.
   \end{proof}
}

\begin{lemma}%
    \lemlab{update:delta}%
    Consider an execution of the algorithm, and any vertex
    $v \in \Vertices$. The expected number of times the algorithm
    updates the value of $\delta(v)$ during its execution is
    $O( \log n )$, and more strongly the number of updates is
    $O(\log n)$ with high probability.
\end{lemma}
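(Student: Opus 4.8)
The plan is to analyze the updates to $\delta(v)$ by relating them to the relative order in which vertices appear in the random permutation. Fix a vertex $v$. The value $\delta(v)$ gets updated only when some net vertex $\pi_i$ that has already been selected runs its modified Dijkstra and reaches $v$ with a tentative distance strictly smaller than the current $\delta(v)$. So the number of updates is at most the number of distinct net vertices $\pi_i$ that, at the time they are processed, improve $v$'s recorded distance. The key observation is that whether $\pi_i$ improves $\delta(v)$ depends only on $\distG{\pi_i}{v}$ compared to $\min_{j<i,\ \pi_j\in\net}\distG{\pi_j}{v}$ — i.e., $\pi_i$ causes an update precisely when it is closer to $v$ than every net vertex selected before it.

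First I would sort all of $\Vertices$ by distance from $v$, breaking ties by index: let $u_1, u_2, \ldots, u_n$ be this order, so $\distG{u_1}{v} \le \distG{u_2}{v} \le \cdots$. The crucial claim is: if $\pi_i$ causes an update to $\delta(v)$, then among the set of vertices appearing \emph{at or before position $i$} in the random permutation, $\pi_i$ is the one minimizing $\distG{\cdot}{v}$ (in the tie-broken order). This is because any vertex $w$ with $\distG{w}{v} \le \distG{\pi_i}{v}$ that was processed earlier was either added to $\net$ — in which case it would already have set $\delta(v) \le \distG{w}{v} \le \distG{\pi_i}{v}$, contradicting that $\pi_i$ improves it — or was \emph{not} added to $\net$, which happens only because $\delta(w) < r$ at that time, meaning some earlier net vertex $w'$ is within distance $r$ of $w$, hence within distance $\distG{w}{v} + r$ of $v$; this last case needs a small extra argument, so I'd instead phrase the bound directly: the number of updates is at most the number of indices $i$ such that $\pi_i$ is a \emph{prefix minimum} of the sequence $\distG{\pi_1}{v}, \distG{\pi_2}{v}, \ldots$ in the tie-broken order. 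Each update corresponds to a new prefix minimum, since an update strictly decreases $\delta(v)$ and any net vertex selected before $\pi_i$ either equals or exceeds the running minimum distance.

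Then I would invoke the standard record-value / random-permutation fact: if we reveal a uniformly random permutation of $n$ items one at a time, the expected number of times we see a new minimum (with respect to a fixed total order) is $H_n = \sum_{j=1}^n 1/j = O(\log n)$, since item $u_j$ (the $j$\th smallest) is a record exactly when it precedes all of $u_1, \ldots, u_{j-1}$, which happens with probability $1/j$, and these events are independent. Summing gives the $O(\log n)$ expectation. For the high-probability bound, I would use that the record-indicator events are independent Bernoulli variables with success probabilities $1/j$, so their sum has the same distribution as a sum of independent indicators with mean $H_n = \Theta(\log n)$; a Chernoff bound then yields that the number of records, and hence the number of updates to $\delta(v)$, is $O(\log n)$ with probability $1 - n^{-c}$ for any desired constant $c$.

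The main obstacle is making the reduction "number of $\delta(v)$-updates $\le$ number of prefix minima of the distance sequence" airtight: one must be careful that the modified Dijkstra only writes to $\delta(u)$ when it actually \emph{expands} $u$ with a smaller tentative value, that net vertices are selected exactly when their own $\delta$ is still $\ge r$, and that a net vertex which fails to improve $\delta(v)$ on its Dijkstra run does not get double-counted. Once the monotonicity of $\delta(v)$ and the "only strictly improving expansions write" invariant are pinned down, the prefix-minimum bound follows and the probabilistic part is routine.
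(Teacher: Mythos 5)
Your reduction to prefix minima is not correct, and the gap you flagged and then waved away ("this last case needs a small extra argument, so I'd instead phrase the bound directly") is exactly the crux. The claim "each update corresponds to a new prefix minimum" is false. A net vertex $\pi_i$ can strictly decrease $\delta(v)$ even though some \emph{earlier} vertex $\pi_j$ in the permutation is closer to $v$ than $\pi_i$: this happens whenever $\pi_j$ was skipped because $\delta(\pi_j) < r$ at the time, i.e., some previously selected net point $w$ lies within distance $r$ of $\pi_j$. Then $w$ is at distance at most $\distG{\pi_j}{v} + r$ from $v$, which can still exceed $\distG{\pi_i}{v}$, so $\pi_i$ updates $\delta(v)$ while not being a prefix minimum. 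A concrete instance: vertices $\{v,a,a',b\}$ with $\distG{v}{a}=1$, $\distG{v}{a'}=10$, $\distG{v}{b}=5$, $\distG{a}{a'}=9$, $\distG{a}{b}=4$, $\distG{a'}{b}=10$ (this is a valid metric), $r=10$, permutation $a',a,b,v$. Then $a'$ is added and sets $\delta(v)=10$; $a$ is skipped since $\delta(a)=9<r$; $b$ is added (since $\delta(b)=10\geq r$) and updates $\delta(v)$ to $5$. But $\distG{b}{v}=5 > \distG{a}{v}=1$, so $b$ is not a prefix minimum of $10,1,5,\ldots$, yet it triggers an update. Rephrasing the bound "directly" does not make it true, and the one-line justification ("any net vertex selected before $\pi_i$ either equals or exceeds the running minimum distance") only covers the net-vertex case, not the skipped-vertex case that breaks the argument.

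The paper sidesteps this by not trying to identify updates with records in a \emph{fixed} order. Instead it defines, for each index $i$, the set $S_i$ of vertices $x$ such that (A) $\distG{x}{v} < \distG{v}{\Pi_i}$ and (B) taking $\pi_{i+1}=x$ would change $\delta(v)$; it observes $S_1 \supseteq S_2 \supseteq \cdots$, that an active iteration (one that updates $\delta(v)$) must pick $\pi_{i+1}$ from $S_i$, and that conditioned on this, $\pi_{i+1}$ is uniform over $S_i$, so with probability at least $1/2$ we get $|S_{i+1}| \le |S_i|/2$. After $O(\log n)$ "lucky" active iterations $S_i$ is empty, and a Chernoff bound over the lucky/active indicators gives both the expectation and high-probability statements. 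The reason this works where the prefix-minimum argument does not is that condition (A) uses the distance to the \emph{entire} prefix $\Pi_i$ (including skipped, non-net vertices), not the distance to the current net; so whenever a nearby skipped vertex like $a$ above enters $\Pi_i$, it shrinks $S_i$ and helps the analysis rather than breaking the correspondence. If you want to salvage your approach, you would need a record-style argument over a set that shrinks when non-net vertices appear — which is essentially what the paper's $S_i$ is.
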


\InFullVer{%
   \begin{proof}
       For simplicity of exposition, assume all distances in $\Graph$
       are distinct.  Let $S_i$ be the set of all the vertices
       $x \in \Vertices$, such that the following two properties both
       hold: \smallskip %
       \begin{compactenum}[\qquad(A)]
           \item $\dist{x}{v} < \dist{v}{\Pi_i}$, where
           $\Pi_i = \brc{\pi_1,\ldots, \pi_i}$.
           \item If $\pi_{i+1}=x$ then $\delta(v)$ would change in the
           $(i+1)$\th iteration.
       \end{compactenum}
       \medskip%
       Let $s_i = \cardin{S_i}$.  Observe that
       $S_1 \supseteq S_2 \supseteq \cdots \supseteq S_n$, and
       $\cardin{S_n} = 0$.
    
       In particular, let $\Event_{i+1}$ be the event that $\delta(v)$
       changed in iteration $(i+1)$ -- we will refer to such an
       iteration as being \emphi{active}. If iteration $(i+1)$ is
       active then one of the points of $S_i$ is $\pi_{i+1}$. However,
       $\pi_{i+1}$ has a uniform distribution over the vertices of
       $S_{i}$, and in particular, if $\Event_{i+1}$ happens then
       $s_{i+1} \leq s_i/2$, with probability at least half, and we
       will refer to such an iteration as being \emphi{lucky}. (It is
       possible that $s_{i+1} < s_i$ even if $\Event_{i+1}$ does not
       happen, but this is only to our benefit.)  After $O( \log n)$
       lucky iterations the set $S_i$ is empty, and we are
       done. Clearly, if both the $i$\th and $j$\th iteration are
       active, the events that they are each lucky are independent of
       each other. By the Chernoff inequality, after $c \log n$ active
       iterations, at least $\ceil{ \log_2 n }$ iterations were lucky
       with high probability, implying the claim. Here $c$ is a
       sufficiently large constant.
   \end{proof}

   Interestingly, in the above proof, all we used was the monotonicity
   of the sets $S_1, \ldots, S_n$, and the fact that if $\delta(v)$
   changes in an iteration then the size of the set $S_i$ shrinks by a
   constant factor with good probability in this iteration. This
   implies that there is some flexibility in deciding whether or not
   to initiate Dijkstra's algorithm from each vertex of the
   permutation, without damaging the number of times of the values of
   $\delta(v)$ are updated. We will use this flexibility later on. %
}%

\begin{lemma}%
    \lemlab{net:alg}%
    Given a graph $\Graph = (\Vertices, \Edges)$, with $n$ vertices
    and $m$ edges, the above algorithm computes an $r$-net of $\Graph$
    in $O( (n+m)\log n)$ expected time.
\end{lemma}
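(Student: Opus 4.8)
Correctness is already in hand: the previous lemma shows the algorithm's output $\net$ is an $r$-net, so only the expected running time remains. The plan is a charging argument. Initializing all $\delta(\cdot)$ to $+\infty$ and running the $n$ tests ``$\delta(\pi_i)\ge r$'' take $O(n)$ time in total; every other operation is performed inside one of the modified executions of Dijkstra's algorithm, and the idea is to charge each such operation to an individual vertex expansion and then bound the number of expansions using \lemref{update:delta}.

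Inside the run started from $\pi_i$, each step is attributable to the expansion of some vertex: when a vertex $v$ is expanded we set $\delta(v)$ — so an expansion of $v$ coincides with one $\delta(v)$-update — we perform one delete-min, and we relax the edges incident to $v$, at cost $O(1)$ per edge plus at most one priority-queue insertion or decrease-key; moreover every insertion or decrease-key of a vertex $u$ occurs exactly while some neighbour of $u$ is being expanded, so it is charged there. Thus one expansion of $v$ is responsible for $O(1+\deg(v))$ priority-queue operations and $O(1+\deg(v))$ other work. Now \lemref{update:delta} does the rest: the number of times $v$ is expanded over the whole execution equals the number of $\delta(v)$-updates, whose expectation is $O(\log n)$, so by linearity of expectation the expected total number of expansions is $\sum_{v\in\Vertices}O(\log n)=O(n\log n)$, and the expected number of priority-queue operations together with all other work is $O(n)+\sum_{v\in\Vertices}O(1+\deg(v))\cdot O(\log n)=O((n+m)\log n)$; implementing the priority queue by a heap then gives the claimed $O((n+m)\log n)$ expected running time.

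The one point that needs care is that a single random permutation of $\Vertices$ drives every one of the Dijkstra executions, so the expansion counts of different vertices are highly correlated; but the argument above never uses independence across vertices — only \lemref{update:delta} applied to each vertex separately, together with linearity of expectation for the sum — so this causes no difficulty. (If one prefers a high-probability time bound to an in-expectation one, replace the linearity step by the high-probability half of \lemref{update:delta} and a union bound over the $n$ vertices.) Finally, one should check that the charging is exhaustive: by construction of the modified Dijkstra step, no edge relaxation or priority-queue operation is ever performed except while expanding a vertex or during the $O(n)$ initialization, so every unit of the algorithm's work is accounted for.
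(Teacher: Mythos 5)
Your argument follows the paper's quite closely: both proofs invoke \lemref{update:delta} to bound the number of $\delta$-updates per vertex and then charge all of the Dijkstra work to those updates. Where the paper charges per edge (``a single edge creates $O(\log n)$ decrease-key operations''), you charge per vertex expansion and sum $O(1+\deg(v))$ over $v$; these bookkeepings are equivalent. Your aside that only linearity of expectation is used, not independence across vertices, is a worthwhile clarification that the paper leaves implicit.

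The one place where your proof is imprecise is the final step: ``implementing the priority queue by a heap then gives the claimed $O((n+m)\log n)$ expected running time.'' You have just counted $O((n+m)\log n)$ priority-queue \emph{operations}; if each of those costs $O(\log n)$, as with an ordinary binary heap, the running time becomes $O((n+m)\log^2 n)$, one log factor too large. The paper avoids this by stipulating \emph{Fibonacci} heaps, so that the $O(m\log n)$ decrease-key operations -- which dominate the operation count -- each take $O(1)$ amortized time. You should make that choice of data structure explicit. (Note also that the $O(n\log n)$ delete-mins still cost $O(\log n)$ each even in a Fibonacci heap, a point the paper's terse proof glosses over as well; this does not affect the stated bound once $m=\Omega(n\log n)$, but it is worth being aware of.)
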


   \begin{proof}
       By \lemref{update:delta}, the two $\delta$ values associated
       with the endpoints of an edge get updated $O( \log n)$ times,
       in expectation, during the algorithm's execution. As such, a
       single edge creates $O(\log n)$ decrease-key operations in the
       heap maintained by the algorithm. Each such operation takes
       constant time if we use Fibonacci heaps to implement the
       algorithm.
   \end{proof}%

\subsection{An approximation whose time depends on %
   the spread}

Given a finite metric space $(V, \metricC)$ defined over a set $V$,
its \emphi{spread} is the ratio between the maximum and minimum
distance in the metric; formally, %
   \begin{align*}
       \SpreadX{\metricC}%
       =%
       \max_{{u, v \in V, u \neq v }} \distG{u}{v}%
       /%
       \min_{{u, v \in V, u \neq v }} \distG{u}{v}.
   \end{align*}%

Let graph $\Graph=(\Vertices,\Edges)$ and $\eps > 0$ be given.  Assume
for now that the minimum edge length is $1$, and that the diameter of
$\Graph$ is at most $\Diameter$. Set
$r_i = \Diameter / (1+\eps)^{i-1}$, for
$i=1,\ldots, M= \ceil{\log_{1+\eps} \Delta}$.  We compute a sequence
of nets in a sequence of iterations.  In the first iteration, compute
an $r_1$-net $\net_1$ of $\Graph$, using \lemref{net:alg}. In the
beginning of the $i$\th iteration, for $i > 1$, let
$\Snet_i = \cup_{j < i} \net_i$. Using Dijkstra, mark as \emphi{used}
all vertices within distance $r_i$ of~$\Snet_i$. Compute an $r_i$-net
$\net_i$ in $\Graph$, modifying the algorithm of \lemref{net:alg} by
disallowing used vertices from being considered as net points.

After completing these computations, combine the vertices into a
single permutation in which the vertices of $\net_i$ form the $i$\th
contiguous block. Within this block, the ordering of the vertices of
$\net_i$ is arbitrary. The following is easy to verify, and we omit
the easy proof.

\begin{lemma}
    Let $\net = \net_1 \cup \cdots \cup \net_t$, for an arbitrary
    $t$. Then the distance of every vertex $v \in \Graph$ from $\net$
    is at most $r_t$, and the distance between any pair of vertices of
    the net is at least $r_t$.
\end{lemma}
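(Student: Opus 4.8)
The plan is to prove the two assertions separately, since each follows from a clean monotonicity property of the construction. For the covering statement (distance of every vertex from $\net$ is at most $r_t$), first observe that by the time iteration $t$ completes, every vertex $v$ has been marked as used at some point: either it was marked used at the start of some iteration $i \le t$ because it lay within distance $r_i$ of $\Snet_i$, or it was never marked used before iteration $t$, in which case the net-construction of \lemref{net:alg} applied inside iteration $t$ guarantees (by the $r$-net property of that subroutine, restricted to the unmarked vertices) that $v$ lies within distance $r_t$ of $\net_t \subseteq \net$. In the first case, if $v$ was marked used at the start of iteration $i$, then $v$ is within distance $r_i \le r_t$ of $\Snet_i \subseteq \net$ — wait, that inequality goes the wrong way since $r_i \ge r_t$ for $i \le t$; so I should instead track this more carefully: a vertex marked used at iteration $i$ is within $r_i$ of $\Snet_i$, but I need the bound $r_t$. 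The fix is to note that the relevant claim is really the one proved in the \emph{previous} unnamed lemma's style: do an induction on $t$, using that $\net_1 \cup \cdots \cup \net_{t-1}$ already $r_{t-1}$-covers everything, and then iteration $t$ only needs to re-cover, at the finer scale $r_t$, those vertices that were \emph{not} already within $r_t$ of $\Snet_t$; these are exactly the unmarked vertices, and the net subroutine $r_t$-covers them.

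For the packing statement (any two net vertices are at distance $\ge r_t$), the key point is that the radii $r_i$ are decreasing, so within a single iteration $i$ the subroutine of \lemref{net:alg} guarantees pairwise distance $\ge r_i \ge r_t$ among the points of $\net_i$, and across iterations $i < j \le t$ we use the ``used'' marking: when iteration $j$ begins, every vertex within distance $r_j$ of $\Snet_j = \bigcup_{\ell<j}\net_\ell$ is marked used and therefore forbidden from entering $\net_j$; hence any $u \in \net_j$ has $\dist{u}{w} > r_j \ge r_t$ for every $w \in \net_\ell$ with $\ell < j$. Combining the within-iteration and across-iteration cases gives the bound for every pair.

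I expect the main (minor) obstacle to be getting the covering induction stated cleanly: one has to be careful that ``used'' is defined relative to $\Snet_i$ (the nets from strictly earlier iterations) rather than the net being currently built, and that the net subroutine's covering guarantee applies to the subgraph of unmarked vertices with the correct radius. Once the bookkeeping is set up, both parts are immediate from the $r$-net guarantee of \lemref{net:alg} together with monotonicity of $\langle r_i \rangle$, which is exactly why the paper says the proof is easy and omits it.
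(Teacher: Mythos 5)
The paper omits the proof as ``easy to verify,'' so there is no reference argument, but your proof is correct and supplies what the authors clearly intended: covering follows because every vertex is either marked used at the start of iteration $t$ (hence within $r_t$ of $\Snet_t \subseteq \net$) or is unmarked and therefore $r_t$-covered by $\net_t$ via the subroutine of \lemref{net:alg}, and packing follows from the within-iteration net guarantee plus the fact that any $u\in\net_j$ was unmarked at the start of iteration $j$ and hence at distance $> r_j \ge r_t$ from $\Snet_j = \bigcup_{\ell<j}\net_\ell$. Your momentary reversed inequality ($r_i \le r_t$ for $i\le t$) is correctly self-diagnosed and repaired by observing that the ``used'' marking is recomputed afresh at the start of each iteration, so only the iteration-$t$ marking, with radius $r_t$, is relevant to the covering claim.
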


That is, the net computed by the $t$\th iteration is a ``perfect''
$r_t$ net. In between such blocks, it might be less then perfect. %
   Formally, we have the following (again, we omit the relatively easy
   proof).
   \begin{lemma}
       Let $\pi$ be the permutation computed by the above algorithm,
       and consider the $i$\th vertex $\pi_i$ in this
       permutation. Assume that $\pi_i \in \net_t$. Then we have the
       following guarantees:
       \begin{compactenum}[\qquad (A)]
           \item The distance between any two vertices of
           $\Pi_i = \brc{\pi_1, \ldots, \pi_i}$ is at least $r_t$.
           
           \item The distance of any vertex of $v \in \Vertices$ from
           $\Pi_i$, is at most $r_{t-1} = (1+\eps)r_t$.
       \end{compactenum}
   \end{lemma}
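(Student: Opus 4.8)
The plan is to derive both parts directly from the preceding lemma (the one asserting that $\net_1\cup\cdots\cup\net_j$ is, for every $j$, at once $r_j$-separated and $r_j$-covering), using nothing about the $r$-net construction itself beyond the block structure of $\pi$. The single structural observation I would record first is that, since the vertices of each $\net_j$ occupy the $j$\th contiguous block of $\pi$ and $\pi_i\in\net_t$, the prefix $\Pi_i=\brc{\pi_1,\ldots,\pi_i}$ consists of all of $\net_1,\ldots,\net_{t-1}$ together with a nonempty prefix of $\net_t$; that is,
\[
  \net_1\cup\cdots\cup\net_{t-1}\ \subseteq\ \Pi_i\ \subseteq\ \net_1\cup\cdots\cup\net_t .
\]

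For part (A) I would apply the preceding lemma with parameter $t$: every pair of vertices of $\net_1\cup\cdots\cup\net_t$ is at distance at least $r_t$. Since $\Pi_i$ is a subset of that union by the right-hand inclusion above, the same lower bound holds for every pair of vertices of $\Pi_i$. For part (B), assuming first $t\ge 2$, I would apply the preceding lemma with parameter $t-1$: every vertex $v\in\Vertices$ lies within distance $r_{t-1}$ of $\net_1\cup\cdots\cup\net_{t-1}$, and by the left-hand inclusion this set is contained in $\Pi_i$, so the distance from $v$ to $\Pi_i$ is at most $r_{t-1}$; finally $r_{t-1}=(1+\eps)r_t$ is immediate from the definition $r_j=\Diameter/(1+\eps)^{j-1}$. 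The corner case $t=1$ I would dispatch separately: $\Pi_i$ is nonempty and $\Graph$ has diameter at most $\Diameter=r_1\le(1+\eps)r_1=r_0$, so the covering bound holds trivially there as well.

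I do not expect a real obstacle in this lemma: all of the substantive content — that the concatenation of the first $j$ net-blocks remains a genuine $r_j$-net of $\Graph$ even though the algorithm forbids ``used'' vertices from being chosen — is already carried by the preceding lemma, which we are free to invoke. The only points that warrant a sentence of care are the sandwiching inclusion above (which is really just the statement that a prefix ending inside block $t$ contains every earlier block in its entirety), and the boundary value $t=1$ where the index $t-1=0$ would otherwise be out of range.
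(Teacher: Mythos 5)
Your proof is correct, and since the paper omits the proof of this lemma (``we omit the relatively easy proof''), your derivation from the preceding lemma via the sandwich $\net_1\cup\cdots\cup\net_{t-1}\subseteq\Pi_i\subseteq\net_1\cup\cdots\cup\net_t$ is exactly the argument the authors must have had in mind. The handling of the $t=1$ boundary case is also sound.
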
%

\begin{lemma}%
    \lemlab{greedy_spread}%
    Let $\Graph=(\Vertices,\Edges)$ be a graph, let $\eps > 0$, and
    let $\Spread$ be the spread of $\Graph$.  Then, one can compute a
    $(1+\eps)$-greedy permutation in
    $O\pth{ \eps^{-1} \pth{ n + m } \log n \log \Spread }$ expected
    time.
\end{lemma}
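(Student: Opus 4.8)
The plan is to read this lemma as a wrapper around the construction already spelled out together with the structural lemmas just stated, so that only two things need real attention: that the concatenated permutation is genuinely $(1+\eps)$-greedy, and the running time bound. First I would normalize by scaling all edge weights so that the lightest edge has weight $1$; this changes neither the greedy permutation, nor the spread $\Spread$, nor the notion of being $(1+\eps)$-greedy (just rescale the $r_i$'s), so it is without loss of generality. After this scaling every path has length $\ge 1$ while the lightest edge is itself a length-$1$ path, so the minimum interpoint distance equals $1$ and hence $\Spread = \Diameter$, the diameter of $\Graph$. A single run of Dijkstra from an arbitrary vertex returns $\Diameter$ up to a factor of $2$ in time $O((n+m)\log n)$, so I may assume the value fed into the algorithm is within a constant factor of the true diameter. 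With $r_i = \Diameter/(1+\eps)^{i-1}$, once $r_i$ drops below $1$ any $r_i$-net must contain every not-yet-used vertex, so the algorithm halts after $O(\log_{1+\eps}\Diameter) = O(\eps^{-1}\log\Spread)$ phases --- here I use $\ln(1+\eps) \ge \eps/2$ for $\eps \le 1$, which is the regime of interest, and a constant-factor over-estimate of $\Diameter$ only inflates this count by $O(\eps^{-1})$ --- and every vertex ends up in some $\net_i$.

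Correctness, namely that the concatenated permutation $\pi$ is $(1+\eps)$-greedy, is exactly what the two preceding (stated) lemmas give: if $\pi_i \in \net_t$, then any two vertices of $\Pi_i$ are at distance $\ge r_t$, and every vertex of $\Vertices$ is within $r_{t-1} = (1+\eps)r_t$ of $\Pi_i$. When $\pi_i$ is not the last vertex of its block, the next net point $\pi_{i+1}$ is at distance $\ge r_t$ from all of $\Pi_i$ (it was not marked used, hence is $> r_t$ from $\Snet_t$, and $\ge r_t$ from the rest of $\net_t$), so the farthest-point distance from $\Pi_i$ lies in $\pbrc{r_t, (1+\eps)r_t}$ and one takes $r_i = r_t$ in the definition. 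For the (at most one per block) block-closing positions one instead takes $r_i$ to be the actual farthest-point distance from $\Pi_i$, which the ``perfect net'' lemma bounds by $r_t$ and which is at least the radius of the next nonempty block; hence the resulting sequence $r_1 \ge r_2 \ge \cdots$ still witnesses the definition. These are the ``easy to verify'' facts asserted above.

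For the running time, each of the $M = O(\eps^{-1}\log\Spread)$ phases does two things: a multi-source Dijkstra from $\Snet_i$ marking every vertex within distance $r_i$ of $\Snet_i$ as used, in $O((n+m)\log n)$ time with a Fibonacci heap, and then one call to the $r_i$-net procedure of \lemref{net:alg}, altered only by forbidding used vertices from being chosen as net points. Forbidding certain vertices from launching a Dijkstra search is precisely the flexibility remarked on after \lemref{update:delta}: the nested sets $S_1 \supseteq S_2 \supseteq \cdots$ and the constant-factor shrinkage on an active iteration are unaffected, so each $\delta(v)$ is still updated $O(\log n)$ times in expectation and the net computation still runs in $O((n+m)\log n)$ expected time. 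Thus a phase costs $O((n+m)\log n)$ in expectation, and by linearity of expectation the whole algorithm runs in $O(M(n+m)\log n) = O\pth{\eps^{-1}(n+m)\log n\log\Spread}$ expected time. I do not expect a genuine obstacle here: all the real work already lives in \lemref{net:alg}, \lemref{update:delta}, and the two structural lemmas; the only points needing care are the normalization that turns $\log\Diameter$ into $\log\Spread$, and the observation that restricting the net subroutine to un-used candidates preserves its $O(\log n)$-updates-per-vertex analysis.
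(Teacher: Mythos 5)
Your proposal is correct and follows essentially the same route as the paper's own proof: scale so the minimum interpoint distance (which equals the minimum edge weight) is~$1$ and hence $\Spread=\Diameter$, obtain a $2$-approximate diameter with a single Dijkstra, and run $M=O(\eps^{-1}\log\Spread)$ iterations of the modified net algorithm at expected cost $O((n+m)\log n)$ each. The paper dispatches the $(1+\eps)$-greedy correctness and the preservation of the $O(\log n)$-update bound under the ``mark as used'' modification as ``straightforward''; you fill these in explicitly (appealing to the flexibility remark after \lemref{update:delta} and to the two structural lemmas), but the underlying argument is identical.
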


\begin{proof}
    A $2$-approximation to the diameter of $\Graph$ can be found by
    running Dijkstra's algorithm from an arbitrary starting
    vertex. The minimum distance in $\Graph$ is achieved by an edge
    (since the edge weights are positive), so it can be computed in
    linear time. After scaling, we can use the above algorithm, with
    $M = O\pth{\log_{1+\eps} \Spread} = O\pth{ \eps^{-1} \log
       \Spread}$
    iterations, each using a modified version of the algorithm of
    \secref{net}. It is straightforward to modify the analysis to show
    that the each such iteration takes $O( (n + m) \log n )$ time in
    expectation.
\end{proof}

\subsection{Eliminating the dependence on the spread}
\seclab{elim-spread}

An arbitrary graph $\Graph$ may not have small enough spread to apply
the previous algorithm directly. In this case, following by-now
standard methods for eliminating the dependence on spread (see Section
4 of~\cite{ms-fckrp-09}), we simulate the algorithm more efficiently,
using a value of $\eps$ smaller by a constant factor to make up for
some additional approximation in our simulation.

Consider an iteration of the above algorithm for distance $r_i$. Edges
longer than $n r_i$ can be ignored or (conceptually) deleted, as they
cannot be used by the $r$-net algorithm of \secref{net}. Similarly,
edges of length $O\pth{\eps r_i /n^2}$ can be collapsed and treated as
having length zero. Thus, an edge $e$ of length $\ell$ is active when
$r_i$ is in the interval $\pbrc{\frac{c\eps \ell}{n^2}, \ell n}$,
which happens for
$O\pth{\log_{1+\eps}(n^3/\eps)}=O\pth{\eps^{-1}\log(n/\eps)}$
iterations.  Let $m_i$ be the number of active edges in the $i$\th
iteration, and let $\Graph_i$ be the resulting graph, in which all the
edges of lengths $\leq \eps r_i/n^2$ are contracted (the resulting
super vertex is identified with one of the original vertices), and all
the edges of length $>r_i n$ are removed. Any singleton vertex in this
graph is not relevant for computing the permutation in this
resolution, and it can be ignored. The running time of
\lemref{net:alg} on $\Graph_i$ is $O\pth{ m_i \log m_i }$.

When the algorithm moves to the next iteration, it needs to introduce
into $\Graph_i$ all the new edges that become active. Using a careful
implementation, this can be done in $O( 1 )$ amortized time, for any
newly introduced edge. Similarly, edges that become inactive should be
deleted.  Of course, if there are no active edges, the algorithm can
skip directly to the next resolution. This can be easily done, by
putting the edges into a heap, sorted by their length, and adding the
edges and removing them as the algorithm progresses down the
resolutions.

The overall expected running time of this algorithm is
$O\pth{\sum_i m_i \log m_i + m \log m}$. However, since every edge is
active in $O\pth{ \eps^{-1} \log(n/\eps)}$ iterations, we get that the
expected running time is $O\pth{ \eps^{-1} m \log n\log(n/\eps) }$. We
thus get the following.

\begin{theorem}
    \thmlab{greedy-graph}%
    Given a non-negatively weighted graph
    $\Graph = (\Vertices, \Edges)$, with $n$ vertices and $m$ edges,
    and a parameter $\eps > 0$, one can compute a
    $(1+\eps)$-approximate greedy permutation for $\Graph$ in expected
    time $O\pth{ \eps^{-1} m \log n\log(n/\eps) }$.
\end{theorem}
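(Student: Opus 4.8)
The plan is to run exactly the net-sequence algorithm that underlies \lemref{greedy_spread}, but to execute each resolution on a small ``active'' subgraph rather than on all of $\Graph$, so that the running time is governed by the total number of (resolution, active-edge) incidences instead of by $m\log\Spread$. First, a $2$-approximation to $\Diameter$ comes from one run of Dijkstra and the minimum distance is the shortest edge, so after scaling we may take $r_i = \Diameter/(1+\eps)^{i-1}$. The crucial observation is that at resolution $r_i$ an edge of length $\ell$ is irrelevant unless $c\eps\ell/n^2 \le r_i \le \ell n$ for a suitable constant $c$: on one hand, an edge longer than $n r_i$ never lies on a shortest path of length below the covering radius $(1+\eps)r_i\le n r_i$, so deleting all such edges preserves every distance that matters at this resolution (and hence both the packing and the covering guarantee of $\net_i$ — a pair in different components of the pruned graph has $\Graph$-distance $> n r_i$, and a pair at $\Graph$-distance $\le n r_i$ keeps that exact distance); on the other hand, a shortest path has at most $n-1$ edges, so contracting every edge of length $\le \eps r_i/n^2$ changes each relevant distance by at most $\eps r_i/n$, which is absorbed if the whole construction is run with a parameter $\eps' = \Theta(\eps)$ chosen slightly smaller to pay for this loss. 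This active window spans $O\pth{\log_{1+\eps}(n^3/\eps)} = O\pth{\eps^{-1}\log(n/\eps)}$ consecutive values of~$i$.

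Concretely, at resolution $r_i$ let $\Graph_i$ be $\Graph$ with all edges longer than $n r_i$ removed and all edges shorter than $\eps r_i/n^2$ contracted, each super-vertex being identified with one of its constituent vertices so that net points translate back to vertices of $\Graph$; isolated vertices are discarded, as they are irrelevant at this resolution. We run the used-marking Dijkstra from $\Snet_i$ and then the $r_i$-net algorithm of \lemref{net:alg} on $\Graph_i$; by that lemma this costs $O(m_i\log m_i) = O(m_i\log n)$ expected time, where $m_i$ is the number of active edges. To avoid rebuilding $\Graph_i$ from scratch each time, keep all edges in a heap keyed by length: as $i$ increases (so $r_i$ decreases) each edge in turn first becomes active — it drops below the deletion threshold $n r_i$ — and later becomes inactive — it drops below the contraction threshold $\eps r_i/n^2$; each transition fires once per edge, an activation is a single edge insertion, a deactivation is a union–find merge, and both are $O(1)$ amortized. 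When the set of active edges is empty, the heap directly reports the next resolution at which some edge becomes active, so no time is spent on empty resolutions.

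Summing over the nonempty resolutions, the expected running time is $O\pth{\sum_i m_i\log n + m\log m}$, and since each edge is active in only $O\pth{\eps^{-1}\log(n/\eps)}$ resolutions we have $\sum_i m_i = O\pth{\eps^{-1} m\log(n/\eps)}$, which yields the claimed $O\pth{\eps^{-1} m\log n\log(n/\eps)}$ bound. Correctness is inherited verbatim from \lemref{greedy_spread}: concatenating the blocks $\net_1,\net_2,\dots$ produces a permutation satisfying Properties (A) and (B) for the sequence $r_i$ — hence a $(1+\eps)$-greedy permutation once the $\eps'\mapsto\eps$ rescaling is undone. I expect the main obstacle to be the bookkeeping around the active subgraph: rigorously checking that deleting long edges and contracting short edges perturbs neither the packing nor the covering property of each $\net_i$ beyond the budgeted $\Theta(\eps)$ slack, and implementing the incremental maintenance of $\Graph_i$ — in particular the contractions, via union–find with a consistently chosen representative — in $O(1)$ amortized time per edge. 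The probabilistic heart of the argument needs no change, since \lemref{update:delta} and \lemref{net:alg} already bound the per-edge work within each resolution.
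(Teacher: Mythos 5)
Your proposal matches the paper's proof essentially step for step: same deletion threshold $nr_i$ and contraction threshold $\Theta(\eps r_i/n^2)$, same $O(\eps^{-1}\log(n/\eps))$ active window per edge, same heap-by-length maintenance of the active subgraph $\Graph_i$, and the same summation $O\pth{\sum_i m_i\log m_i + m\log m}$. You supply slightly more explicit justification for why pruning preserves the packing/covering guarantees (the paper waves at this with ``careful implementation''), but the approach is the same.
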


\InFullVer{

   \subsection{$k$-center clustering for bounded spread %
      with integer weights}

   Our greedy permutation algorithm for sparse graphs leads to a fast
   $(2+\epsilon)$-approximation to the $k$-center problem for graph
   metrics. In this case, it is possible to eliminate the dependence
   on $\epsilon$, giving a $2$-approximation (best possible as
   achieving a smaller approximation ratio is NP-hard).

\begin{theorem}%
    \thmlab{k:center:intger:spread}%
    Let $\Graph$ be a graph with $n$ vertices and $m$ edges, with
    positive integer weights on the edges, and spread $\Spread$. Given
    $k$, one can compute a $2$-approximation to the optimal $k$-center
    clustering of $\Graph$, in $O\pth{ m \log n \log \Spread}$ time
    both in expectation and with high probability.
\end{theorem}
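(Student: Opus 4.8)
\medskip\noindent
The plan is to reduce the problem to a short sequence of $r$-net computations via \lemref{net:alg}, combined with a binary search over the scale, using the integrality of the edge weights to sharpen the $2(1+\eps)$ guarantee of \lemref{g:to:k:center} into a clean factor~$2$. Write $\ropt$ for the optimal $k$-center radius of $\Graph$; since $\ropt$ is realized by a shortest-path distance it is a non-negative integer, and if $k\ge n$ it is $0$ and the problem is trivial, so assume $k<n$.

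First I record the two one-sided relations between nets and $\ropt$. If an $r$-net $\net$ has $\cardin{\net}\le k$, then the closed $r$-balls around $\net$ cover $\Vertices$, so assigning each vertex to its nearest net point is a $k$-clustering of radius $\le r$, and $\ropt\le r$. If instead $\cardin{\net}>k$, then $\net$ contains $k+1$ points pairwise at distance $\ge r$; in any $k$-clustering two of them lie in a common ball of radius $\rho\le \ropt$, so $r\le 2\rho\le 2\ropt$. I run the net algorithm of \secref{net} in its \emph{strict} form, adding $\pi_i$ to $\net$ only when $\delta(\pi_i)>r$; this leaves the coverage guarantee and the analyses of \lemref{update:delta} and \lemref{net:alg} intact (the remark after \lemref{update:delta} explicitly permits such a change), but now net points are pairwise at distance $>r$, hence, by integrality, at distance $\ge r+1$. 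Carrying this through the second relation gives the sharpened statement: if $\cardin{\net_{r}}>k$ then $r+1\le 2\ropt$, i.e.\ $\ropt\ge (r+1)/2$.

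Now the binary search. Compute a $2$-approximate diameter $\widehat{\Diameter}$ by one run of Dijkstra; set $\mathit{lo}=0$ (the strict $0$-net is all of $\Vertices$, of size $n>k$) and $\mathit{hi}=2\widehat{\Diameter}$ (all distances are $\le\widehat{\Diameter}\cdot 2$, so the strict $\mathit{hi}$-net has at most one point, and $1\le k$). Maintain the invariant $\cardin{\net_{\mathit{lo}}}>k\ge\cardin{\net_{\mathit{hi}}}$; at each step probe $r=\lfloor(\mathit{lo}+\mathit{hi})/2\rfloor$, compute a strict $r$-net, and move $\mathit{lo}$ or $\mathit{hi}$ to $r$ accordingly; stop when $\mathit{hi}=\mathit{lo}+1$. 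Output the clustering induced by $\net_{\mathit{hi}}$, of radius $\le\mathit{hi}$. Since $\cardin{\net_{\mathit{lo}}}>k$ and $\mathit{lo}=\mathit{hi}-1$, the sharpened relation gives $\ropt\ge(\mathit{lo}+1)/2=\mathit{hi}/2$, so $\mathit{hi}\le 2\ropt$ and the output radius is at most $2\ropt$. This bound, hence correctness, holds for \emph{every} outcome of the random choices; only the running time is random. For the time, compute each strict $r$-net on the subgraph $\Graph_{\le r}$ spanned by edges of weight $\le r$, with the Dijkstra runs pruned once a tentative label exceeds $r$: a path of length $\le r$ uses no heavier edge, so this alters neither which vertices are covered within $r$ nor which enter the net, and the net points stay pairwise $>r$ apart in $\Graph$ itself. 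An edge of weight $w$ is touched only at probed scales $r\ge w$, and since the probed scales lie in $[0,2\widehat{\Diameter}]$ and the weights are positive integers, each edge is touched in only $O(\log\Spread)$ of the net computations. Combining with the per-edge $O(\log n)$ bound on decrease-key operations from \lemref{net:alg} and the high-probability form of \lemref{update:delta}, the total is $O(m\log n\log\Spread)$, both in expectation and with high probability.

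The step I expect to be the main obstacle is pinning the approximation factor at exactly $2$ rather than $2(1+\eps)$: the generic guarantee of \lemref{g:to:k:center} is not good enough, an $\eps$ small enough to absorb the loss would destroy the running time, so the argument must instead go through the sharper ``$k+1$ far points'' reasoning via \emph{strict} nets and exploit integrality to kill the surviving additive term. That in turn forces the binary search to range over integer scales rather than a geometric grid, which makes the active-edge amortization — together with the observation that positive integer weights confine the meaningful scales to $O(\log\Spread)$ octaves — the quantitative crux of the time bound.
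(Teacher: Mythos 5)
Your proof is correct and follows the paper's strategy — binary search over the radius, using the randomized net construction of \lemref{net:alg} as the decision procedure — but tightens a step that the paper leaves slightly informal: rather than setting $r=2x$ and asserting that $x\geq\ropt$ forces an $r$-net of size $\leq k$ (which is fragile when some optimal cluster has diameter exactly $r$, and the binary search over even $r=2x$ then picks up an additive slack), you pass to a strict net (add $\pi_i$ only when $\delta(\pi_i)>r$), use integrality to lift the pairwise gap to $\geq r+1$, and binary-search over $r$ directly, which pins the output radius at $\mathit{hi}\le 2\ropt$ with no slack. One small remark: the restriction to $\Graph_{\le r}$ and the accompanying active-edge amortization are not needed for the stated time bound, since the binary search already makes only $O(\log\Spread)$ probes, each of which runs in $O(m\log n)$ time by \lemref{net:alg} and \lemref{update:delta}; that amortization machinery is what is actually needed for \thmref{greedy-graph}, not here.
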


\begin{proof}
    Using Dijkstra's algorithm starting from an arbitrary vertex in
    $\Graph$, compute, in $O(m + n \log n)$ time, a number $\Delta$,
    such that
    $\DiameterX{\Graph} \leq \Delta \leq 2\DiameterX{\Graph}$.  We
    next perform a binary search for the radius $\ropt$ of the optimal
    $k$-center clustering of $\Graph$, in the range
    $1, \ldots, \Delta$.
    
    Given a candidate radius $x$, let $r=2x$ and compute an $r$-net of
    $\Graph$ using the algorithm of \lemref{net:alg}. It is easy to
    verify that if $x \geq \ropt$, then the resulting net $\net$ has
    at most $k$ vertices in it. Indeed, consider all the vertices in
    $\Graph$ assigned to a single cluster $C$ in the optimal
    $k$-center clustering, and observe that
    $\DiameterX{C} \leq 2\ropt \leq r$. Therefore, at most one vertex
    of $C$ may belong to any $r$-net, so every $r$-net for this value
    of $r$ has at most $k$ vertices.  On the other hand, if $x$ is too
    small then the resulting $r$-net has more then $k$ vertices.
    
    Thus, using the $r$-net procedure of \lemref{net:alg} as a decider
    in a binary search yields the desired approximation algorithm.
\end{proof}%
}

\section{Approximate greedy permutation on %
   Euclidean metrics}
\seclab{g:euclidean}

\subsection{Approximate nets}

\begin{lemma}[Johnson--Lindenstrauss lemma \cite{jl-elmih-84}]%
    \lemlab{JL}%
    For any $\eps>0$, every set of $n$ points in Euclidean space
    admits an embedding into
    $(\Re^{O(\log n / \eps^2)}, \|\cdot \|_2)$, with distortion
    $1+\eps$.
\end{lemma}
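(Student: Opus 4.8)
The plan is to prove this classical fact by the \emph{probabilistic method}, using a random linear map; we may assume $0 < \eps < 1$, since a larger $\eps$ is a strictly weaker requirement. Let $\PntSet = \brc{p_1,\dots,p_n} \subseteq \Re^d$ be the given set. Fix a target dimension $k = \ceil{C\eps^{-2}\ln n}$ for a large absolute constant $C$ to be determined, let $A$ be a $k \times d$ matrix with independent standard-Gaussian entries, and define $f \colon \Re^d \to \Re^k$ by $f(x) = \frac{1}{\sqrt k}\,A x$. Since $f$ is linear, $f(p_i) - f(p_j) = f(p_i - p_j)$, so it suffices to show that, with positive probability, $f$ preserves the Euclidean norm of each of the $\binom n2$ difference vectors $p_i - p_j$ up to a multiplicative factor lying in $[1-\eps/3,\,1+\eps/3]$: rescaling $f$ by $(1-\eps/3)^{-1}$ then gives an embedding with distortion at most $(1+\eps/3)/(1-\eps/3) \le 1+\eps$ (using $\eps<1$).

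First I would isolate a single-vector concentration statement. For a fixed unit vector $u \in \Re^d$ and the $\ell$-th row $a_\ell$ of $A$, rotational invariance of the Gaussian gives $\langle a_\ell, u\rangle \sim N(0,1)$, and these are independent over $\ell$; hence $\|f(u)\|^2 = \frac1k \sum_{\ell=1}^k \langle a_\ell,u\rangle^2$ is $1/k$ times a chi-squared random variable with $k$ degrees of freedom, so $\Ex{\|f(u)\|^2} = 1$. The technical core is the tail bound
\[
  \Pr\pth{\bigl|\,\|f(u)\|^2 - 1\,\bigr| > \eps/3} \;\le\; 2\exp\pth{-c\,\eps^2 k}
\]
for an absolute constant $c>0$, which I would derive by the standard Chernoff argument for chi-squared variables — writing the moment generating function $\Ex{\exp\pth{t\sum_\ell \langle a_\ell,u\rangle^2}}$ in closed form, optimizing over $t$ — followed by the elementary bookkeeping that relates $(1\pm\eps/3)^{1/2}$ to $1 \pm \Theta(\eps)$.

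Given this, I would apply the tail bound to the $\binom n2$ unit vectors $u_{ij} = (p_i - p_j)/\|p_i - p_j\|$ and take a union bound: the probability that some pair is stretched or shrunk by more than a $\pm\eps/3$ factor is at most $\binom n2 \cdot 2\exp(-c\eps^2 k)$, which is strictly less than $1$ once $C$ is chosen so that $cC > 2$. Hence some realization of $f$ has the stated norm-preservation property on every difference vector, and after the rescaling described above it is the required embedding into $(\Re^{O(\log n/\eps^2)},\|\cdot\|_2)$ with distortion $1+\eps$.

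The step demanding the most care is the chi-squared tail estimate: one must track constants closely enough to see that the exponent is $\Theta(\eps^2 k)$ and not $\Theta(\eps k)$, since that is precisely what forces the $\eps^{-2}$ factor in the target dimension; the linearity reduction and the union bound are routine. A structurally identical alternative, following Johnson and Lindenstrauss more closely, is to take $f$ to be a suitably scaled orthogonal projection onto a uniformly random $k$-dimensional subspace and to invoke concentration of measure on the sphere $S^{d-1}$ for the squared length of the projection of a fixed unit vector, replacing the Gaussian moment-generating-function computation by a spherical-cap bound.
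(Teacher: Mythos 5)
The paper does not prove this lemma; it is stated as a black box with a citation to Johnson and Lindenstrauss \cite{jl-elmih-84}, so there is no in-paper argument to compare against. Your sketch is the standard modern proof via a scaled Gaussian random matrix: reduce by linearity to norm preservation of $\binom{n}{2}$ difference vectors, observe that $\|f(u)\|^2$ is a normalized chi-squared with $k$ degrees of freedom, invoke a Chernoff-style tail bound with exponent $\Theta(\eps^2 k)$, and union bound. That is correct, and your remark that the $\eps^2$ (rather than $\eps$) in the exponent is exactly what forces $k = \Theta(\eps^{-2}\log n)$ is the right thing to flag as the step needing care. One minor point of hygiene: you should note that the claim is trivial when $d \le k$ (take the identity map), so one may assume $d > k$, and that the argument as written handles coincident points only after treating $p_i = p_j$ separately (the zero vector has no unit normalization, but the map is linear so it sends $0$ to $0$ anyway). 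Your closing alternative — random orthogonal projection plus concentration on the sphere — is indeed closer to the original Johnson--Lindenstrauss argument, and either route suffices for the statement as used in the paper, which only needs existence of the embedding, not an explicit construction or running time.
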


\begin{defn}
    Let $H$ be a family of hash functions mapping $\Re^d$ to some
    universe $U$.  We say that $H$ is
    \emphi{$(\delta,c \delta, p_1, p_2)$-sensitive} if for any
    $x,y\in \Re^d$ it satisfies the following properties:
    \begin{compactenum}[\qquad(A)]
        \item If $\|x-y\|_2\leq \delta$ then
        $\Pr_{h\in H}[h(x)=h(y)]\geq p_1$.
        \item If $\|x-y\|_2\geq c \delta$ then
        $\Pr_{h\in H}[h(x)=h(y)]\leq p_2$.
    \end{compactenum}
\end{defn}

\begin{lemma}[Andoni \& Indyk \cite{ai-nohaa-06}]%
    \lemlab{LSH}%
    For any $\delta>0$, dimension $d>0$, and $c>1$, there exists a
    $\pth{\delta, c \delta, 1/n^{1/c^2 + o(1)}, 1/n}$-sensitive family
    of hash functions for $\Re^d$, where every function can be
    evaluated in time $O\pth{d\cdot n^{o(1)}}$.
\end{lemma}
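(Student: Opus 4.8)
The plan is to follow the ball‑carving locality‑sensitive hashing construction of Andoni and Indyk: reduce to a slowly growing dimension by a random linear projection, hash by a random covering of space by Euclidean balls, and amplify to drive the ``far'' collision probability down to $1/n$. By rescaling coordinates we may assume $\delta = 1$. Fix $t = \Theta(\log^{2/3} n)$ and a random (suitably scaled) Gaussian matrix $G \colon \Re^d \to \Re^t$; the hash will be ``apply $G$, then ball‑carve in $\Re^t$''. The projection does not preserve pairwise distances --- $t$ is far too small --- but this is harmless: the ball‑carving collision probability is a smooth function of distance, the image length $\|Gu\|_2$ of a fixed unit vector concentrates (a $\chi^2_t$ tail estimate), and so composing $G$ with the $\Re^t$‑hash perturbs every collision probability by only a $1+o(1)$ factor, which is all an LSH family needs. (One may optionally precede this by \lemref{JL} to bring $d$ down to $O(\log n)$, but the analysis works for every $d$; applying $G$ costs $O(dt) = O(d\cdot n^{o(1)})$.)

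Next, the ball‑carving hash on $\Re^t$. Fix a radius $w = w(t)$ with $w/\delta \to \infty$ slowly, e.g.\ $w = \delta\, t^{1/4}$, so that $\delta/w$ and $c\delta/w$ both tend to $0$ while $t(\delta/w)^2 \to \infty$. Conceptually, draw an i.i.d.\ sequence of uniform random centers $z_1, z_2, \dots$ in $\Re^t$ and let $g(x)$ be the least index $i$ with $\|x - z_i\|_2 \le w$. To make this a genuine and efficiently evaluable hash, localize with a grid: tile $\Re^t$ by axis‑parallel cubes of side $\Theta(w)$, put $U = t^{O(t)}$ independent random centers into each cube, and define the hash of $x$ as the pair consisting of the cube containing $x$ and the least index of a placed ball that covers $x$. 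Taking the cube side just below the radius at which $U$ random balls cover an entire cube makes every point covered with probability $1 - n^{-\omega(1)}$, so the hash is well defined, and evaluating it inspects only $U = t^{O(t)} = 2^{O(\log^{2/3} n\, \log\log n)} = n^{o(1)}$ balls.

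For the probabilities, fix $x, y \in \Re^t$ at distance $\ell$. Up to a lower‑order grid‑boundary correction, the first drawn ball that covers either point covers both with probability $\mathrm{vol}\pth{B(x,w)\cap B(y,w)} / \mathrm{vol}\pth{B(x,w)\cup B(y,w)}$, so the collision probability is this lens‑to‑union ratio up to $1+o(1)$. The core computation is the Euclidean estimate
\[
   \frac{\mathrm{vol}\pth{B(x,w)\cap B(y,w)}}{\mathrm{vol}\pth{B(x,w)}} = \exp\!\pth{-(c_0 + o(1))\,t\,(\ell/w)^2} \qquad (\ell/w \to 0),
\]
for an absolute constant $c_0 > 0$, obtained by writing the ratio as the probability that a uniformly random point of a radius‑$w$ ball also lies in the translated ball and applying a spherical‑cap volume bound. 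Because $t(\delta/w)^2 \to \infty$, the additive $O(1)$ terms (from $\mathrm{vol}\pth{B \cup B'} \approx 2\,\mathrm{vol}(B)$, from the grid correction, and from the projection perturbation) are absorbed into the exponent's $o(1)$, so the single‑hash collision probabilities $\alpha$ at distance $\delta$ and $\gamma$ at distance $c\delta$ satisfy
\[
   \rho := \frac{\ln(1/\alpha)}{\ln(1/\gamma)} = \frac{t(\delta/w)^2(1+o(1))}{t(c\delta/w)^2(1+o(1))} = \frac{1}{c^2}\,(1+o(1)).
\]
Finally, concatenate $g := \ceil{\log_{1/\gamma} n}$ independent copies: the composite family has far‑collision probability $\le \gamma^g \le 1/n$ and near‑collision probability $\ge \alpha^g = n^{-(1+o(1))\rho} = 1/n^{1/c^2 + o(1)}$, while $g = n^{o(1)}$ (indeed $g = O(\log^{2/3} n)$ for the parameters above), so the composite hash is still evaluable in $O(d\cdot n^{o(1)})$ time.

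The main obstacle is exactly the bookkeeping that makes every error term honestly $o(1)$ at once: the spherical‑cap volume bound, the distortion of the Gaussian projection, and the grid‑boundary leakage must fold into a single additive $o(1)$ in the exponent $\rho = 1/c^2$; the per‑cube ball count $U$ must be large enough to force covering with probability $1 - n^{-\omega(1)}$ yet small enough that $U = n^{o(1)}$; and the amplification depth $g$ must stay subpolynomial. Balancing these is what forces the otherwise odd‑looking choices $t = \Theta(\log^{2/3} n)$ and $1 \ll w/\delta \ll \sqrt t$, and carrying out the $\Re^t$ volume computation with explicit constants under these constraints is the substance of the proof.
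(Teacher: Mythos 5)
The paper offers no proof of this lemma; it is cited verbatim as a black-box result of Andoni and Indyk~\cite{ai-nohaa-06}, so there is no internal argument to match your sketch against. What you have written is a reconstruction of the Andoni--Indyk ball-carving argument itself, which goes well beyond anything this paper undertakes.

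Taken on its own terms, the outline does follow the Andoni--Indyk route: reduce to $t = \Theta(\log^{2/3}n)$ dimensions, carve by random balls of radius $w$ with $1 \ll w/\delta \ll \sqrt{t}$, grid-localize so that only $n^{o(1)}$ balls are ever examined, read off the single-hash collision probability as $\exp\pth{-(c_0+o(1))\,t(\ell/w)^2}$ from a cap-volume estimate, and concatenate. One intermediate claim is wrong as stated: the random projection $G$ does \emph{not} perturb the single-hash collision probability by a $1+o(1)$ multiplicative factor. Averaging $\exp\pth{-c_0 t(\ell'/w)^2}$ over the $\chi^2_t$ law of $\ell'^2 = \|G(x-y)\|_2^2$ gives $\pth{1 + 2c_0\ell^2/w^2}^{-t/2}$, which differs from $\exp\pth{-c_0 t(\ell/w)^2}$ by a multiplicative factor $\exp\pth{c_0^2\, t(\ell/w)^4(1+o(1))}$; at $\ell/w \sim t^{-1/4}$ this is $e^{\Theta(1)}$, a constant rather than $1+o(1)$. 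What is true, and what your displayed computation of $\rho$ actually uses, is that the \emph{logarithm} of the collision probability is preserved to within a $1+o(1)$ factor, which is all that $\rho = 1/c^2 + o(1)$ requires. The justification of the projection step should therefore pass through the $\chi^2$ moment-generating-function calculation rather than a concentration-of-norm argument; and the remaining glosses (``lower-order grid-boundary correction'', the exact choice of cube side and per-cube ball count) are precisely where Andoni and Indyk's actual proof spends most of its effort, so the sketch is best read as a roadmap that is correct in spirit rather than as a complete proof.
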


We extend the notion of nets to \emphi{$c$-approximate $r$-nets}, for
$c\geq 1$, $r>0$, and metric space $(V, \metricC)$. These are subsets
$\net$ of the points of $V$ such that no two of the points of $\net$
are within distance $r$ of each other, and such that every point of
$V$ is within distance $c\cdot r$ of a point of $\net$.

\begin{theorem}%
    \thmlab{high-dim-approx-net}%
    Let $d>0$, $r>0$, $\eps>0$.  Given a set $X$ of $n$ points in
    $\Re^d$, one can compute in expected running time
    $O\pth{\eps^{-2} n^{1+1/(1+\eps)^2+o(1)}}$ a set $\net\subseteq X$
    such that $\net$ is a $(1+\eps)$-approximate $r$-net for the
    Euclidean metric on $X$ with high probability.
\end{theorem}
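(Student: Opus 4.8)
The plan is to reduce the construction of a $(1+\eps)$-approximate $r$-net to dynamic approximate near-neighbor search via locality-sensitive hashing, and to build the net in a single incremental pass over the points. First, using the Johnson--Lindenstrauss lemma (\lemref{JL}) I would embed $X$ into $\Re^{d'}$ with $d' = O(\eps^{-2}\log n)$ and distortion $1+\eps'$, where $\eps'$ is a sufficiently small constant fraction of $\eps$; from now on we condition on the embedding achieving this distortion, which fails with polynomially small probability. Working on the images, I would instantiate the Andoni--Indyk family (\lemref{LSH}) with inner radius $\delta=(1+\eps')r$ and approximation factor $c=(1+\eps)/(1+\eps')^2$ (so $c$ is just below $1+\eps$), giving $p_1=n^{-1/c^2-o(1)}=n^{-1/(1+\eps)^2-o(1)}$ and $p_2=1/n$. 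Since $\log_{1/p_2}n=1$, one hash function per table suffices; taking $L=\Theta(p_1^{-1}\log n)=n^{1/(1+\eps)^2+o(1)}$ tables and, within each bucket, discarding colliding points whose true distance in the input metric exceeds $(1+\eps)r$, yields an insertion-only structure $\DS$ on a subset of $X$ such that: any query point $q$ that has a stored point within embedded distance $\delta$ is answered, with probability at least $1-n^{-2}$, by a stored point within input distance $(1+\eps)r$ of $q$; moreover each insertion or query touches $O(L)$ buckets, evaluates $O(L)$ hash functions at cost $O(d'\,n^{o(1)})=O(\eps^{-2}n^{o(1)})$ each, and (capping the number of examined far colliding points at $O(L)$) runs in $O(\eps^{-2}n^{1/(1+\eps)^2+o(1)})$ time.

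Given $\DS$, I would scan the points of $X$ in arbitrary order, maintaining a set $\net$ (initially empty) with $\DS$ built on the embedded images of the current members of $\net$. For each point $x$: query $\DS$; if it returns some $p\in\net$ (which, by the distance filter, satisfies $\dist{p}{x}\le(1+\eps)r$), skip $x$; otherwise insert $x$ into $\net$ and into $\DS$. The covering property is then deterministic: every skipped point is explicitly certified to lie within $(1+\eps)r$ of a net point that is never removed, and every inserted point covers itself. Only the separation property is probabilistic. Suppose $x$ is inserted although some earlier $q\in\net$ has $\dist{q}{x}\le r$; then the embedded image of $q$ is within $(1+\eps')\dist{q}{x}\le\delta$ of the embedded $x$, so with probability at least $1-n^{-2}$ the query at $x$ returns a net point within input distance $(1+\eps)r$ and $x$ is skipped, a contradiction. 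A union bound over the at most $n$ insertions shows that, with high probability, every pair of net points is at distance greater than $r$, so $\net$ is a $(1+\eps)$-approximate $r$-net. One must check that the choices $\delta=(1+\eps')r$ and $c=(1+\eps)/(1+\eps')^2$ make the two-sided JL distortion compose exactly as used above; this is the routine part.

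For the running time, there are $O(n)$ queries and $O(n)$ insertions, each costing $O(\eps^{-2}n^{1/(1+\eps)^2+o(1)})$; the embedding itself costs $O(n\,d\,d')=O(n\,d\,\eps^{-2}\log n)$, which is absorbed for $d$ not too large. The total is $O(\eps^{-2}n^{1+1/(1+\eps)^2+o(1)})$. I expect the main obstacle to be the joint calibration of the amplification and the scales: one must amplify the one-sided LSH success probability to $1-1/\mathrm{poly}(n)$ so that the union bound over all $n$ insertions holds, while keeping both the number of tables and the far-collision work at $n^{1/(1+\eps)^2+o(1)}$ (using $p_2=1/n$, so a far pair collides with probability $\le 1/n$ and the expected bucket load from far points is $O(1)$ per table) so that the claimed exponent is not degraded, and one must take $c$ slightly below $1+\eps$, with $\eps'$ of the same order, so that the JL distortion composes to a genuine $(1+\eps)$-approximate $r$-net rather than a $(1+O(\eps))$-approximate one. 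Maintaining $\DS$ under insertions, the explicit distance filter, and the standard bucket-size bound are routine.
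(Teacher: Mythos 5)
Your plan matches the paper's proof in all essentials: dimension reduction via Johnson--Lindenstrauss, the Andoni--Indyk LSH family at approximation factor roughly $1+\eps$ (with $p_2=1/n$ so that a single hash function per table suffices and $L\approx p_1^{-1}\log n$ tables amplify to high probability), an incremental scan of $X$ in arbitrary order that adds a point to $\net$ exactly when no existing net point is found within $(1+\eps)r$, deterministic covering via the explicit distance filter, probabilistic separation via the $p_1$ bound plus a union bound over $n$ events, and the running time dominated by $n\cdot L$ hash evaluations plus false collisions. The only presentational differences from the paper are that you maintain a dynamic LSH structure over the current net points (querying each new data point against it, with per-query capping at $O(L)$ examined far colliders), whereas the paper precomputes all $n k$ hash values up front, stores all data points in the buckets, and, when a point $x_i$ becomes a net point, marks every still-unmarked same-bucket point within $(1+\eps)r$; the paper controls the total scanning cost by bounding the \emph{expected} number of false-positive collisions rather than by per-query capping. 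You are also somewhat more scrupulous than the paper about composing the JL distortion with the LSH gap (the paper simply absorbs this into a $1+O(\eps)$ factor); as you note, one needs $\eps'=o(\eps)$ (e.g.\ $\eps'=\eps/\log n$) to keep the exponent at $1/(1+\eps)^2+o(1)$, and this is a calibration the paper glosses over as well. These are all minor stylistic variations, not a different route.
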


\begin{proof}
    By \lemref{JL} we may assume that $d = O\pth{\log n / \eps^2}$,
    since otherwise we can embed $X$ into Euclidean space of dimension
    $O\pth{\log n / \eps^2}$, with distortion $1+O(\eps)$. Any
    $\pth{1+O(\eps)}$-approximate $r$-net for this new point set is a
    $\pth{1+O(\eps)}$-approximate $r$-net for $X$.
    
    Let $c=1+\eps$.  Let $H$ be the $(r,cr,p_1,p_2)$-sensitive family
    of hash functions given by \lemref{LSH}, with
    $p_1=1/n^{1/c^2+o(1)}$, $p_2=1/n$.  We sample
    $k=O\pth{(1/p_1) \cdot \log n} = O\pth{n^{1/c^2+o(1)}}$ hash
    functions $h_1,\ldots,h_k\in H$.  For every $i\in \{1,\ldots,k\}$,
    and for every $x\in X$, we evaluate $h_i(x)$.
    
    We construct a set $\net \subseteq X$ which is initially empty,
    and it will be the desired net at the end of the algorithm.
    Initially, we consider all points in $X$ as being unmarked.  We
    pick an arbitrary ordering $x_1,\ldots,x_n$ of the points in $X$,
    and we iterate over all points in this order.  When the iteration
    reaches point $x_i$, if it is already marked, we skip it, and
    continue with the next point.  Otherwise, we add $x_i$ to $\net$,
    and we proceed as follows.  Let $M_i$ be the set of all currently
    marked points.  Let
    $S_i = \bigcup_{j=1}^k h_j^{-1}(x_i) \setminus M_i$ be the set of
    unmarked points that are hashed to the same value in at least one
    of the sampled hash functions $h_1,\ldots,h_k$.  We mark all
    points $y\in S_i$, such that $\|x_i-y\|_2 \leq c \cdot r$.  This
    completes the construction of the desired set $\net$.
    
    \InFullVer{%
       We next argue that $\net$ is indeed a $c$-approximate $r$-net.
       Every point $y\in X\setminus \net$ must have been marked when
       considering some earlier point $x_i \in \net$, implying that
       $\|x_i-y\|_2 \leq c\cdot r$. Thus, every non-net point is
       covered by a net point.  On the other hand, consider a pair of
       points $x_i,x_j$ ($i<j$) for which $\|x_i,x_j\|_2 \leq r$.
       Then with high probability, there exists
       $t'\in \{1,\ldots,k\}$, with $h_{t'}(x_i)=h_{t'}(x_j)$.  If so,
       $x_j$ will be marked when we consider $x_i$, preventing it from
       belonging to $\net$.  Therefore, with high probability, we have
       that for any $x,x'\in \net$, $\|x-x'\|_2 > r$.  This
       establishes that $\net$ is indeed a $c$-approximate $r$-net.
    
       It remains to bound the running time.  We perform $n\cdot k$
       hash function evaluations in time
       $O\pth{d \cdot n^{o(1)}} = O\pth{\eps^{-2} n^{o(1)}}$ each,
       totaling time $O\pth{\eps^{-2} n^{1+1/c^2+o(1)}}$.  The
       remaining time is dominated by $O\pth{\sum_{i=1}^n |S_i|}$.
       Each point is marked at most once, so
       $O\pth{\sum_{i=1}^n |S_i|} = O(n+L)$, where $L$ is the total
       number of \emph{false positives}, i.e.~the number of triples
       $x,y,t$ with $x,y\in X$, $\|x-y\|_2 > c\cdot r$,
       $t\in \{1,\ldots,k\}$, and $h_t(x)=h_t(y)$.  For any
       $x,y\in X$, with $\|x-y\|_2 > c \cdot r$, and for any
       $t\in \{1,\ldots,k\}$, we have $\Pr[h_t(x)=h_t(y)] \leq 1/n$.
       Since there are $O\pth{n^2}$ pairs of points, we conclude that
       the expected number of false positives is
       $\mathbf{E}[L] \leq O\pth{\sum_{x\neq y \in X} \sum_{t=1}^{k}
          1/n} = O\pth{n^{1+1/c^2+o(1)}}$.
       We conclude that the total expected running time of the
       algorithm is $O\pth{\eps^{-2} n^{1+1/c^2+o(1)}}$, as required.%
    }%
    \InShortVer{%
       For the proof that this is indeed a $c$-approximate $r$-net,
       and the time analysis, see the full version of this paper.%
    }
\end{proof}

\subsection{An approximation whose time %
   depends on the spread}

\begin{lemma}
    Let $d\geq 1$, let $X$ be a set of $n$ points in $\Re^d$, let
    $\eps > 0$, and let $\Spread$ be the spread of the Euclidean
    metric on $X$.  Then, one can compute in
    $O\pth{\eps^{-2} n^{1+1/(1+\eps)^2 + o(1)} \log \Spread}$ expected
    time a sequence that is a $(1+\eps)$-greedy permutation for the
    Euclidean metric on $X$, with high probability.
\end{lemma}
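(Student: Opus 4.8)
The plan is to mirror the proof of \lemref{greedy_spread}, the graph analogue of this statement, using \thmref{high-dim-approx-net} (approximate $r$-nets in high dimensions) as the subroutine in place of the exact $r$-net routine used there. Fix a small $\eps' = \Theta(\eps)$ and a geometric ratio $\gamma = 1+\Theta(\eps)$ with $(1+\eps')\gamma \le 1+\eps$, compute a $2$-approximation $\Diameter$ to the diameter of $X$ in $O(nd)$ time (distance from an arbitrary point to its farthest point), and set $r_i = \Diameter/\gamma^{\,i-1}$. Process scales $i=1,2,\ldots$: at scale $i$, run the algorithm of \thmref{high-dim-approx-net} on $X$ with the ordering of its internal loop chosen so that the union $\Snet_i$ of all previously selected net points is listed first; the points it inserts into the net that were \emph{not} already in $\Snet_i$ form the new block $\net_i$. (Listing $\Snet_i$ first simply reinserts those points — they are pairwise $>r_{i-1}>\gamma r_i$ apart by induction, so nothing degenerate happens — and marks their $(1+\eps')r_i$-neighborhoods as used; the run then fills in $\net_i$ among the remaining unmarked points.) Concatenate the blocks $\net_1,\net_2,\ldots$ (arbitrary internal order) to form the output permutation. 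Once $r_i$ drops below the minimum interpoint distance, no not-yet-selected point can be marked, so it is selected; hence the process halts after $M = \lceil\log_\gamma \Spread\rceil = O\pth{\eps^{-1}\log\Spread}$ scales, and $\Spread$ need not be known in advance.

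Correctness — that the output is a $(1+\eps)$-greedy permutation w.h.p.\ — follows exactly as in the (routine) analysis preceding \lemref{greedy_spread}, the only change being the $(1+\eps')$ slack of the approximate net in the covering direction. If $\Pi_\ell$ is a prefix whose last point lies in block $t$, then: (i) every two of its points are at distance $\ge r_t$, since points of $\net_t$ are pairwise $>r_t$ and each is $>r_t$ from $\Snet_t$ (a point within $r_i$ of $\Snet_i$ collides with it under one of the $k$ hash functions w.h.p., hence gets marked), while any two earlier net points $a\in\net_j$, $b\in\net_{j'}$ with $j<j'\le t$ satisfy $\|a-b\|> r_{j'}\ge r_t$ because $b$ was unmarked at scale $j'$; and (ii) every point of $X$ lies within $(1+\eps')r_{t-1} = (1+\eps')\gamma r_t \le (1+\eps)r_t$ of $\net_{\le t-1}\subseteq\Pi_\ell$, since $\net_{\le t-1}$ is a $(1+\eps')$-approximate $r_{t-1}$-net. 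Taking $r_\ell := r_t$ when the covering radius $\rho_\ell$ of $\Pi_\ell$ satisfies $\rho_\ell\ge r_t$, and $r_\ell:=\rho_\ell$ otherwise, witnesses properties (A) and (B) of \secref{def}; the monotonicity $r_1\ge r_2\ge\cdots$ is immediate. A union bound over the $M$ scales and over the Johnson--Lindenstrauss event (\lemref{JL}) yields the high-probability guarantee.

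For the running time, the naive count — $M$ invocations of \thmref{high-dim-approx-net}, each costing $O\pth{\eps^{-2}n^{1+1/(1+\eps)^2+o(1)}}$ — gives $O\pth{\eps^{-3}n^{1+1/(1+\eps)^2+o(1)}\log\Spread}$, an extra factor of $\eps^{-1}$. The improvement comes from noticing that the two $\eps^{-2}$-generating operations are scale-independent and can be amortized. First, apply the map of \lemref{JL} once, so all scales work in dimension $O(\eps^{-2}\log n)$; a $(1+O(\eps'))$-approximate net of the image for radius $r$ is a $(1+O(\eps'))$-approximate net of $X$ for (essentially) radius $r$, which is enough after adjusting $\eps'$ by a constant. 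Second, sample the random projections underlying the Andoni--Indyk hash functions of \lemref{LSH} once and reuse them across all $M$ scales, recomputing at scale $i$ only the cheap final bucketing step (which depends on $r_i$), which costs $n^{o(1)}$ per point per hash function. Then the $O\pth{\eps^{-2}n^{1+1/(1+\eps)^2+o(1)}}$ cost of evaluating the projections on all $n$ points is paid once, while the recurring per-scale cost — rebucketing, the ``used''-marking induced by reinserting $\Snet_i$, and the false-positive bookkeeping of \thmref{high-dim-approx-net} — is only $O\pth{n^{1+1/(1+\eps)^2+o(1)}}$. Summing over $M=O\pth{\eps^{-1}\log\Spread}$ scales gives $O\pth{\eps^{-2}n^{1+1/(1+\eps)^2+o(1)} + \eps^{-1}n^{1+1/(1+\eps)^2+o(1)}\log\Spread} = O\pth{\eps^{-2}n^{1+1/(1+\eps)^2+o(1)}\log\Spread}$, as claimed.

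The main obstacle is the amortization step: verifying that the Andoni--Indyk construction in \lemref{LSH} genuinely decomposes into a scale-independent ``core'' (random projections, evaluated once) and a per-radius bucketing step of cost $n^{o(1)}$ per point, and that reusing this core across scales does not spoil the collision guarantees needed for the union bound. If that decomposition proves awkward, the fallback is to reuse only the \lemref{JL} map and accept the weaker $\Otilde\pth{n^{1+1/(1+\eps)^2}\log\Spread}$ bound with an extra $\eps^{-1}$; the qualitative conclusion is unaffected. A secondary, purely bookkeeping point is to choose $\eps'$ and $\gamma$ so that $\gamma > 1+\eps'$ strictly (so that reinserting $\Snet_i$ never causes one of its points to mark another) while still $(1+\eps')\gamma\le1+\eps$, which is possible with $\eps',\gamma-1 = \Theta(\eps)$.
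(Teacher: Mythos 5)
Your proposal follows the same overall route as the paper — compute a $2$-approximate diameter in linear time, iterate over geometrically decreasing radii $r_i$, compute a $(1+O(\eps))$-approximate $r_i$-net via \thmref{high-dim-approx-net} at each scale while marking near-neighbors of previously selected points as used, and concatenate the blocks — and your correctness argument fills in exactly the details that the paper's one-line ``the rest of the analysis remains the same'' leaves to the reader. Where you genuinely go beyond the paper is in the running-time accounting: the paper simply composes \thmref{high-dim-approx-net} with the $M = O\pth{\eps^{-1}\log\Spread}$ iterations inherited from \lemref{greedy_spread}, which taken literally yields $O\pth{\eps^{-3} n^{1+1/(1+\eps)^2+o(1)}\log\Spread}$ — one factor of $\eps^{-1}$ worse than the bound stated in the lemma — and offers no mechanism for shaving it. You notice this discrepancy and propose the fix: apply the Johnson--Lindenstrauss embedding once, and reuse the Andoni--Indyk projection data across all $M$ scales so that only the $r$-dependent bucketing is redone per scale, which is precisely the amortization needed to make the per-scale cost $\eps$-free and recover the claimed $\eps^{-2}$. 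Your caveat that this requires verifying that the hash family of \lemref{LSH} really factors into a scale-independent projection phase and a cheap per-radius bucketing phase, without degrading the collision probabilities, is well placed; that decomposition is plausible for ball-carving LSH but is not something \lemref{LSH} as quoted delivers, and the paper never confronts the issue. In short, your proof is correct and takes the paper's approach but is more careful about a real gap in the paper's time analysis, and you are right to flag the one place where the amortization still rests on an unverified structural property of the LSH construction; your stated fallback (accepting an extra $\eps^{-1}$ factor) is also a fair reading of what the paper's proof, as written, actually establishes.
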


\begin{proof}
    We use the algorithm from \lemref{greedy_spread}.  A 2-approximate
    diameter can easily be computed in linear time, by choosing one
    point arbitrarily and finding a second point as far from it as
    possible.  The only new needed observation is that it is
    sufficient for the algorithm to compute
    $\pth{1+O(\eps)}$-approximate $r$-nets, using
    \thmref{high-dim-approx-net}, in place of $r$-nets.  As in
    \lemref{greedy_spread}, the approximate $r$-net algorithm needs to
    be modified to mark near-neighbors of previously selected points
    as used, so that they are not selected as part of the net; this
    step does not increase the total running time for the approximate
    $r$-net construction.  The rest of the analysis remains the same.
\end{proof}

\subsection{Approximating all-pairs min-max paths}
\seclab{approx-mst}

As a tool for eliminating the dependence on the spread in our
approximate greedy permutation algorithm, we will use an approximation
to the minimum spanning tree. However, we do not wish to approximate
the total edge length of the tree, as has been claimed by Andoni and
Indyk~\cite{ai-nohaa-06}; rather, we wish to approximate a different
property of the minimum spanning tree, the fact that for every two
vertices it provides a path that minimizes the maximum edge length
among all paths connecting the same two vertices.

\InFullVer{%
   \begin{lemma}[\cite{ai-nohaa-06}]%
       \lemlab{lem:approx-nn}%
       Given $n$ points in a Euclidean space of dimension~$d$, and
       given a parameter $c>1$, we may preprocess the points in time
       and space $O\pth{dn+n^{1+1/c^2+o(1)}}$ into a data structure
       that may be used to answer $c$-approximate nearest neighbor
       queries in query time $O\pth{dn^{1/c^2+o(1)}}$ with high
       probability of correctness.
   \end{lemma}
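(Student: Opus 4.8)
The plan is to build the classical locality-sensitive hashing (LSH) data structure on top of the near-optimal hash family of \lemref{LSH}. I would first solve the \emph{decision} version — the $(r,cr)$-approximate near-neighbor problem for a single query radius $r$ — and then lift it to a genuine $c$-approximate nearest-neighbor query by instantiating one such structure for every radius in a geometric sequence, exactly in the spirit of the spread-elimination argument of \secref{elim-spread}.

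Fix a radius $r$. Let $H$ be the $\pth{r, cr, p_1, p_2}$-sensitive family provided by \lemref{LSH}, with $p_1 = 1/n^{1/c^2+o(1)}$ and $p_2 = 1/n$. Since $p_2$ is already as small as $1/n$, a single hash function per table already makes ``far'' collisions rare, so I would sample $L = \Theta\pth{p_1^{-1}\log n} = O\pth{n^{1/c^2+o(1)}}$ independent functions $h_1,\dots,h_L \in H$ and store, for each $i$, the hash table $\brc{(h_i(x),x) : x\in X}$. Preprocessing performs $nL$ hash evaluations at cost $O\pth{d\,n^{o(1)}}$ each, so the time and space are $O\pth{dn + n^{1+1/c^2+o(1)}}$ (a Johnson--Lindenstrauss projection via \lemref{JL} down to $O(\log n)$ dimensions, which perturbs $c$ by a $1+o(1)$ factor, lets one absorb the surviving $d$ into the $n^{o(1)}$ slack). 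A query $q$ is answered by computing $h_1(q),\dots,h_L(q)$, scanning the points in the corresponding buckets while computing exact distances, and returning the first point at distance $\le cr$ found, or declaring failure after $3L$ points have been inspected.

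Correctness reduces to two one-line probability estimates. If some $x^\ast$ has $\|q-x^\ast\|_2 \le r$, then it collides with $q$ in table $i$ with probability $\ge p_1$, so it is missed by all $L$ tables with probability $(1-p_1)^L \le n^{-\Omega(1)}$. For a fixed $x$ with $\|q-x\|_2 > cr$, the probability of colliding with $q$ in a given table is $\le p_2 = 1/n$, so the expected number of such far collisions over all $L$ tables is $\le L$, and by Markov's inequality more than $3L$ far points appear with probability $\le 1/3$; hence with probability $\ge 1/2$ the scan is not truncated before $x^\ast$ is reached, and running $O(\log n)$ independent copies in parallel and reporting the best candidate boosts this to high probability. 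The query cost is $L$ hash evaluations plus $O(L)$ distance computations, i.e. $O\pth{d\,n^{1/c^2+o(1)}}$.

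The main obstacle is not this near-neighbor primitive — it is by now textbook — but the clean passage to a true $c$-approximate nearest neighbor, supporting \emph{every} query distance without paying a $\log\Spread$ factor (naively one needs a separate structure at each of $\Theta(\log\Spread)$ scales). I would handle it by observing that for a given query only $O\pth{\eps^{-1}\log(n/\eps)}$ radii in the geometric sequence are ``active'', by the same edge-collapsing and edge-deletion argument used in \secref{elim-spread}, so querying all active scales and taking the best answer costs only an $n^{o(1)}$ factor more; alternatively one can invoke the standard black-box reduction from approximate nearest neighbor to approximate near-neighbor, which carries the same polylogarithmic overhead and which the application in \secref{approx-mst} can take as given. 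Everything else — the Chernoff/Markov bounds, the Johnson--Lindenstrauss step, and absorbing $\mathrm{polylog}$ and $n^{o(1)}$ factors into the stated exponents — is routine bookkeeping.
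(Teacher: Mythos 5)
The paper does not prove \lemref{lem:approx-nn} at all --- it is stated with the attribution \textit{[Andoni \& Indyk]} and used as a black box in the proof of \lemref{approx-mst}. So there is no ``paper proof'' to compare against; what you have produced is a reconstruction of (a streamlined version of) the standard argument underlying the cited result. As a reconstruction it is essentially sound: the \emph{near-neighbor} primitive is built correctly --- your observation that, because \lemref{LSH} already delivers $p_2 = 1/n$, one can skip the usual $k$-fold concatenation and use a single hash per table is valid given the form of the lemma, and the Chernoff bound for the ``good'' collision together with Markov on the expected $\le L$ false positives gives constant success probability, which $O(\log n)$ repetitions boost to high probability. The one place that remains soft is the passage from the decision version to a genuine $c$-approximate nearest-neighbor oracle. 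Your first suggestion --- reusing the edge-collapsing argument of \secref{elim-spread} --- does not transfer directly, since that argument prunes scales relative to a \emph{known} current radius during a fixed sweep, whereas here the query distance is unknown and must be located; one really does need a ring-cover/hierarchical reduction (your second suggestion), which costs a $\mathrm{polylog}(n)$ overhead absorbed into $n^{o(1)}$. Since \lemref{approx-mst} only calls the oracle as a black box, stating the reduction at that level of detail would be acceptable in the paper's style, but if you wanted a self-contained proof you would need to cite or spell out that reduction rather than gesture at \secref{elim-spread}.
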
%
}

\begin{lemma}
    \lemlab{approx-mst}%
    Given $n$ points in a Euclidean space of dimension~$d$, and given
    a parameter $\eps>0$, we may in expected time
    $O\pth{n^{1+1/c^2+o(1)}}$ find a spanning tree $T$ of the points
    such that, for every two points $u$ and $v$, the maximum edge
    length of the path in $T$ from $u$ to $v$ is at most $(1+\eps)$
    times the maximum edge length of the path in the minimum spanning
    tree from $u$ to $v$.
\end{lemma}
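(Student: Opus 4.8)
The plan is to build a kind of hierarchical clustering in the spirit of Bor\r{u}vka's minimum-spanning-tree algorithm, but using approximate nearest-neighbor queries instead of exact ones. Recall the standard fact that for any two vertices $u,v$, the maximum edge length on the $u$-$v$ path in the MST equals the \emph{bottleneck distance} $b(u,v)$, namely the smallest value $\beta$ such that $u$ and $v$ lie in the same connected component of the graph on $X$ whose edges are all point pairs at distance $\le \beta$. So it suffices to produce a spanning tree $T$ in which the bottleneck distance between every pair is at most $(1+\eps)$ times the true bottleneck distance. I would run a Bor\r{u}vka-style process: maintain a partition of $X$ into current clusters (initially singletons); in each round, for each cluster $C$ find an approximate nearest point in $X\setminus C$ using the $(1+\eps)$-approximate nearest-neighbor structure of \lemref{lem:approx-nn} (querying each point of $C$, or better, maintaining one structure per cluster), add the corresponding edge to $T$, and merge all clusters joined by the chosen edges. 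After $O(\log n)$ rounds everything is in one component and $T$ is a spanning tree.

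The key steps, in order, are: (1) state and use the bottleneck/MST equivalence, reducing the goal to a bottleneck-path guarantee; (2) describe the Bor\r{u}vka rounds with approximate NN queries, noting that each round at least halves the number of clusters, so there are $O(\log n)$ rounds; (3) handle the running time — each round does $n$ approximate-NN queries over appropriately sized structures, and since the point set only shrinks (or we rebuild structures on cluster complements), the cost per round is $O(n^{1+1/c^2+o(1)})$, and summing the geometric-ish series over $O(\log n)$ rounds is absorbed into the $n^{o(1)}$ factor (or contributes only a $\log n$ factor, again absorbed); (4) prove correctness of the stretch bound. For (4), the argument mirrors the exact Bor\r{u}vka correctness proof: consider the true bottleneck value $\beta = b(u,v)$ and the components $X_1,\dots$ of the $\beta$-threshold graph. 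I claim that at every stage of our process, any two of our clusters that are merged are connected in $T$ by a path all of whose edges have length $\le (1+\eps)\beta'$ for the relevant threshold — more precisely, one shows by induction on the rounds that whenever our process has merged two points into the same cluster, the $T$-path between them uses only edges of length at most $(1+\eps)$ times the bottleneck distance between those two points, because each edge we add out of a cluster $C$ at bottleneck level $\le \beta$ has length at most $(1+\eps)$ times the distance from $C$ to its true nearest outside neighbor, which is $\le (1+\eps)\beta$.

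The main obstacle I anticipate is the interaction between the \emph{approximate} NN queries and the inductive stretch bound. With exact Bor\r{u}vka, the edge chosen out of a component at the moment two true-bottleneck-$\beta$ components are about to be joined has length exactly the minimum crossing distance, which is $\le \beta$; with $(1+\eps)$-approximate queries the chosen edge could be a $(1+\eps)$ factor longer, and one must check that these factors do not compound across the $O(\log n)$ merge rounds. The resolution is that the stretch is incurred \emph{once per pair}, at the single round where their clusters merge: at that round the two clusters are each connected internally (by induction, using smaller bottleneck values, hence even shorter edges after scaling) and the single new edge has length $\le (1+\eps)b(u,v)$, so the maximum edge on the resulting $u$-$v$ path is $\le (1+\eps)b(u,v)$, with no further multiplication. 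Making this precise — in particular choosing the right induction hypothesis so that "internal" edges added in earlier rounds are automatically bounded by the current pair's bottleneck value — is the delicate part; a secondary but routine point is arguing that maintaining and re-querying the approximate-NN structures as clusters merge stays within the claimed $O(n^{1+1/c^2+o(1)})$ expected time, which follows since the sum of structure sizes over a round is $O(n)$ and there are $O(\log n)$ rounds.
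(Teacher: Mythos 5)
Your high-level plan---approximate Bor\r{u}vka with $(1+\eps)$-approximate nearest-neighbor queries, $O(\log n)$ rounds, correctness via the bottleneck/minimax-path characterization of the MST---is exactly the approach the paper takes. However, there is a concrete gap in the implementation, and it is the part the paper devotes most of its effort to.

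The difficulty is that in each round a tree $C$ needs the (approximate) nearest point in $X\setminus C$, not in $X$ and not in $C$. Your two suggestions do not produce that. ``Maintaining one structure per cluster'' lets you find nearest neighbors \emph{inside} a cluster; querying a single structure over all of $X$ will typically return a neighbor inside the querier's own cluster; and building a fresh structure on $X\setminus C$ for every cluster $C$ costs $\Omega(n)$ per cluster, i.e.\ $\Omega(n^2)$ in the singleton round. So ``the sum of structure sizes over a round is $O(n)$'' is not the relevant bound once you account for having to exclude one's own tree. The paper resolves this with a bit-identifier trick: assign each tree an $O(\log n)$-bit label, and for every pair $(i,b)$ of bit position and bit value build one approximate nearest-neighbor structure on the points whose tree has bit $b$ in position $i$. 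A point $p$ in tree $C$ queries only the $O(\log n)$ structures whose $(i,b)$ disagrees with $C$'s label; every tree other than $C$ differs from $C$ in some bit, so every point of $X\setminus C$ is present in at least one queried structure, while no point of $C$ is. This gives $O(\log n)$ structures per round of total size $O(n\log n)$ and $O(\log n)$ queries per point, which is where the $O(n^{1+1/c^2+o(1)})$ bound actually comes from. Without some device of this kind your time bound does not follow.

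On correctness, your inductive bottleneck argument is workable (the key inequality $d(C,X\setminus C)\le b(u,v)$ for $u\in C,\ v\notin C$ is what controls the single round at which a pair merges), but you correctly flag the induction hypothesis as delicate and leave it unfinished. The paper sidesteps the induction entirely: it reweights $K_n$ so that each chosen tree edge $e$ gets weight equal to the true minimum crossing distance between the two trees it joined at the stage it was added, observes that this weight is within a factor $1+\eps$ of $\|e\|$, and then argues that the algorithm run is \emph{exactly} Bor\r{u}vka on this reweighted $K_n$ (under suitable tie-breaking). The minimax optimality of the resulting tree in $K_n$ then transfers directly to a $(1+\eps)$ bound in the original metric, with no per-round compounding to rule out. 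If you keep your inductive route you should state the hypothesis precisely and verify it survives rounds in which many merges happen simultaneously; otherwise the reweighting argument is the cleaner path.
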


\InFullVer{%
   \begin{proof}
       As before, we may assume without loss of generality that
       $d=O\pth{\log n/\eps^2}$. We build $T$ by an approximate
       version of \Boruvka's algorithm, in which we maintain a forest
       (initially having a separate one-node tree per point) and then
       in a sequence of $O(\log n)$ stages add to the forest the edge
       connecting each tree with its (approximate) nearest neighbor.
    
       In each stage, we assign each tree of the forest an
       $O(\log n)$-bit identifier. For each pair $(i,b)$ where $i$ is
       one of the $O(\log n)$ bit positions of these identifiers and
       $b$ is zero or one, we build a $(1+\eps)$-approximate nearest
       neighbor data structure for the points whose tree has $b$ in
       the $i$\th bit of its identifier, for a total of $O(\log n)$
       structures. Then, for each point $p$ of the input set, we use
       these data structures to find $O(\log n)$ candidate neighbors
       of $p$, one for each of the $O(\log n)$ structures that do not
       contain $p$. This gives us a set of $O(n\log n)$ candidate
       edges, among which we select for each tree of the current
       forest the shortest edge that has one endpoint in that tree. As
       in \Boruvka's algorithm, with an appropriate tie-breaking rule,
       adding the selected edges to the forest does not produce any
       cycles, and reduces the number of trees in the forest by at
       least a factor of two, so after $O(\log n)$ stages we will have
       a single tree, which we return as our result.
    
       To show that this tree has the desired approximation property,
       consider a complete graph $K_n$ on the input points, in which
       the weight of an edge that does not belong to the output tree
       is the distance between the points. However, in this graph, we
       set the weight of an edge $e$ that does belong to the tree by
       letting $T_1$ and $T_2$ be the two trees containing the
       endpoints of $e$ in the last stage of the algorithm for which
       these endpoints belonged to different trees, and setting the
       weight of $e$ to be the minimum distance between a pair of
       vertices one of which belongs to $T_1$ or $T_2$ and the other
       of which does not belong to the same tree. Then, by the
       correctness of the approximate nearest neighbor data structure,
       the weight of $e$ is at most equal to the distance between its
       endpoints and at least equal to that distance divided by
       $1+\eps$. However (with an appropriate tie-breaking rule) the
       algorithm we followed to construct our tree $T$ is exactly the
       usual version of \Boruvka's algorithm as applied to the
       weighted graph $K_n$.  Therefore, for every $u$ and $v$, the
       path from $u$ to $v$ in $T$ exactly minimizes the maximum edge
       length among all paths in $K_n$, and thus has the desired
       approximation for the original distances.
   \end{proof}
}%
\InShortVer{%
   The proof (in the full version of this paper) modifies \Boruvka's
   algorithm for minimum spanning trees by using an approximate
   nearest neighbor data structure to find candidate edges out of each
   subtree in each phase of the algorithm.%
}%

\subsection{Eliminating the dependence on the spread}

As in \secref{elim-spread}, we will eliminate the $\log\Spread$ term
in the running time for our approximate greedy permutation algorithm
by, in effect, contracting and uncontracting edges of a graph, the
approximate minimum spanning tree of \secref{approx-mst}.

\begin{theorem}
    Let $d\geq 1$, let $X$ be a set of $n$ points in $\Re^d$, and let
    $\eps > 0$, Then, one can compute in
    $O\pth{\eps^{-2} n^{1+1/(1+\eps)^2 + o(1)}}$ expected time a
    sequence that is a $(1+\eps)$-greedy permutation for the Euclidean
    metric on $X$, with high probability.
\end{theorem}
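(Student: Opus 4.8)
The plan is to follow the blueprint of \secref{elim-spread}, where the $\log\Spread$ factor was removed for sparse graphs, but with the approximate minimum spanning tree of \lemref{approx-mst} playing the role of the sorted list of edge lengths. First I would recall the structure of the spread-dependent algorithm from the previous lemma: for $i = 1, \dots, M = O(\eps^{-1}\log\Spread)$ we compute a $(1+O(\eps))$-approximate $r_i$-net using \thmref{high-dim-approx-net}, marking near-neighbors of previously selected points as used, and concatenate the nets into a permutation. The goal is to show that the total work, summed appropriately over all resolutions, telescopes to $O(\eps^{-2} n^{1 + 1/(1+\eps)^2 + o(1)})$ without the $\log\Spread$ factor.

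The key observation is that at resolution $r_i$, points that are extremely close together behave as a single point, and points that are extremely far apart never interact in a single net construction. Concretely, I would argue that for the approximate $r$-net computation at scale $r_i$, only the ``relevant'' scale of the point set matters: pairs at distance $\gg n r_i$ (or some polynomial-in-$n$ threshold) cannot both land in the net in a way that affects the output, and clusters of diameter $\ll \eps r_i / \mathrm{poly}(n)$ can be contracted to a single representative. The approximate MST $T$ from \lemref{approx-mst} gives exactly the handle needed to do this contraction consistently across scales: cutting $T$ at the threshold $\Theta(\eps r_i/\mathrm{poly}(n))$ partitions $X$ into groups, each of which may be replaced by one representative point at scale $r_i$, because the min-max path guarantee ensures that within a group all pairwise distances are small enough to be treated as zero up to the $(1+\eps)$ tolerance, while across groups the min-max distance is a $(1+\eps)$-approximation to the true relevant separation. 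Thus at scale $r_i$ we run the net algorithm on an ``active'' point set $X_i$ whose size $n_i$ is the number of groups; as $i$ increases by one, $X_i$ changes only by merging a few groups (when the MST threshold crosses an edge weight of $T$) so these updates are cheap and amortize.

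With this in place, the running time becomes $O\bigl(\eps^{-2}\sum_i n_i^{1 + 1/(1+\eps)^2 + o(1)}\bigr)$ plus the $O(n^{1+1/(1+\eps)^2+o(1)})$ cost of building $T$. The crucial point — and where I expect to lean on the same accounting as \secref{elim-spread} — is that each of the $n-1$ edges of $T$, equivalently each group-merge event, is ``active'' over only $O(\eps^{-1}\log(n/\eps))$ consecutive resolutions: once an edge of $T$ has weight in the window $[\,c\eps r_i/\mathrm{poly}(n),\ \mathrm{poly}(n)\cdot r_i\,]$, $r_i$ ranges over only logarithmically many values. Hence $\sum_i n_i = O\bigl(\eps^{-1} n \log(n/\eps)\bigr)$, and since $x \mapsto x^{1+1/(1+\eps)^2+o(1)}$ is superlinear, a convexity/packing argument bounds $\sum_i n_i^{1+1/(1+\eps)^2+o(1)}$ by $O\bigl(n^{1+1/(1+\eps)^2+o(1)}\bigr)$ after absorbing the $\log$ and $\eps^{-1}$ factors into the $n^{o(1)}$ and $\eps^{-2}$ terms respectively — this is exactly the same absorption used to pass from \lemref{greedy_spread} to \thmref{greedy-graph}. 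One also needs that skipping resolutions at which $X_i$ is unchanged (no active MST edge) costs nothing, handled as before by a heap of $T$'s edge weights sorted by length.

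The main obstacle I anticipate is verifying that contracting and un-contracting along $T$ genuinely preserves the $(1+\eps)$-greedy permutation guarantees (Properties (A) and (B) from \secref{def}) rather than merely the net-by-net guarantees, since the permutation interleaves nets at consecutive scales and the group-representatives shift as $i$ grows; one must check that replacing a group by a representative perturbs every relevant distance by at most a $(1+\eps)$ factor at the scale where that group is still a single active point, and that the min-max (bottleneck) property of $T$ — not the total-weight property, which as the authors note is the wrong invariant here — is precisely what delivers this. A secondary technical point is confirming that the ``mark near-neighbors as used'' step, which must be performed on the contracted point set $X_i$, still runs within the claimed bound; this follows because that step was already shown in the spread-dependent lemma not to increase the asymptotic cost, and contraction only shrinks the instance. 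Once these are checked, the theorem follows by assembling the pieces exactly as in \thmref{greedy-graph}.
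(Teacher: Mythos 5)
Your high-level strategy matches the paper's: use the approximate minimum spanning tree of \lemref{approx-mst} (whose bottleneck, not total-weight, property is the one that matters, as you note), and charge each of its $n-1$ edges to the $O(\eps^{-1}\log(n/\eps))$ scales at which it is ``relevant''. But the accounting as written has a real gap.

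You set $n_i$ to be the number of groups obtained by cutting $T$ at the threshold $\Theta(\eps r_i/\mathrm{poly}(n))$ and claim $\sum_i n_i = O(\eps^{-1}n\log(n/\eps))$. This does not hold: as $i$ grows, $r_i$ shrinks, so the cut threshold shrinks, more edges of $T$ get cut, and the number of groups is monotone \emph{non-decreasing} (so ``merging groups'' in your write-up should in fact be ``splitting groups''). In particular, once $r_i$ falls below $\Theta(\eps/\mathrm{poly}(n))$ times the shortest edge of $T$, every group is a singleton and $n_i=n$ for all remaining scales, giving $\sum_i n_i = \Omega(n\log_{1+\eps}\Spread)$ --- the spread dependence has not been removed. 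Contracting short edges alone is not enough; you also need the dual operation of \emph{splitting the instance into independent subproblems along edges of $T$ longer than roughly $r_i$}, and then \emph{skipping any subproblem that contains fewer than two active points}. It is that splitting-and-skipping that confines a point's contribution to a window of $O(\eps^{-1}\log(n/\eps))$ scales: below the window it is merged into a neighbor, above the window it sits alone in its own subproblem and is never processed. The paper does exactly this --- it maintains a partition of $T$ into subtrees, deletes an edge $e$ from its subtree once $r < \ell_e/(1+O(\eps))$, activates $e$'s endpoints once $r < (cn/\eps)\ell_e$, and runs the approximate net construction only in subproblems with at least two active points. With this mechanism added, your per-edge charging argument and the absorption of the $\log(n/\eps)$ factors into $n^{o(1)}$ go through. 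The correctness concern you raise at the end --- that contraction/splitting and interleaving of nets across scales still yield properties (A) and (B) of a $(1+\eps)$-greedy permutation --- is indeed the remaining step, and the paper leaves it at roughly the same level of detail.
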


\begin{proof}
    We maintain a partition of the input into subproblems, defined by
    subtrees of the spanning tree $T$ computed by
    \lemref{approx-mst}. Initially, there is one subproblem, defined
    by the whole tree, but it does not include all the input
    points. Rather, as the algorithm progresses, certain points within
    each subproblem's subtree will be active, depending on the current
    value $r$ for which we are computing approximate $r$-nets.
    
    We delete edge $e$ from tree $T$, splitting its subproblem into
    two smaller sub-subproblems, whenever $r$ becomes smaller than the
    length of $e$ divided by $1+O(\eps)$. %
    \InFullVer{%
       After this point, the points on one side of $e$ are too far
       away to affect the choices made on the other side of $e$. %
    }%
    We will include the endpoints of an edge $e$ into the active
    points of the subproblem containing $e$ whenever the current value
    of $r$ becomes smaller than $cn/\eps$ times the length of $e$ (for
    an appropriate constant $c<1$).  \InFullVer{%
       Until that time there will always be another active point
       within distance $c\eps r$ of $e$, so omitting the endpoints of
       $e$ will not significantly affect the approximation quality of
       the greedy permutation we construct.%
    }%
    
    At each stage of the algorithm, when we construct approximate
    $r$-nets for some particular value of $r$, we do so separately in
    each subproblem that has two or more active points. (In a
    subproblem with only one active point, that point must have been
    included in the approximate greedy permutation already, and so
    cannot be chosen for the new $r$-net. However, the points that
    have been included in the permutation must remain active in order
    to prevent other points within distance $r$ of them from being
    added to the net.)  Each edge $e$ contributes to the size of a
    subproblem only for a logarithmic number of different values of
    $r$, so the total time is as stated.
\end{proof}

\section{Exact greedy permutation for bounded %
   treewidth}
\seclab{e:g:treewidth}

\InFullVer{%
   For restricted classes of graphs, we can compute a greedy
   permutation exactly, more quickly than the naive algorithm of
   \secref{naive}.  Our algorithms follow Cabello and
   Knauer~\cite{ck-agbtors-09} in applying orthogonal range searching
   data structures to the vectors of distances from small sets of
   separator vertices. Specifically, we need the following result.
   
   \begin{lemma}[\cite{gbt-srtgp-84}]%
       \lemlab{post-office}%
       Let $S$ be a set of $n$ points in $\Re^k$, for a constant
       $k>1$.  Then we may process $S$ in time and space
       $O\pth{n\log^{k-1}n}$ so that the $\ell_\infty$-nearest
       neighbor in $S$ of any query point~$q$ may be found in query
       time $O\pth{\log^{k-1} n}$.
   \end{lemma}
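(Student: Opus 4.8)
\medskip\noindent\textbf{Proof plan.}
The plan is to first solve the \emph{decision} version of the problem with a standard orthogonal range searching structure, and then to bootstrap it to the optimization (nearest-neighbor) version without paying an extra logarithmic factor in the query time. The key observation for the decision version is that it is literally a box-emptiness query: a point of $S$ lies within $\ell_\infty$-distance $r$ of a query point $q$ if and only if the axis-parallel hypercube $\prod_{i=1}^{k} \pbrcx{q_i - r,\, q_i + r}$ contains a point of $S$, and this predicate is monotone in $r$. Hence I would build on $S$ a $k$-dimensional layered range tree with fractional cascading in the last coordinate --- the classical structure for orthogonal range emptiness --- which uses $O\pth{n\log^{k-1} n}$ space, is constructed in $O\pth{n\log^{k-1} n}$ time, and answers any box-emptiness query (returning a witness point when the box is non-empty) in $O\pth{\log^{k-1} n}$ time.

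To pass from the decision oracle to the actual distance $r^\ast = \min_{p\in S}\|p-q\|_\infty$, I would exploit that $r^\ast$ is combinatorially determined: if $p^\ast$ realizes the minimum and coordinate $i$ realizes $\|p^\ast-q\|_\infty = |p^\ast_i - q_i|$, then $r^\ast$ is one of the $kn$ numbers $|p_j - q_j|$ with $p\in S$ and $j\in\brc{1,\dots,k}$. So I would additionally store, for each of the $k$ coordinates, the sorted list of that coordinate's values over $S$; relative to $q$ these lists encode, coordinate by coordinate, the sorted order of the candidate radii, and $r^\ast$ can be found by a search over these $k$ sorted lists in which every probed value is resolved by one box-emptiness query against the range tree.

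The delicate point --- and the one I expect to be the main obstacle --- is carrying out this radius search within the claimed $O\pth{\log^{k-1}n}$ query bound, rather than the $O\pth{\log^{k}n}$ that a naive binary search over each of the $k$ sorted lists (with one fresh range query per probe) would incur. The remedy is to interleave the radius search with the descent through the range tree in the style of fractional cascading: one maintains, at the currently active level of the structure, pointers into the relevant sorted coordinate lists, so that tightening the estimate of $r^\ast$ and descending one further level advance those pointers by amounts that telescope to $O\pth{\log^{k-1}n}$ overall. Equivalently, one can lay the structure out recursively on the dimension --- a balanced search tree on the $k$\th coordinate whose nodes carry $(k-1)$-dimensional substructures augmented with coordinate-$k$ extrema --- and prove, by induction on $k$, that each node on the search path contributes only a $(k-1)$-dimensional decision query, itself answered through already-established cascaded pointers. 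Verifying that this bookkeeping genuinely telescopes, i.e.\ that refining the radius never forces a search to restart from scratch, is the crux; the remaining items (space, preprocessing time, and correctness of the box-emptiness reduction) are routine, and in the special case $k=2$ one can alternatively just build the $O(n)$-complexity $\ell_\infty$ Voronoi diagram and answer queries by point location.
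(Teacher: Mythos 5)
The paper gives no proof of this lemma; it is imported wholesale as a citation to Gabow, Bentley, and Tarjan \cite{gbt-srtgp-84}, so there is no in-paper argument to compare against. Judged on its own, your write-up has a genuine gap, and it sits exactly where you flag it yourself. The routine parts are indeed fine: an $\ell_\infty$-ball of radius $r$ about $q$ is the box $\prod_{i}[q_i-r,\,q_i+r]$; a layered range tree with fractional cascading answers box-emptiness (with a witness) in $O(\log^{k-1}n)$ time after $O(n\log^{k-1}n)$ preprocessing; and the optimum $r^\ast$ lies in the $O(kn)$-element set $\{\,|p_j-q_j|\,\}$. But a binary search over those candidates makes $O(\log n)$ probes at $O(\log^{k-1}n)$ apiece, i.e.\ $O(\log^{k}n)$, and nothing in the write-up actually brings this down to $O(\log^{k-1}n)$.

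Your remedy --- interleave the radius search with the tree descent so that pointer motion ``telescopes'' --- is asserted rather than proved, and it is not obviously true in the form stated. When $r$ changes, the box changes in \emph{every} coordinate at once, so the canonical decomposition of $[q_1-r,\,q_1+r]$ at the top level changes (the split node moves, canonical nodes on both sides of $q_1$ are swapped out), and at each surviving canonical node the induced $(k-1)$-dimensional subquery changes as well. Fractional cascading amortizes to $O(1)$ per level only for a single descent with fixed probe values; here the probe values are themselves the unknowns of the search, and no invariant is supplied that bounds the total pointer motion across all trial radii. The recursive-on-dimension variant you sketch at the end has the same hole: the claim that each node's $(k-1)$-dimensional decision query is ``answered through already-established cascaded pointers'' is precisely the inductive statement that needs a proof, not a description of one. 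To finish along your lines you would need a worked-out Frederickson-style or parametric-search amortization; otherwise you should either concede the weaker $O(\log^{k}n)$ query bound, or, as the cited source does, avoid the radius search entirely by decomposing the min-max objective directly into a constant number of dominance-type queries with an extremum aggregate.
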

   
   \subsection{Tree decomposition and restricted partitions}
   
   A \emphi{tree decomposition} of a graph $\Graph$ is a tree
   $\mathcal{D}$ whose nodes are associated with sets of vertices of
   $\Graph$ called \emph{bags}, satisfying the following two
   properties:
   \begin{compactenum}[\qquad(A)]
       \item For each vertex $v$ of $\Graph$, the bags containing $v$
       induce a connected subtree of $\mathcal{D}$.
       \item For each edge $uv$ of $\Graph$, there exists a bag in
       $\mathcal{D}$ containing both $u$ and $v$.
   \end{compactenum}
   The \emphi{width} of a tree decomposition is one less than the
   maximum size of a bag, and the \emphi{treewidth} of a graph is the
   minimum width of any of its tree decompositions.
   
   Following Frederickson~\cite{f-adsdecksst-97}, who defined a
   similar concept on trees, we define a \emphi{restricted order-$k$
      partition} of a graph $\Graph$ of treewidth $w$ to be a
   partition of the edges of $\Graph$ into $O(n/k)$ subgraphs $S_i$
   such that, for all $i$, the subgraph $S_i$ has at most $k$ edges,
   and such that, for each $S_i$, at most $2w+2$ vertices are
   endpoints of edges both in and outside $S_i$.
   
   \begin{lemma}
       \lemlab{lem:respart}%
       Fix $w=O(1)$.  Then for every $n$-vertex graph $\Graph$ of
       treewidth $\le w$ and every $k\ge\tbinom{w}{2}$, a restricted
       order-$k$ partition of $\Graph$ may be constructed in time
       $O(n)$ from a tree decomposition of~$\Graph$.
   \end{lemma}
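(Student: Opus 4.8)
The plan is to transfer the statement to the underlying tree $\mathcal{D}$ of the given tree decomposition, where it becomes an instance of Frederickson's restricted partition of a bounded-degree tree~\cite{f-adsdecksst-97}.

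First I would normalize the given tree decomposition (which we take to have width $\le w$). Since $w=O(1)$, in $O(n)$ time one may convert it into an equivalent \emph{nice} tree decomposition: $O(n)$ bags, each of size $\le w+1$, underlying tree of maximum degree $3$, every vertex of $\Graph$ occupying a connected subtree of bags, and every edge of $\Graph$ ``introduced'' at some bag. Assign every edge $uv$ of $\Graph$ to the introduce node at which the later of $u$ and $v$ is added (the other endpoint being already present there); then each bag $B$ receives at most $|B|-1\le w$ edges of $\Graph$, and the hypothesis $k\ge\binom{w}{2}$ guarantees this count is at most $k$ (as $\binom{w}{2}\ge w$ for $w\ge3$; smaller treewidth is handled directly). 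Write $\omega(B)$ for this count, so that $\sum_B\omega(B)=m=O(wn)=O(n)$.

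The key structural observation is this. Let $C$ be a set of bags inducing a connected subtree of $\mathcal{D}$, and let $S(C)$ be the union of the edges of $\Graph$ assigned to the bags in $C$. Then every vertex that is an endpoint of some edge in $S(C)$ and also of some edge not in $S(C)$ lies in a bag of $C$ that is an endpoint of a tree edge of $\mathcal{D}$ leaving $C$: such a vertex $v$ occupies a bag inside $C$ and a bag outside $C$, and since the bags containing $v$ form a connected subtree of $\mathcal{D}$, the path between those two bags --- along which every bag contains $v$ --- must cross an edge of $\mathcal{D}$ from $C$ to its complement, and the endpoint of that crossing edge lying in $C$ is a bag containing $v$. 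Hence, if we can partition the bags of $\mathcal{D}$ into $O(n/k)$ connected clusters $C_1,\ldots,C_t$, each of total weight at most $k$ and each separated from the rest of $\mathcal{D}$ by deleting at most two tree edges, then $S_1=S(C_1),\ldots,S_t=S(C_t)$ is a restricted order-$k$ partition of $\Graph$: there are $O(n/k)$ parts, each has at most $k$ edges, and the vertices on the boundary of each $S_i$ all lie in the at most two bags incident to the at most two deleted tree edges, hence among at most $2(w+1)=2w+2$ vertices. (Parts carrying no edges are merged into a neighbor; this does not change the number of parts asymptotically.)

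What remains is precisely Frederickson's restricted partition of the bounded-degree tree $\mathcal{D}$, run with the bag weights $\omega(\cdot)$ in place of unit weights. Rooting $\mathcal{D}$ and sweeping bottom up, one maintains at each node a ``pending'' connected subtree hanging from it; whenever absorbing a node together with its children's pending subtrees would raise the pending weight above $k$, one flushes child subtrees as finished clusters and restarts. Every weight-triggered flush outputs a cluster of weight $\Omega(k)$, so only $O(n/k)$ clusters are produced, and the whole thing is a single linear-time pass; together with the $O(n)$-time normalization and edge assignment this gives total time $O(n)$. The step I expect to be the main obstacle is enforcing the ``at most two deleted tree edges'' property for \emph{every} cluster --- so that the boundary constant is the required $2w+2$ rather than something larger --- while keeping the number of clusters $O(n/k)$: a node all of whose children's pending subtrees have already been flushed behaves as a three-way hub, and preventing two such hubs from landing inside the same cluster forces an extra, carefully placed flush, charged against the earlier hubs. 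This hub-avoidance bookkeeping is the heart of Frederickson's construction, and adapting it to bag-weighted trees is where the real work lies.
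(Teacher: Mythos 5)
Your proposal is correct and follows essentially the same route as the paper: assign graph edges to nodes of a bounded-degree tree decomposition, observe that boundary vertices of a cluster lie in the (at most two) bags incident to the deleted tree edges, and then invoke Frederickson's restricted partition of the decomposition tree with bag-weights equal to the number of assigned edges. The paper likewise delegates the ``at most two crossing tree edges per cluster, yet $O(n/k)$ clusters'' bookkeeping to Frederickson and establishes the $O(n/k)$ bound by contracting each cluster and counting edges in the resulting binary tree.
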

   
   \begin{proof}
       Let $\mathcal{D}$ be a tree decomposition of $\mathcal{D}$, of
       width~$w$.  Without loss of generality (by choosing a root
       arbitrarily and splitting some nodes of $\mathcal{D}$ into
       multiple nodes having equal bags) we may assume that
       $\mathcal{D}$ is a rooted binary tree.  Associate each edge of
       $\Graph$ with a unique node of $\mathcal{D}$ having the two
       endpoints of the edge in its bag, choosing arbitrarily if there
       is more than one node.
       
       Following Frederickson~\cite{f-adsdecksst-97}, find a partition
       of the nodes of $\mathcal{D}$ into subsets $D_i$, having the
       following properties:
       \begin{compactenum}[\qquad(A)]
           \item Each subset $D_i$ induces a connected subtree of
           $\mathcal{D}$.
           \item Each subset $D_i$ is associated with at most $k$
           edges of $\Graph$.
           \item For each $i$, if $D_i$ contains more than one node,
           then at most two edges of $\mathcal{D}$ have one endpoint
           in $D_i$ and the other endpoint in a different subset.
           \item No two adjacent subsets $D_i$ can be merged while
           preserving properties (A), (B), and (C).
       \end{compactenum}
       This partition may be found in linear time by a greedy
       bottom-up traversal of $\mathcal{D}$.  For each subset $D_i$,
       we form a subgraph $S_i$ of $\Graph$ consisting of the edges
       associated with nodes in $D_i$. Property (B) implies that each
       subgraph $S_i$ has at most $k$ edges, and property (C) implies
       that there are at most $2w+2$ vertices that are endpoints of
       edges both in $S_i$ and in other subgraphs (namely the vertices
       in the bags of the two nodes of $D_i$ that are endpoints of
       edges connecting $D_i$ to other sets).
       
       It remains to show that there are $O(n/k)$ subsets.
       Contracting each subset $D_i$ in $\mathcal{D}$ to a single node
       produces a binary tree $\Tree$ in which each edge with one
       degree-one endpoint or two degree-two endpoints connects
       subgraphs that together have more than $k$ edges of $\Graph$,
       for otherwise these two subgraphs would have been merged. It
       follows that $\Tree$ has $O(n/k)$ such edges, and therefore
       that it has only $O(n/k)$ nodes. Thus, we have formed $O(n/k)$
       subsets $S_i$, as desired.
   \end{proof}
   
   Given a restricted partition of $\Graph$, we define an
   \emph{interior vertex} of a subgraph $S_i$ to be a vertex of
   $\Graph$ all of whose incident edges belong to $S_i$, and we define
   a \emph{boundary vertex} of $S_i$ to be a vertex that is incident
   to an edge in $S_i$ but is not an interior vertex of $S_i$.
   
   \subsection{The algorithm}
   Suppose graph $\Graph$ has treewidth $w=O(1)$. We may find a greedy
   permutation of $\Graph$ by performing the following steps.
   \begin{compactenum}[\quad 1.]
       \item Find a tree-decomposition of $\Graph$ of width
       $w$~\cite{b-ltaftdst-96}.
       \item Apply \lemref{lem:respart} to find a restricted
       order-$\sqrt n$ partition of $\Graph$.
       \item Construct a weighted graph $H$ whose vertices are the
       boundary vertices of the restricted partition, where two
       vertices $u$ and $v$ are connected by an edge if they belong to
       the same subgraph $S_i$ of the partition, and where the weight
       of this edge is the length of the shortest path in $S_i$ from
       $u$ to $v$ (or a sufficiently large dummy value $Z$, greater
       than $n$ times the heaviest edge of $\Graph$, if no such path
       exists).
       \item For each vertex $v$ of $H$, initialize a value $d(v)$ to
       $Z$. As the algorithm progresses, $d(v)$ will represent the
       length of the shortest path to $v$ from a vertex already
       belonging to the greedy permutation.
       \item For each subgraph $S_i$ of the restricted partition,
       having $k$ boundary vertices, construct a $(k+1)$-dimensional
       $\ell_\infty$-nearest neighbor data structure
       (\lemref{post-office}) whose points correspond to interior
       vertices of $S_i$. The first coordinate of the point for $v$ is
       the length of the shortest path within $S_i$ to $v$ from an
       interior vertex of $S_i$ that belongs to the greedy
       permutation; initially, and until such a path exists, it is
       $Z$. The remaining coordinates give the lengths of the shortest
       paths in $S_i$ from each boundary vertex to $v$, or $Z$ if no
       such path exists.
       \item Repeat $n$ times:
       \begin{compactenum}[(a)]
           \item For each subgraph $S_i$ of the restricted partition,
           find the farthest interior vertex of $S_i$ from the
           already-selected vertices, and its distance from the
           vertices selected so far. This may be done by querying the
           nearest neighbor of a point $q$ whose first coordinate is
           $2Z$ and whose $i$\th coordinate is $2Z-d(v_i)$, where
           $v_i$ is the $i$\th boundary vertex of $S_i$.
           \item Among the $O\pth{\sqrt n}$ vertices found in the
           previous step, and the $O\pth{\sqrt n}$ boundary vertices
           whose distance $d(v)$ from the greedy permutation was
           already known, select a vertex $v$ whose distance from the
           greedy permutation is maximum, and add it to the
           permutation.
           \item If $v$ is an interior vertex of a subgraph $S_i$ of
           the restricted partition, use Dijkstra's algorithm to
           compute the shortest path within $S_i$ from it to the other
           interior vertices, and rebuild the nearest neighbor data
           structure associated with $S_i$ after updating the first
           coordinates of each of its points.
           \item Use Dijkstra's algorithm on $H$ (if $v$ is a boundary
           vertex) or on $H+S_i$ (if it is interior to~$S_i$),
           starting from $v$, to update the values $d(u)$ for each
           boundary vertex $u$.
       \end{compactenum}
   \end{compactenum}
   
   \smallskip\noindent The time analysis of this algorithm is
   straightforward and gives us the following result.
}
   
   \begin{theorem}
       Let $\Graph$ be an $n$-vertex graph of treewidth $w=O(1)$ with
       non-negative edge weights. Then in time
       $O\pth{n^{3/2}\log^{2w+2} n}$ we may construct an exact greedy
       permutation for $\Graph$.
   \end{theorem}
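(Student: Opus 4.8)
The algorithm is already spelled out in the six steps above, so the plan is to (i) argue that it correctly selects, at each iteration, the vertex of $\Graph$ farthest from the set of already-chosen vertices, and (ii) bound its running time, which is where essentially all the effort goes. For correctness I would maintain two invariants across the main loop, writing $\Pi$ for the set of vertices chosen so far. First: for every boundary vertex $u$ of the restricted partition, $d(u)=\min_{z\in\Pi}\distG{u}{z}$. Second: for every subgraph $S_i$ and every interior vertex $v$ of $S_i$, the first coordinate $c_0(v)$ of the point representing $v$ equals the length of the shortest path \emph{within $S_i$} from $v$ to a chosen interior vertex of $S_i$ (the dummy value $Z$ if there is none), and the remaining coordinates are the static within-$S_i$ shortest path lengths from $v$ to the boundary vertices of $S_i$.

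The technical core is a pair of path-decomposition claims. Since $v$ is interior to $S_i$, every edge incident to $v$ lies in $S_i$, and more generally any shortest path of $\Graph$ leaving $v$ stays inside $S_i$ until it first reaches a boundary vertex of $S_i$; splitting an optimal $v$-to-$\Pi$ path at that first boundary vertex (or noting it never leaves $S_i$) gives $\distG{v}{\Pi}=\min\pth{c_0(v),\,\min_u\pth{d_{S_i}(v,u)+d(u)}}$, where $u$ ranges over the boundary vertices of $S_i$; one inequality here is the triangle inequality and the other is the decomposition just described. The companion claim is that shortest path distances in the auxiliary graph $H$ agree with those of $\Graph$ between boundary vertices, and likewise distances in $H+S_i$ from an interior vertex of $S_i$ agree with $\Graph$-distances to boundary vertices; both follow by cutting a shortest $\Graph$-path at the boundary vertices it visits, each maximal interior stretch lying in a single subgraph and hence realized by an edge of $H$ (with the initial stretch out of an interior vertex of $S_i$ realized inside $S_i$). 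Granting these, the Dijkstra runs in steps 6(c) and 6(d) correctly restore the two invariants whenever a new vertex is added, and the $\ell_\infty$-nearest neighbor query does the right thing: with all coordinates lying in $[0,Z]$, the $\ell_\infty$ distance from the query point $\pth{2Z,\,2Z-d(v_1),\dots,2Z-d(v_k)}$ to the point of an interior vertex $v$ is exactly $2Z-\distG{v}{\Pi}$, so step 6(a) returns the interior vertex of $S_i$ farthest from $\Pi$ together with that distance, and step 6(b) then maximizes over the $O(\sqrt n)$ subgraph answers and the $O(\sqrt n)$ boundary-vertex values, which is the globally farthest vertex. Iterating $n$ times (the first vertex being arbitrary, as the definition of a greedy permutation allows) yields an exact greedy permutation.

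For the running time I would separate preprocessing (steps 1--5) from the loop (step 6). Steps 1 and 2 run in $O(n)$ time by \cite{b-ltaftdst-96} and \lemref{lem:respart}; they yield $O(\sqrt n)$ subgraphs $S_i$, each with at most $\sqrt n$ edges and at most $2w+2$ boundary vertices, and, since $w=O(1)$, also $m=O(n)$ and $\sum_i\cardin{V(S_i)}=O(n)$. Building $H$ and all within-subgraph shortest paths needs $O(\sqrt n\log n)$ Dijkstra work per subgraph, $O(n\log n)$ in total; building the $O(\sqrt n)$ nearest neighbor structures, each of dimension at most $2w+3$ over at most $O(\sqrt n)$ points, costs $O\pth{\sum_i\cardin{V(S_i)}\log^{2w+2}n}=O\pth{n\log^{2w+2}n}$ by \lemref{post-office}. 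Each of the $n$ loop iterations performs $O(\sqrt n)$ nearest neighbor queries at $O(\log^{2w+2}n)$ each, at most one rebuild of a single structure at $O(\sqrt n\log^{2w+2}n)$, and $O(1)$ Dijkstra runs on graphs with $O(\sqrt n)$ vertices and edges, for $O(\sqrt n\log^{2w+2}n)$ per iteration and $O(n^{3/2}\log^{2w+2}n)$ overall, which dominates the preprocessing.

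The step I expect to be the main obstacle is establishing the path-decomposition claims cleanly — in particular verifying that replacing $\Graph$-distances by within-subgraph shortest paths, both in the point coordinates and in the edge weights of $H$, never discards a genuinely shorter route, and confirming that the $2Z$ offsets keep every coordinate difference non-negative so that the $\ell_\infty$ norm really does compute the intended minimum of sums; once those are nailed down, the time bound is routine bookkeeping of the kind sketched above.
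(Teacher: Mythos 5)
Your proof is correct and follows the paper's six-step algorithm exactly; the paper itself only writes out the algorithm and remarks that ``the time analysis is straightforward,'' omitting any correctness argument, so your two invariants, the path-decomposition claims justifying the $\ell_\infty$-query encoding, and the verification that the $2Z$ offsets keep all coordinate differences non-negative fill in precisely what the paper leaves implicit. Your running-time accounting (preprocessing $O(n\log^{2w+2}n)$ dominated by the $n$ iterations at $O(\sqrt{n}\log^{2w+2}n)$ each) matches the paper's stated bound.
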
%

\section{Counting distances in planar graphs}
\seclab{planar:counting}

In this section we give a near-linear time bicriterion approximation
algorithm for counting pairs of vertices in a planar graph with a
given pairwise distance $r>0$.  The result is approximate in the
following sense.  If we let $c$ and $c'$ be the number of pairs with
distance at most $r$ and at most $(3+\eps)r$ respectively, for some
$\eps>0$, then we output a number $\alpha\in [c,c']$.

The following is due to Thorup \cite{t-corad-04} (see also
\cite{kst-mcoad-13}).

\begin{theorem}[Thorup \cite{t-corad-04}]%
    \thmlab{oracle}%
    For any $n$-vertex undirected planar graph with non-negative edge
    lengths, and for any $\eps>0$, there exists a
    $(1+\eps)$-approximate distance oracle with query time
    $O(\eps^{-1})$, space $O(\eps^{-1} n \log n)$, and preprocessing
    time $O(\eps^{-2} n \log^3 n)$.
\end{theorem}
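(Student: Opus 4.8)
Since this is Thorup's theorem \cite{t-corad-04}, in the paper it is simply invoked as a black box; to sketch a proof I would reconstruct his construction, which rests on two ingredients: recursive \emph{shortest-path separators} and \emph{portals}. First I would compute a single-source shortest-path tree $T$ from an arbitrary root and apply a Lipton--Tarjan-type argument to $\Graph$ together with the spanning tree $T$: some non-tree edge has a fundamental cycle --- built from $O(1)$ shortest paths plus one edge --- whose interior and exterior each carry at most a constant fraction of the faces. Removing the $O(1)$ separator shortest paths $Q_1,\dots,Q_t$ and recursing on the resulting pieces (while retaining boundary vertices and their $\Graph$-distances, so that recorded distances always refer to the original graph) yields a decomposition tree of depth $O(\log n)$ in which every pair $u,v$ is \emph{separated} at some level, i.e.\ every relevant $u$--$v$ path in $\Graph$ meets a separator path of that level.

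Next I would attach portals to vertices. For a vertex $w$ and a separator shortest path $Q$ --- for which $d_Q(p,x)=d(p,x)$ for $p,x\in V(Q)$, where $d_Q$ denotes length along $Q$ --- I would take the vertex of $Q$ nearest to $w$ and then march outward along $Q$ in both directions, keeping a new portal each time the distance to $w$ has grown by a further factor $1+\eps$; a charging argument using the triangle inequality and the fact that $Q$ is geodesic caps the number of portals at $O(1/\eps)$ and guarantees that for every $x\in V(Q)$ some retained portal $p$ satisfies $d(w,p)+d(p,x)\le(1+\eps)\,d(w,x)$. For each vertex I would store, for each of the $O(\log n)$ separator paths lying above it in the decomposition, its $O(1/\eps)$ portals together with their distances --- a total of $O(\eps^{-1}n\log n)$ space --- where the distances come from $(1+\eps)$-approximate single-source shortest-path runs rooted on the separator paths inside each piece; summed over the $O(\log n)$ levels this comes to $O(\eps^{-2}n\log^3 n)$ preprocessing, the extra logarithmic factors accounting for the recursion depth, the per-level shortest-path work, and sorting the portals along each separator.

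For a query $(u,v)$ I would use a lowest-common-ancestor lookup in the decomposition tree to jump directly to the $O(1)$ separator paths that the shortest $u$--$v$ path must cross; on such a path $Q$ we have $d(u,v)=\min_{x\in V(Q)}(d(u,x)+d(x,v))$, which by the portal property is $(1+\eps)$-approximated by $\min(d(u,p_u)+d_Q(p_u,p_v)+d(p_v,v))$ over $p_u\in\mathrm{port}(u,Q)$ and $p_v\in\mathrm{port}(v,Q)$. A naive scan costs $O(\eps^{-2})$, but since both portal lists are sorted along $Q$ and the optimal pairing is monotone, a single merge pass evaluates the estimate in $O(1/\eps)$, giving $O(1/\eps)$ query time overall.

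The hard part is the portal analysis and its interaction with the recursion: one must show not merely that $O(1/\eps)$ portals per vertex per separator preserve all crossing distances within $1+\eps$, but that portals chosen with respect to a coarse separator remain adequate after the graph has been cut into smaller pieces, and that carrying boundary vertices through the recursion keeps all stored distances faithful to $\Graph$. This bookkeeping --- Thorup's layering/``frame'' machinery --- is where the real content lies, whereas the existence of $O(1)$-shortest-path separators and the $O(\log n)$-depth recursion are fairly standard planar-graph technology.
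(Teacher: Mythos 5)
The paper does not prove this theorem; it imports it verbatim from Thorup \cite{t-corad-04} (and points to \cite{kst-mcoad-13} for a more recent treatment), so there is no in-paper argument for me to compare you against. What you have written is a reconstruction of Thorup's construction, and the broad skeleton is right: a balanced recursion via shortest-path separators, $O(1/\eps)$ geometrically spaced portals per vertex on each separator path (charging via the triangle inequality and the fact that the separator is a geodesic), space $O(\eps^{-1} n\log n)$, and a query that relaxes through portals. You also correctly identify that the delicate part is making the portal guarantee survive the recursive cutting, which Thorup handles with his frame machinery.

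Two of your details are shakier than you present them, though you partly flag this yourself. First, the claim that a query can jump to the LCA level and inspect only the $O(1)$ separator paths there requires that shortest paths between two vertices of a piece never leave that piece; Thorup does not maintain exact dense-distance information on boundaries (that would blow up the preprocessing), so some argument is needed for why the piece-local structure still certifies global distances within $1+\eps$ without the error compounding over the $O(\log n)$ recursion depth. Thorup gets around this with a scale-bucketed construction in which each oracle instance only answers queries in a bounded distance range, so the relevant recursion depth per scale is $O(1)$; your sketch omits this scaling entirely, and without it the naive version incurs either a $(1+\eps)^{\Theta(\log n)}$ distortion or a $\log n$ degradation in $\eps$. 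Second, the single-merge-pass trick to reduce the portal-pair minimization from $O(\eps^{-2})$ to $O(\eps^{-1})$ needs the claimed monotonicity of the optimal pairing along $Q$, which is true but should be argued (it follows from $Q$ being a shortest path and the portals being sorted along it). Since the theorem is invoked as a black box in the paper, none of this affects the surrounding results, but as a reconstruction of Thorup your sketch is missing the scale decomposition that actually delivers the $O(1/\eps)$ query and controls the approximation.
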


\newcommand{\decomp}{{\cal C}}%
\newcommand{\HD}{{\cal H}}%
\newcommand{\GInduced}[1]{\Graph[#1]}

The basic idea is now to recursively decompose $\Graph$ using planar
separators. Fortunately, one can do it in such a way, that when
looking on a patch $P$, with $m$ vertices, formed by this recursive
decomposition, the distances between the boundary vertices of $P$ (in
the original graph) are known. The details of how to compute this
decomposition is described by Fakcharoenphol and Rao
\cite{fr-pgnwe-06}, and we recall their result.

\newcommand{\VNode}[1]{C\pth{#1}}

Let $\Graph=(\Vertices, \Edges)$ be a graph.  A \emphi{patch} of a
graph is a subset $C \subseteq \Vertices$, such that the induced
subgraph $\GInduced{C}$ is connected. A vertex $v \in C$ is a
\emphi{boundary} vertex if there exists a vertex
$u \in \Vertices \setminus C$ such that $uv \in \Edges$.  A
\emphi{hierarchical decomposition} $\HD$ of $\Graph$ is a set of
subsets of the vertices of $\Graph$, that can be described via a
binary tree $\Tree$, having patches of $\Graph$ associated with each
node of it. The root $r$ of tree is associated with the whole graph;
that is, $\VNode{r} = \Vertices$. Every node of $u \in \Tree$ has two
children $v_1, v_2$, such that
$\VNode{v_1} \cup \VNode{v_2} = \VNode{u}$, and
$\cardin{\VNode{v_1}}, \cardin{\VNode{v_2}} \leq (2/3)\cardin{
   \VNode{u}}$.
A leaf of this tree is associated with a single vertex of
$\Graph$. Finally, we require that for any patch $C$ in this
decomposition, the set of its boundary vertices $\bdX{C}$, has at most
$O\pth{\bigl.\smash{\sqrt{\cardin{C}}}\, }$ vertices.

For every $C\in \HD$, the (inner) \emphi{distance graph} of $C$,
denoted by $\DGraph{C}$ is a clique over $\bdX{C}$, with
$u,v \in \bdX{C}$ assigned length $d_{\Graph[C]}(u,v)$, i.e.~equal to
the shortest distance of all paths between $u$ and $v$, that are
contained entirely inside $\GInduced{C}$. %
\remove{Similarly, the \emphi{outer distance graph} of $C$, denoted by
   $\DOutGraph{C}$ is the graph over the boundary vertices of $C$,
   with the distance between two vertices $u,v \in \bdX{C}$ being the
   shortest distance between $u$ and $v$ in $\Graph$ that does not use
   an vertex of $C \setminus \bdX{C}$. }%
The \emph{dense distance graph} associated with $\HD$ is the graph
$\Graph' = \bigcup_{C \in \decomp} \DGraph{C}$.

\begin{theorem}[Fakcharoenphol and Rao \cite{fr-pgnwe-06}]%
    \thmlab{FT}%
    Let $\Graph$ be an $n$-vertex undirected planar graph with
    non-negative edge lengths.  Then, one can compute, in
    $O(n \log^3 n)$ time, a hierarchical decomposition $\HD$ of
    $\Graph$, and all the inner and outer distance graphs associated
    with its patches (i.e., $\DGraph{C}$ for all $C \in \HD$).
\end{theorem}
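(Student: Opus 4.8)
The plan is to prove this exactly along the lines of Fakcharoenphol and Rao: first build a balanced recursive \emph{cycle-separator} decomposition of $\Graph$ in which the boundary of every patch lies on only a constant number of faces (``holes''), and then sweep this decomposition tree $\Tree$ from the leaves up, computing each inner distance graph $\DGraph{C}$ from those of the two children of $C$ by a single-source shortest-path computation performed on the union of the two children's dense distance graphs. The entire difficulty lies in making that shortest-path step run in time near-linear in the \emph{number of boundary vertices} rather than in the number of (clique) edges; this is precisely what the Monge structure of boundary-to-boundary distance matrices provides.

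For the decomposition I would repeatedly apply a simple-cycle separator theorem (Miller's, or Gazit--Miller): in $O(\cardin{C})$ time one finds a simple cycle of $O(\sqrt{\cardin{C}})$ vertices whose interior and exterior each contain at most a $2/3$ fraction of $C$. Recursing gives the binary tree $\Tree$ of depth $O(\log n)$ with $\cardin{\VNode{v_1}},\cardin{\VNode{v_2}} \leq (2/3)\cardin{\VNode{u}}$, and a single sweep over one level costs $O(n)$, so building $\HD$ costs $O(n\log n)$ (the $\log^3$ comes later). The subtlety is that naive recursive cutting destroys the ``few faces of boundary'' structure; I would maintain the invariant that $\bdX{C}$ lies on $O(1)$ faces of $\GInduced{C}$ by always choosing the separating cycle so that it crosses at most one existing hole and threading its vertices into the two resulting holes. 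A standard charging argument then shows that the total boundary size of an $m$-vertex patch stays $O(\sqrt m)$ across all levels of the recursion, which is exactly the $O(\sqrt{\cardin{C}})$ boundary bound asserted in the statement.

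Next, to obtain the inner distance graphs, process $\Tree$ bottom-up. A leaf (a single vertex) is trivial. For an internal patch $C$ with children $C_1,C_2$, one has $\bdX{C}\subseteq \bdX{C_1}\cup\bdX{C_2}$ (separator vertices are boundary vertices of both children), and any shortest path inside $\GInduced{C}$ between two vertices of $\bdX{C}$ splits at the separator cycle into subpaths each lying entirely inside $C_1$ or inside $C_2$; hence it is realized by a walk in the graph $\DGraph{C_1}\cup\DGraph{C_2}$, which is a union of two cliques with only $O(\cardin{C})$ edges. So $\DGraph{C}$ is obtained by running single-source shortest paths from each of the $O(\sqrt{\cardin{C}})$ boundary vertices of $C$ in this dense distance graph. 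Ordinary Dijkstra would cost $O(\cardin{C})$ per source and $O(\cardin{C}^{3/2})$ per patch, which is too slow; instead I would use FR-Dijkstra, which exploits the fact that the $\bdX{C_i}$ lie on $O(1)$ faces and that, in cyclic order along a face, boundary-to-boundary distances form a Monge array (shortest paths between vertices of a common face need not cross), so a whole ``staircase'' block of edge relaxations can be done lazily in $O(\log)$ amortized time per extracted vertex. This drives one single-source computation down to $\Otilde(\sqrt{\cardin{C}})$, a patch to $\Otilde(\cardin{C})$, a level to $\Otilde(n)$, and the whole sweep to $\Otilde(n)$; counting the logarithmic factors of FR-Dijkstra gives $O(n\log^3 n)$ overall. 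The outer distance graphs are then computed by a symmetric top-down pass: the outer graph of a child $C_i$ is produced by one more FR-Dijkstra computation on the union of the parent's outer distance graph with the sibling's inner distance graph.

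The main obstacle, and the technically heaviest ingredient, is the FR-Dijkstra step: one must (i) prove the non-crossing/Monge property of the boundary distance matrices and build the ``Monge heap'' data structure that supports bulk-relaxation of a staircase of edges in polylogarithmic amortized time, and (ii) keep the hole structure of the decomposition tight enough that each $\bdX{C}$ is covered by $O(1)$ Monge blocks. Without both, either the per-patch shortest-path cost or the per-patch boundary size blows up and the $O(n\log^3 n)$ bound is lost. Everything else — the separator recursion, the $O(\log n)$ depth, the bottom-up/top-down bookkeeping, the leaf base cases, and the summation of $\Otilde(\cardin{C})$ over all patches — is routine.
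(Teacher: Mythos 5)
This theorem is presented in the paper as a black-box citation of Fakcharoenphol and Rao and is not proved in the text, so there is no in-paper argument to compare your sketch against. Your outline is a faithful reconstruction of the cited proof --- recursive cycle-separator decomposition maintaining the constant-number-of-holes invariant so each $\bdX{C}$ has size $O(\sqrt{\cardin{C}})$, a bottom-up pass using FR-Dijkstra (exploiting the Monge/non-crossing structure of boundary-to-boundary distance matrices along a common hole, decomposed into $O(1)$ Monge blocks) to build the inner distance graphs, and a symmetric top-down pass for the outer ones --- and you correctly identify the Monge-heap bulk-relaxation step and the hole bookkeeping as the two ingredients on which the $O(n\log^3 n)$ bound actually hinges.
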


We are now ready to prove the main result of this section.

\begin{theorem}
    Let $\Graph$ be a given $n$-vertex undirected planar graph with
    non-negative edge lengths, let $r>0$, and $\eps>0$.  Let
    $c = \cardin{\ShortPairs{r}}$ and
    $c'= \cardin{ \ShortPairs{(3+\eps)r} }$, see
    \Eqrefpage{short:pairs}.  Then, one can compute, in
    $O(\eps^{-2} n \log^3 n)$ time, an integer $\alpha$, such that
    $c\leq \alpha \leq c'$.
\end{theorem}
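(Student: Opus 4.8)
The plan is to decompose $\Graph$ recursively with planar separators, charge each pair of vertices to the single patch where it first gets separated, and count it there using the triangle inequality together with an approximate distance oracle — the factor $3+\eps$ being exactly the slack this per‑patch count needs. First I would compute the hierarchical decomposition $\HD$ of $\Graph$ with its tree $\Tree$ via \thmref{FT}, in $O(n\log^3 n)$ time, obtaining also the inner and outer distance graphs of all patches, and build the $(1+\eps')$‑approximate distance oracle of \thmref{oracle} with $\eps'=\Theta(\eps)$ in $O(\eps^{-2}n\log^3 n)$ time. For a patch $C$ with children $C_1,C_2$ write $S=C_1\cap C_2$ (the separator used to split $C$, of size $O(\sqrt{\cardin C})$, which separates $C_1\setminus S$ from $C_2\setminus S$ in $\GInduced C$) and let $G_C=S\cup\bdX C$ be the $O(\sqrt{\cardin C})$‑vertex \emph{gateway set}; every $u$–$v$ path in $\Graph$ with $u\in C_1\setminus S$ and $v\in C_2\setminus S$ must pass through $G_C$. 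For the deepest patch $C(u,v)$ containing a pair $\{u,v\}$ the two endpoints fall in the two different sides $C_1\setminus S$ and $C_2\setminus S$, so each pair is charged to exactly one patch. Using the outer distance graph of $C$ to make $\GInduced C$ distance‑preserving for the vertices of $C$, one multi‑source Dijkstra from $G_C$ yields, for every $u\in C$, its true nearest gateway $g_C(u)$ and distance $\rho_C(u)=\distG{u}{g_C(u)}$; and a batch of $O(\cardin{G_C}^2)=O(\cardin C)$ oracle queries gives $(1+\eps')$‑estimates $\hat d(p,q)$ for all $p,q\in G_C$. At patch $C$ I then count the pairs $\{u,v\}$ with $u\in C_1\setminus S$, $v\in C_2\setminus S$, $\rho_C(u)+\rho_C(v)\le(1+\eps')r$, and $\hat d(g_C(u),g_C(v))\le 2(1+\eps')r$, and output the sum $\alpha$ of these counts over all patches.

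For correctness, suppose $\distG uv\le r$; at its meeting patch $C=C(u,v)$ its shortest path meets $G_C$ in some $\hat g$, so $\rho_C(u)+\rho_C(v)\le\distG{u}{\hat g}+\distG{\hat g}{v}=\distG uv\le r$ and $\distG{g_C(u)}{g_C(v)}\le\rho_C(u)+\distG uv+\rho_C(v)\le 2r$, and since the oracle underestimates by at most a $(1+\eps')$ factor both thresholds are met, so the pair is counted; hence $\alpha\ge c$. Conversely, any counted pair satisfies $\distG uv\le\rho_C(u)+\distG{g_C(u)}{g_C(v)}+\rho_C(v)\le(1+\eps')r+2(1+\eps')r=3(1+\eps')r\le(3+\eps)r$, so nothing outside $\ShortPairs{(3+\eps)r}$ is ever counted; and as each pair is charged to a unique patch there is no double counting. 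Thus $c\le\alpha\le c'$, and $\alpha$ is an integer.

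The remaining — and main — obstacle is to carry out the per‑patch count in near‑linear total time. Grouping the vertices of $C_1\setminus S$ by their gateway into sets $A_p$ (and those of $C_2\setminus S$ into sets $B_q$), each carrying its sorted list of $\rho$‑values, the count at $C$ equals $\sum_{p,q\in G_C \,:\, \hat d(p,q)\le 2(1+\eps')r}\#\{(u,v)\in A_p\times B_q : \rho_C(u)+\rho_C(v)\le(1+\eps')r\}$. For a fixed pair $(p,q)$ this is a one‑dimensional threshold count done in linear time, but iterating over all admissible gateway pairs would rescan the groups $\Theta(\cardin{G_C})$ times, i.e.\ $\Theta(\cardin C^{3/2})$ per patch. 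To beat this I would use that the boundary vertices of a patch of the decomposition lie on a constant number of its faces, so the gateway‑to‑gateway distance matrices are Monge (the non‑crossing property of shortest paths in planar graphs): the admissible gateway pairs then form a monotone staircase, and as $p$ sweeps along one separating structure the merged multiset $\bigcup_{q:\hat d(p,q)\le 2(1+\eps')r}B_q$ undergoes only $O(1)$ amortized insertions and deletions per step, so maintaining it in a balanced search tree keyed by $\rho$ answers all of a patch's one‑dimensional counts in $\Otilde(\cardin C)$ time, hence $\Otilde(n)$ overall (using $\sum_{C\in\HD}\cardin C=O(n\log n)$). Summing the pieces — the Dijkstras $O(n\log^2 n)$, the oracle queries $O(\eps^{-1}n\log n)$, the Monge aggregation $\Otilde(n)$, plus $O(\eps^{-2}n\log^3 n)$ to build the oracle and $O(n\log^3 n)$ for the decomposition — yields the claimed bound, and making the Monge‑based aggregation precise is where I expect essentially all the technical effort to go.
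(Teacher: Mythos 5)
Your high-level plan is essentially the paper's: compute the Fakcharoenphol--Rao hierarchical decomposition, charge each pair $\{u,v\}$ to the deepest patch that separates them, run Dijkstra from the separator/boundary vertices to get gateway assignments, and apply the approximate distance oracle to the $O(|C|)$ gateway pairs. The place you diverge is the per-patch count, and that is where your proposal opens a gap that the paper deliberately avoids.

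You threshold on the \emph{joint} quantity $\rho_C(u)+\rho_C(v)\le(1+\eps')r$. Because this couples $u$ and $v$, the contribution of a gateway pair $(p,q)$ is not a product of two independent counts, and you are forced into a correlated two-dimensional aggregation over gateway pairs. You then reach for a Monge/staircase argument, and you correctly flag it as the crux. But that crux is a genuine gap, not mere bookkeeping: (i) the quantity you sweep over is the oracle estimate $\hat d$, and a $(1+\eps')$-approximate $\hat d$ does not inherit the Monge property of exact planar distances, so the set $\{q : \hat d(p,q)\le 2(1+\eps')r\}$ need not be an interval and the ``$O(1)$ amortized insertions/deletions per step'' claim has no justification; (ii) the gateway set mixes the new separator with inherited ancestor boundary vertices, and it is not established that they sit on $O(1)$ faces in a cyclic order compatible with a single staircase sweep. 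You would either have to replace $\hat d$ by exact boundary-to-boundary distances (which FR does give you, but which then needs its own careful bookkeeping) or find another route; as written, the time bound is unproven.

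The paper sidesteps all of this with a strictly simpler count. It uses two \emph{independent} thresholds: $U_i$ collects the vertices of $C_i$ within distance $r$ of $B=B_1\cup B_2$; each $v$ is assigned to a single nearest gateway $\Gamma_i(v)$; and $U_i(p)$ is the resulting \emph{partition} class (after removing ancestor-boundary vertices to prevent double-counting). The patch count is then $\sum_{\{p,q\}\in T}\bigl(|U_1(p)|\,|U_2(q)|+|U_2(p)|\,|U_1(q)|\bigr)$ — a plain product — costing $O(|T|)=O(\eps^{-1}|B|^2)=O(\eps^{-1}|C|)$ per patch with no 2D aggregation and no Monge machinery, and summing to $O(\eps^{-1}n\log n)$ over all patches. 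All the slack from the triangle inequality is absorbed into the gateway-pair threshold defining $T$, not into a joint $\rho$-constraint.

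To be fair to your worry: you are right that the independent thresholds give up a little in the constant chain (to catch every pair with $d_G(u,v)\le r$ you need $T$ to include gateway pairs out to roughly $2r$, since you can only bound $d_G(\Gamma_1(u),\Gamma_2(v))\le 2\,d_G(u,v)$; tracking this through does not reproduce $3+\eps$ without extra care). But the lesson you should take away is structural, not about constants: by assigning each vertex to a single gateway and thresholding $\rho$ once, the per-patch count becomes a sum of products of sizes of disjoint groups, which is what lets the overall algorithm stay in $O(\eps^{-2}n\log^3 n)$ time. The Monge detour is not needed and, in the form you sketched, does not go through.
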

\begin{proof}
    We compute a hierarchical decomposition of $\HD$, and for every
    $C \in\HD$, the distance graph $\DGraph{C}$, in total time
    $O(n\log^3 n)$, using \thmref{FT}.  We consider all patches
    $C \in \HD$. If $\cardin{C}=1$ then we let $ \alpha_\decomp = 0$.
    Otherwise, our purpose here is to count the number of pairs of
    vertices $u,v \in C$, such that $\distG{u}{v} \leq r$, and $u$ and
    $v$ belong to different children of $C$.

    So, let $C_1,C_2$ be the two patches that are the children of $C$
    in $\HD$.  Let $B_1$, $B_2$ be the set of border vertices of
    $C_1$, $C_2$, respectively, and let $B=B_1\cup B_2$.  Let
    $G_1 = \Graph[C_1] \cup \DGraph{C_2}$, and
    $G_2 = \Graph[C_2] \cup \DGraph{C_1}$.  That
    is, $G_1$ is the union of the subgraph of $\Graph$ induced on the
    cluster $C_1$, and the distance graph inside $C_2$, and the
    distance graph outside $C_1 \cup C_2$.  Note that
    \begin{inparaenum}[(i)]
        \item $V(\Graph_1) = C_1$, 
        \item for any $u,v\in V(G_1)$, we have
        $d_{G_1}(u,v)=d_\Graph(u,v)$, and
        \item
        $\cardin{\EdgesX{\Graph_1}} = O({\cardin{C_1}}) +
        O\pth{\cardin{\bdX{C_2}}^2} + O\pth{\cardin{\bdX{C}}^2} =
        O({\cardin{C_1}}) $.
    \end{inparaenum}
    Therefore, for any $i\in\{1,2\}$, by running Dijkstra's algorithm
    on $G_i$ starting from $B$, we can compute the set of vertices
    \InFullVer{%
    \begin{align*}
        U_i = \Set{v\in C_i}{ d_G(v,B) \leq r},
    \end{align*}
    }
    \InShortVer{%
    \begin{math}
        U_i = \Set{v\in C_i}{ d_G(v,B) \leq r},
    \end{math}
    }
    in time $O(|C_i| \log n)$.  Moreover, for any $i\in\{1,2\}$, for
    any $v\in C_i$, we can compute (within the same time bounds) its
    closest vertex in $B$; specifically, a 
    vertex $\Gamma_i(v)\in B$, with $d_G(v,\Gamma_i(v)) = d_G(v,B)$.
    Let $B'$ be the set of all vertices that are border vertices in
    some ancestor of $\decomp$ in $\HD$.  For any $u\in B$,
    $i\in\{1,2\}$, let
    \InFullVer{%
       \begin{align*}
           U_i(u) = \Set{v\in C_i}{\bigl. \Gamma_i(v) = u } \setminus B'.
       \end{align*}%
    }%
    \InShortVer{%
       \begin{math}
           U_i(u) = \Set{v\in C_i}{\bigl. \Gamma_i(v) = u } \setminus B'.
       \end{math}
    }%
    Using \thmref{oracle} we can find in time $O(\eps^{-1}|B|^2)$ all
    pairs of vertices $u,v\in B$, with $d_G(u,v)\leq (1+\eps)r$.  That
    is, we compute the set of border vertex pairs
    \begin{align*}
        T = \Set{(u,v)\in B \times B }{\bigl. d_G(u,v)\leq (1+\eps)r
        }.
    \end{align*}
    We now explicitly count all the pairs of vertices that are in
    distance at most $r$ from a pair of vertices on the boundary, such
    that these boundary vertices in turn are in distance at most
    $(1+\eps)r$ from each other. That is, we set
    \begin{align*}
        \alpha^{}_{\decomp} = \sum_{\{u,v\}\in
           T}\pth{\Bigl. |U_1(u)|\cdot |U_2(v)| + |U_2(u)|\cdot
           |U_1(v)|}.
    \end{align*}
    Finally, we return the value
    $ \alpha = \sum_{\decomp \in \HD} \alpha_\decomp.  $ Every
    ordered pair $(u,v)\in V(G)\times V(G)$ is counted exactly once in
    the above summation.  Moreover, every pair with $d_G(u,v)\leq r$
    contributes 1, while every pair with $d_G(u,v)>(1+\eps) 3 r$
    contributes 0, implying that $c\leq \alpha \leq c'$, as required.
    
    It remains to bound the running time.  Constructing $\HD$ takes
    time $O(n \log^3 n)$.  The hierarchical decomposition has
    $O(\log n)$ levels.  For every level, we spend a total of
    $O(n\log n)$ time running Dijkstra's algorithm.  We also spend a
    total time of $O(\eps^{-1} n)$ computing the sets $T$.  Finally,
    we spend $O(\eps^{-2} n \log^3 n)$ time at the preprocessing step
    of the distance oracle from \thmref{oracle}.  Therefore, the total
    running time is $O(\eps^{-2} n \log^3 n)$, which concludes the
    proof.
\end{proof}

\section{Conclusions}

We have found efficient approximation algorithms for greedy
permutations in graphs and in high-dimensional Euclidean spaces, and
fast exact algorithms for graphs of bounded treewidth.
This implies $(2+\eps)$-approximate $k$-center clustering of graph
metrics in $O_\eps( m \log^2 n)$ time (ignoring the dependency on
$\eps$), for all values of $k$ simultaneously. For a single value of
$k$, and for graphs whose weights are positive integers, our technique
can be used to construct a $2$-approximation to the $k$-center
clustering in $O\pth{ m \log n \log \SpreadX{ G} }$ expected time
\InFullVer{(\thmref{k:center:intger:spread}).}%
\InShortVer{(Section~2.4 in the full version).} %
This compares favorably with a significantly more complicated
algorithm of Thorup \cite{t-qmcflsg-05} that has running time worse by
at least a factor of $O( \log^3 n)$.

We leave open for future research finding other graphs in which we may
construct exact greedy permutations more quickly than the naive
algorithm. Another direction for research concerns hyperbolic spaces
of bounded dimension. Krauthgamer and Lee~\cite{kl-ancs-06} claim
without details that for all $\eps>0$, sets of $n$ points in
hyperbolic spaces of bounded dimension have $(1+\eps)$-Steiner
spanners with $O(n)$ vertices and edges.  Applying our graph algorithm
to these spanners (modified to avoid including Steiner points in its
$r$-nets) would give a near-linear time approximate greedy permutation
for those spaces as well, but perhaps a more direct and more efficient
algorithm is possible.




\InShortVer{\newpage}

\newcommand{\etalchar}[1]{$^{#1}$}
\providecommand{\CNFX}[1]{
  {\em{\textrm{(#1)}}}}\providecommand{\CNFSODA}{\CNFX{SODA}}\providecommand{\CNFFOCS}{\CNFX{FOCS}}\providecommand{\CNFCCCG}{\CNFX{CCCG}}\providecommand{\CNFSoCG}{\CNFX{SoCG}}\providecommand{\CNFSTOC}{\CNFX{STOC}}
  \providecommand{\CNFPODS}{\CNFX{PODS}}


\begin{thebibliography}{AAYI{\etalchar{+}}13}

\bibitem[AAYI{\etalchar{+}}13]{aaimv-dnpp-13}
S.~Abbar, S.~Amer-Yahia, \href{http://theory.lcs.mit.edu/~indyk/}{P.~{Indyk}}, S.~Mahabadi, and \href{http://www.cs.uiowa.edu/~kvaradar/}{K.~R. {Varadarajan}}.
\newblock {Diverse near neighbor problem}.
\newblock In {\em Proc. 29th Annu. Sympos. Comput. Geom.\CNFSoCG}, pages
  207{--}214, 2013.

\bibitem[AI06]{ai-nohaa-06}
A.~Andoni and \href{http://theory.lcs.mit.edu/~indyk/}{P.~{Indyk}}.
\newblock {Near-optimal hashing algorithms for approximate nearest neighbor in
  high dimensions}.
\newblock In {\em Proc. 47th Annu. IEEE Sympos. Found. Comput. Sci.\CNFFOCS},
  pages 459{--}468, 2006.

\bibitem[AYE12]{aye-inpsf-12}
U.~Altinisik, M.~Yildirim, and K.~Erkan.
\newblock {Isolating non-predefined sensor faults by using farthest first
  traversal algorithm}.
\newblock {\em Ind. Eng. Chem. Res.}, 51(32):10641{--}10648, 2012.

\bibitem[Bod96]{b-ltaftdst-96}
H.~L. Bodlaender.
\newblock {A linear-time algorithm for finding tree-decompositions of small
  treewidth}.
\newblock {\em SIAM Journal on Computing}, 25(6):1305{--}1317, 1996.

\bibitem[CK09]{ck-agbtors-09}
S.~Cabello and C.~Knauer.
\newblock {Algorithms for graphs of bounded treewidth via orthogonal range
  searching}.
\newblock {\em Comput. Geom. Theory Appl.}, 42(9):815{--}824, 2009.

\bibitem[Coh14]{c-adsrh-14}
E.~Cohen.
\newblock All-distances sketches, revisited: {HIP} estimators for massive
  graphs analysis.
\newblock In {\em Proc. 33rd ACM Sympos. Principles Database Syst.\CNFPODS},
  pages 88--99, 2014.

\bibitem[DL05]{dl-pghc-05}
\href{http://www.cs.ucsd.edu/~dasgupta/}{S.~{Dasgupta}} and P.~M. Long.
\newblock {Performance guarantees for hierarchical clustering}.
\newblock {\em J. Comput. System Sci.}, 70(4):555{--}569, 2005.

\bibitem[ELPZ97]{elpz-fpspis-97}
Y.~Eldar, M.~Lindenbaum, M.~Porat, and Y.~Y. Zeevi.
\newblock {The farthest point strategy for progressive image sampling}.
\newblock {\em IEEE Trans. Image Proc.}, 6(9):1305{--}1315, 1997.

\bibitem[Eri95]{e-rcsgp-95}
\href{http://compgeom.cs.uiuc.edu/~jeffe/}{J.~{Erickson}}.
\newblock On the relative complexities of some geometric problems.
\newblock In {\em Proc. 7th Canad. Conf. Comput. Geom.\CNFCCCG}, pages 85--90,
  1995.

\bibitem[FJ83]{fj-fkppc-83}
G.~N. Frederickson and D.~B. Johnson.
\newblock Finding $k$-th paths and $p$-centers by generating and searching good
  data structures.
\newblock {\em J. Algorithms}, 4(1):61--80, 1983.

\bibitem[FJ84]{fj-gsrsm-84}
G.~N. Frederickson and D.~B. Johnson.
\newblock Generalized selection and ranking: {Sorted matrices}.
\newblock {\em SIAM Journal on Computing}, 13:14--30, 1984.

\bibitem[FR06]{fr-pgnwe-06}
J.~Fakcharoenphol and S.~Rao.
\newblock Xxxplanar graphs, negative weight edges, shortest paths, and near
  linear time.
\newblock {\em J. Comput. Sys. Sci.}, 72(5):868--889, 2006.

\bibitem[Fre97]{f-adsdecksst-97}
G.~N. Frederickson.
\newblock {Ambivalent data structures for dynamic $2$-edge-connectivity and $k$
  smallest spanning trees}.
\newblock {\em SIAM Journal on Computing}, 26(2):484{--}538, 1997.

\bibitem[GBT84]{gbt-srtgp-84}
H.~N. Gabow, J.~L. Bentley, and R.~E. Tarjan.
\newblock {Scaling and related techniques for geometry problems}.
\newblock In {\em Proc. 16th Annu. ACM Sympos. Theory Comput.\CNFSTOC}, pages
  135{--}143, 1984.

\bibitem[GGD12]{ggd-aaueotm-12}
Y.~Girdhar, P.~Gigu{\`e}re, and G.~Dudek.
\newblock {Autonomous adaptive underwater exploration using online topic
  modelling}.
\newblock In {\em Proc. Int. Symp. Experimental Robotics}, 2012.

\bibitem[GJG11]{gjg-kacdmd-11}
C.~Gu, X.~Jiang, and L.~Guibas.
\newblock {Kinetically-aware conformational distances in molecular dynamics}.
\newblock In {\em Proc. 23rd Canad. Conf. Comput. Geom.\CNFCCCG}, pages
  217{--}222, 2011.

\bibitem[Gon85]{g-cmmid-85}
T.~F. Gonzalez.
\newblock {Clustering to minimize the maximum intercluster distance}.
\newblock {\em Theoret. Comput. Sci.}, 38(2-3):293{--}306, 1985.

\bibitem[Har04]{h-cm-04}
\href{http://sarielhp.org}{S.~{{Har-Peled}}}.
\newblock {Clustering motion}.
\newblock {\em \href{http://link.springer.com/journal/454}{Discrete Comput. {}Geom.}}, 31(4):545{--}565, 2004.

\bibitem[HIS13]{his-eshd-13}
\href{http://sarielhp.org}{S.~{{Har-Peled}}}, \href{http://theory.lcs.mit.edu/~indyk/}{P.~{Indyk}}, and A.~Sidiropoulos.
\newblock Euclidean spanners in high dimensions.
\newblock In {\em Proc. 24rd ACM-SIAM Sympos. Discrete Algs.\CNFSODA}, pages
  804--809, 2013.

\bibitem[HM06]{hm-fcnld-06}
\href{http://sarielhp.org}{S.~{{Har-Peled}}} and M.~Mendel.
\newblock {Fast construction of nets in low-dimensional metrics, and their
  applications}.
\newblock {\em SIAM Journal on Computing}, 35(5):1148{--}1184, 2006.

\bibitem[HR13]{hr-npltaedp-13}
\href{http://sarielhp.org}{S.~{{Har-Peled}}} and B.~Raichel.
\newblock {Net and prune: A linear time algorithm for Euclidean distance
  problems}.
\newblock In {\em Proc. 45th Annu. ACM Sympos. Theory Comput.\CNFSTOC}, pages
  605{--}614, 2013.

\bibitem[HS85]{hs-mor-85}
D.~S. Hochbaum and D.~B. Shmoys.
\newblock {A best possible heuristic for the $k$-center problem}.
\newblock {\em Math. Oper. Res.}, 10(2):180{--}184, 1985.

\bibitem[JL84]{jl-elmih-84}
W.~B. Johnson and J.~Lindenstrauss.
\newblock {Extensions of Lipschitz mappings into a Hilbert space}.
\newblock In {\em Conf. Modern Analysis and Probability}, volume~26 of {\em
  Contemporary Mathematics}, pages 189{--}206. Amer. Math. Soc., 1984.

\bibitem[KL06]{kl-ancs-06}
R.~Krauthgamer and J.~R. Lee.
\newblock {Algorithms on negatively curved spaces}.
\newblock In {\em Proc. 47th Annu. IEEE Sympos. Found. Comput. Sci.\CNFFOCS},
  pages 119{--}132, 2006.

\bibitem[KST13]{kst-mcoad-13}
K.~Kawarabayashi, C.~Sommer, and M.~Thorup.
\newblock More compact oracles for approximate distances in undirected planar
  graphs.
\newblock In {\em Proc. 24rd ACM-SIAM Sympos. Discrete Algs.\CNFSODA}, pages
  550--563, 2013.

\bibitem[LF09]{lf-mvsc-09}
Y.~Lipman and T.~Funkhouser.
\newblock {M{\"o}bius voting for surface correspondence}.
\newblock In {\em Proc. ACM SIGGRAPH}, pages 72:1{--}72:12, 2009.

\bibitem[MAB98]{mab-aca-98}
E.~Mazer, J.~M. Ahuactzin, and P.~Bessiere.
\newblock {The Ariadne's clew algorithm}.
\newblock {\em J. Art. Intell. Res.}, 9:295{--}316, 1998.

\bibitem[Mat99]{m-gd-99}
\href{http://kam.mff.cuni.cz/~matousek}{J. Matou{\v s}ek}.
\newblock {\em Geometric Discrepancy}.
\newblock Springer, 1999.

\bibitem[MD03]{md-npcsa-03}
C.~Moenning and N.~A. Dodgson.
\newblock {A new point cloud simplification algorithm}.
\newblock In {\em 3rd IASTED Int. Conf. Visualization, Imaging, and Image
  Processing}, 2003.

\bibitem[MS09]{ms-fckrp-09}
M.~Mendel and C.~Schwob.
\newblock {Fast C-K-R partitions of sparse graphs}.
\newblock {\em Chicago J. Theor. Comput. Sci.}, 2, 2009.

\bibitem[MTZC81]{mtzc-aklpt-81}
N.~Megiddo, A.~Tamir, E.~Zemel, and R.~Chandrasekaran.
\newblock An $o(n log^2 n)$ algorithm for the $k$-th longest path in a tree
  with applications to location problems.
\newblock {\em SIAM Journal on Computing}, 10(2):328--337, 1981.

\bibitem[PA95]{pa-cg-95}
\href{http://www.math.nyu.edu/~pach}{J.~{Pach}} and \href{http://www.cs.duke.edu/~pankaj}{P.~K.~{Agarwal}}.
\newblock {\em Combinatorial Geometry}.
\newblock John Wiley \& Sons, 1995.

\bibitem[RSL77]{rsl-ashtsp-77}
D.~J. Rosenkrantz, R.~E. Stearns, and P.~M. Lewis, II.
\newblock {An analysis of several heuristics for the traveling salesman
  problem}.
\newblock {\em SIAM Journal on Computing}, 6(3):563{--}581, 1977.

\bibitem[SMR04]{smr-dfpmih-04}
R.~Shahidi, C.~Moloney, and G.~Ramponi.
\newblock {Design of farthest-point masks for image halftoning}.
\newblock {\em EURASIP J. Applied Signal Proc.}, 12:1886{--}1898, 2004.

\bibitem[Tho04]{t-corad-04}
M.~Thorup.
\newblock Compact oracles for reachability and approximate distances in planar
  digraphs.
\newblock {\em \href{http://www.acm.org/jacm/}{J. Assoc. Comput. {Mach.}}}, 51(6):993--1024, 2004.

\bibitem[Tho05]{t-qmcflsg-05}
M.~Thorup.
\newblock {Quick $k$-median, $k$-center, and facility location for sparse
  graphs}.
\newblock {\em SIAM Journal on Computing}, 34(2):405{--}432, 2005.

\bibitem[VBR{\etalchar{+}}10]{vbrtx-ecldi-10}
K.~Voevodski, M.-F. Balcan, H.~Roglin, S.-H. Teng, and Y.~Xia.
\newblock {Efficient clustering with limited distance information}.
\newblock In {\em Proc. 26th Conf. Uncertainty in AI}, pages 632{--}641, 2010.

\bibitem[Xia97]{x-ciqmmid-97}
Z.~Xiang.
\newblock {Color image quantization by minimizing the maximum intercluster
  distance}.
\newblock {\em ACM Trans. Graph.}, 16(3):260{--}276, 1997.

\end{thebibliography}

\end{document}